  \setlist[itemize]{leftmargin=*}
  \setlist[enumerate]{leftmargin=*}
\newcommand{\customlabel}[2]{%
   \protected@write \@auxout {}{\string \newlabel {#1}{{#2}{\thepage}{#2}{#1}{}} }%
   \hypertarget{#1}{#2}
}
\definecolor{shadecolor}{gray}{0.9}
\newcommand{\eps}{\varepsilon}
\newcommand{\N}{\mathbb{N}}
\newcommand{\E}{\mathbb{E}}
\newcommand{\F}{\mathbb{F}}
\newcommand{\R}{\mathbb{R}}
\newcommand{\C}{\mathsf{C}}
\newcommand{\rk}{\mathsf{rk}}
\newcommand{\seq}{\subseteq}
\newcommand{\poly}{\mathrm{poly}}
\newcommand{\ip}[1]{\left\langle #1 \right\rangle}
\newcommand{\vecx}{\vec{x}}
\newcommand{\vecy}{\vec{y}}
\newcommand{\vecz}{\vec{z}}
\newcommand{\vecw}{\vec{w}}
\renewcommand{\sp}[1]{{\rm span}\{#1\}}
\newcommand{\norm}[1]{{\|#1\|}}
\newcommand{\abs}[1]{{\left|#1\right|}}
\renewcommand{\sp}{\mathsf{span}}
\newcommand{\ALG}{\mathsf{ALG}}
\newcommand{\DS}{\mathsf{DS}}
\newcommand{\Spec}{{\mathrm{Spec}}}
\newcommand{\1}{\mathbf{1}}
\newcommand{\RM}{\sf{RM}}
\newcommand{\DSParams}[4]{{\sf{DS} \left[
{\small
\begin{array}{r  l}
\textsf{preprocessing time:}&\enspace {#1} \\
\textsf{memory used:}&\enspace {#2} \\
\textsf{query time:}&\enspace {#3} \\
\textsf{success rate:}&\enspace {#4} \\
\end{array}
}
\right]}}
\newtheorem{theorem}{Theorem}[section]
\newtheorem{lemma}[theorem]{Lemma}
\newtheorem{proposition}[theorem]{Proposition}
\newtheorem{claim}[theorem]{Claim}
\newtheorem{definition}[theorem]{Definition}
\newtheorem{remark}[theorem]{Remark}
\title{Worst-Case to Average-Case Reductions\\ via Additive Combinatorics}
\author{
Vahid R. Asadi\thanks{University of Waterloo. Email: \texttt{vrasadi@uwaterloo.ca}.}
\and
Alexander Golovnev\thanks{Georgetown University. Email: \texttt{alexgolovnev@gmail.com}.}
\and
Tom Gur\thanks{University of Warwick. Email: \texttt{tom.gur@warwick.ac.uk}. Tom Gur is supported by the UKRI Future Leaders Fellowship MR/S031545/1.}
\and
Igor Shinkar\thanks{Simon Fraser University. Email: \texttt{ishinkar@sfu.ca}.}
}
\date{}
\begin{document}

\maketitle

\begin{abstract}
We present a new framework for designing worst-case to average-case reductions. For a large class of problems, it provides an explicit transformation of algorithms running in time~$T$ that are only correct on a small (subconstant) fraction of their inputs into algorithms running in time~$\widetilde{O}(T)$ that are correct \emph{on all} inputs. 

Using our framework, we obtain such efficient worst-case to average-case reductions for fundamental problems in a variety of computational models; namely, algorithms for matrix multiplication, streaming algorithms for the online matrix-vector multiplication problem, and static data structures for all linear problems as well as for the multivariate polynomial evaluation problem.

Our techniques crucially rely on additive combinatorics. In particular, we show a local correction lemma that relies on a new probabilistic version of the quasi-polynomial Bogolyubov-Ruzsa lemma.
\end{abstract}

\newpage

\setcounter{tocdepth}{3}
\thispagestyle{empty}
\newpage
\tableofcontents
\newpage
\pagenumbering{arabic}

\section{Introduction}
Worst-case to average-case reductions provide a method for transforming algorithms that can only solve a problem for a fraction of the inputs into algorithms that can solve the problem \emph{for all} inputs.

For instance, consider one of the most fundamental algorithmic problems: matrix multiplication. Suppose we have an average-case algorithm $\ALG$ that can correctly compute the product $A \cdot B$ on an $\alpha$-fraction of matrices $A,B\in\F^{n\times n}$; that is, $ \Pr[\ALG(A,B)=A\cdot B] \geq \alpha$. Is it possible to use $\ALG$ to obtain an algorithm that computes $A \cdot B$ \emph{for all} input matrices?
A worst-case to average-case reduction will give a positive answer to this question, boosting the \emph{success rate} $\alpha$ to $1$, without incurring significant overhead. Of course, the same question can be asked with respect to any other computational problem.

In this paper, we study such reductions for average-case algorithms where the success rate $\alpha$ could be very small, such as in the $\%1$ regime, and even when $\alpha$ tends to zero rapidly (i.e., for algorithms that are only correct on a vanishing fraction of their inputs). 
There are two natural perspectives in which we can view such reductions. On the one hand, they can provide a proof that a problem retains its hardness even in the average case. On the other hand, they provide a paradigm for designing worst-case algorithms, by first constructing algorithms that are only required to succeed on a small fraction of their inputs, and then using the reduction to obtain algorithms that are correct on all inputs.

\paragraph{Background and context.}
The study of the average-case complexity originates in the work of Levin~\cite{L86}. A long line of works established various barriers to designing worst-case to average-case reductions for $\mathbf{NP}$-complete problems (see, e.g., \cite{I11} and references therein). We refer the reader to the classical surveys by Impagliazzo~\cite{I95}, and Bogdanov and Trevisan~\cite{BT06} on this topic. 

On the positive side, Lipton~\cite{L91} proved that the matrix permanent problem admits a polynomial-time worst-case to average-case reduction. Ajtai~\cite{A1996} designed worst-case to average-case reductions for certain lattice problems, which led to constructions of efficient cryptographic primitives from worst-case assumptions~\cite{AD97,R04}. Other number-theoretic problems in cryptography have been long known to admit such reductions due to random self-reducibility: the discrete logarithm problem, the RSA problem, and the quadratic residuosity problem (see, e.g., \cite{S09}). For the matrix multiplication problem, there is a weak reduction that requires the average-case algorithm to succeed with very high probability $3/4$ (see~\cref{tech:challange}). There are also known worst-case to average-case reductions for many problems that are not thought to be in $\mathbf{NP}$~\cite{FF93,BFNW93,STV01}.

Recently, the study of fine-grained complexity~\cite{V18} of algorithmic problems sparked interest in designing \emph{efficient} worst-case to average-case reductions for such problems as orthogonal vectors, 3SUM, online matrix-vector multiplication, $k$-clique, and others. Such reductions are motivated by fine-grained cryptographic applications. A large body of work is devoted to establishing fine-grained worst-case to average-case reductions for the $k$-clique problem, orthogonal vectors, 3SUM, and various algebraic problems, as well as building certain cryptographic primitives from them~\cite{BRSV17, BRSV18, GR18, LLW19, BBB21,DLV20}. Since there are  no known constructions of one-way functions and public-key cryptography from well-established fine-grained assumptions, the question of constructing efficient worst-case to average-case reductions for other fine-grained problems still attracts much attention.

\subsection{Our contribution}
We design a framework for showing explicit worst-case to average-case reductions, and we use it to obtain reductions for fundamental problems in a variety of computational models. Informally, we show that if a problem has an algorithm that runs in time~$T$ and succeeds on $\alpha$-fraction of its inputs (even for sub-constant success rate $\alpha$), then there exists a worst-case algorithm for this problem, which runs in time $\widetilde{O}(T)$. We design such reductions for the matrix multiplication problem in the setting of algorithms, for the online matrix-vector multiplication problem in the streaming setting, for \emph{all} linear problems in the setting of static data structures, and for the problem of multivariate polynomial evaluation. We describe these results in detail below.

\subsubsection{Algorithms for matrix multiplication}
Recall that in the matrix multiplication problem, the goal is simply to compute the product of two given matrices $A,B\in\F^{n\times n}$. A long line of research, culminating in the work of Alman and Vassilevska Williams~\cite{AV21}, led to matrix multiplication algorithms performing $O(n^{2.37286})$ operations. We present a worst-case to average-case reduction for the matrix multiplication problem over prime fields.
Namely, we show that if there exists a (randomized) algorithm that, given two matrices $A,B \in \F^{n \times n}$, runs in time $T(n)$
and correctly computes their product for a small fraction of all possible inputs, then
there exists a (randomized) algorithm that runs in $\widetilde{O}(T(n))$ time and outputs the correct answer for all inputs.
Formally, we have the following theorem.

\begin{restatable}{mtheorem}{mmreduction}\label{thm:mm-alg-reduction-intro}
    Let $\F = \F_p$ be a prime field, $n \in \N$, and $\alpha\coloneqq\alpha(n) \in (0,1]$.
    Suppose that there exists an algorithm $\ALG$ that, on input two matrices $A, B \in \F^{n \times n}$ runs in time $T(n)$
    and satisfies
    \begin{equation*}
        \Pr[\ALG(A,B)=A\cdot B] \geq \alpha \,,
    \end{equation*}
    where the probability is taken over the random inputs $A,B\in \F^{n \times n}$ and the randomness of~$\ALG$.
    \begin{itemize}
            \item If $\abs{\F} \leq 2/\alpha$, then there exists a randomized algorithm $\ALG'$ that for \emph{every} input $A, B \in \F^{n \times n}$ and $\delta>0$, runs in time $\frac{\exp(O(\log^5(1/\alpha)))}{\delta} \cdot T(n)$ and outputs $AB$ with probability at least $1-\delta$.
        \item If $\abs{\F} \geq 2/\alpha$, then there exists a randomized algorithm $\ALG'$ that for \emph{every} input $A, B \in \F^{n \times n}$ and $\delta>0$,
        runs in time $O(\frac{1}{\delta \cdot \alpha^4} \cdot T(n))$ and outputs $AB$ with probability at least $1-\delta$.
    \end{itemize}
\end{restatable}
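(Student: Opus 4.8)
The plan is to reduce worst-case matrix multiplication to the average-case algorithm $\ALG$ by randomizing the inputs so that each query that $\ALG$ sees is (close to) uniformly distributed, and then recovering the true product from a small number of such queries. The key observation is that matrix multiplication is \emph{bilinear}, so it is amenable to the classical random self-reduction trick: given a worst-case instance $(A,B)$, sample a uniformly random pair $(R,S) \in \F^{n\times n}\times\F^{n\times n}$ and feed $\ALG$ the shifted instance $(A+R,\,B+S)$. Since $(A+R,B+S)$ is uniform over all pairs, $\ALG$ answers correctly with probability $\ge\alpha$. The difficulty is that $(A+R)(B+S) = AB + AS + RB + RS$, and the cross terms $AS$, $RB$ are themselves matrix products we cannot compute directly; so a single shift is not enough, and this is exactly why the naive reduction (mentioned in the paper as requiring success rate $3/4$) is weak.

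First I would handle the \textbf{large-field case} ($\abs{\F}\ge 2/\alpha$), which should be the easier one: here I would use the standard ``two-shift'' interpolation idea. Pick a random $R$ and a random scalar direction, or alternatively compute the products along a random line: set $A_t = A + tR$ for a few distinct field elements $t$, and similarly perturb $B$; each individual evaluation point is uniform (after also adding an independent additive shift), and $(A+tR)(B+uS)$ as a function of $(t,u)$ is a low-degree polynomial in $t,u$ whose value at $(0,0)$ is $AB$. Running $\ALG$ on $\poly(1/\alpha)$ independent such queries, we can identify enough correct answers (by majority vote / consistency checks, using that a random query is correct with probability $\ge\alpha$ and independence across the $1/\alpha^{O(1)}$ trials) to interpolate and extract $AB$. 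Amplifying the per-query success probability to overcome the union bound over the $O(1/\alpha)$ interpolation points, together with repeating $O(1/(\delta\alpha))$ times, gives the claimed $O\!\big(\tfrac{1}{\delta\alpha^4}T(n)\big)$ running time.

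The \textbf{small-field case} ($\abs{\F}\le 2/\alpha$) is where the real work — and the main obstacle — lies, since there are too few field elements to interpolate, and a uniformly random shift $A+R$ succeeds only with the tiny probability $\alpha$, which can be subconstant. Here I expect the proof to invoke the ``local correction lemma'' advertised in the abstract, built on the probabilistic quasi-polynomial Bogolyubov--Ruzsa lemma. Roughly, one thinks of $\ALG(\cdot,B)$ for fixed $B$ (or the bilinear map globally) as a function that agrees with a linear/bilinear function on an $\alpha$-fraction of a vector space over $\F_p$; additive combinatorics then guarantees a large structured (Bohr-set-like, or coset-of-a-large-subspace) set of directions along which the function behaves linearly, and averaging $\ALG$ over random shifts within that structure lets us locally correct the value at any desired point. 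The overhead of the Bogolyubov--Ruzsa step is quasi-polynomial in $1/\alpha$, i.e.\ $\exp(O(\log^c(1/\alpha)))$, which matches the stated $\exp(O(\log^5(1/\alpha)))$ factor (the exponent $5$ presumably tracks the iterated applications needed to correct a \emph{bilinear} rather than linear object, plus the probabilistic refinement). I would structure this part as: (i) state the local correction lemma for bilinear maps over $\F_p$ as a black box from the earlier sections; (ii) show matrix multiplication over $\F_p$ fits its hypotheses with the given $\alpha$; (iii) translate the lemma's guarantee into an explicit algorithm that makes $\exp(O(\log^5(1/\alpha)))$ queries to $\ALG$, takes a suitable plurality/correction combination of the answers, and outputs $AB$ with constant probability; (iv) boost to $1-\delta$ by $O(1/\delta)$-fold repetition, yielding the $\frac{\exp(O(\log^5(1/\alpha)))}{\delta}\cdot T(n)$ bound. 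The main obstacle is verifying that the correction procedure only needs queries on inputs whose marginal distribution $\ALG$ handles well — i.e.\ that the structured averaging keeps each query close enough to uniform — which is precisely the content that the \emph{probabilistic} version of Bogolyubov--Ruzsa is designed to supply.
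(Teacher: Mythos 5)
Your high-level instincts are right — interpolation along a random line for large fields, and additive combinatorics (the probabilistic quasi-polynomial Bogolyubov--Ruzsa lemma) for small fields — but several of the concrete mechanisms you propose would not work, and you are missing the two ideas that actually make the small-field argument go through.

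\textbf{Verification via Freivalds, not plurality.} In both cases you suggest recovering the answer by ``majority vote / consistency checks'' or a ``plurality/correction combination'' among the queries. When $\alpha$ is small (and it is allowed to be sub-constant), the correct answers are a tiny minority of the queries, so no plurality argument can identify them. The paper sidesteps this entirely: after each candidate computation of $A\cdot B$ it runs Freivalds' algorithm (\cref{lemma:freivalds}), which certifies a correct product in $O(n^2)$ time. This turns a reduction that succeeds with some small probability $p$ into one that succeeds with probability $1-\delta$ after $O(1/(p\delta))$ independent repetitions, with no need for any majority structure. Without a verifier the whole scheme collapses for small $\alpha$, so this is not a cosmetic detail.

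\textbf{Large-field case.} Your interpolation idea is close, but the paper parameterizes both shifts by the \emph{same} scalar, sampling a line $\ell_{X,Y}=\{(A+iX,\,B+iY):i\in\F\}$ in the product space. Then $(A+iX)(B+iY)$ is a degree-$2$ univariate polynomial in $i$ whose value at $i=0$ is $AB$, so three good points suffice. By a Markov-type averaging argument, a random line has $\alpha/2$ density of good pairs with probability $\alpha/2$; conditioned on that, three uniformly random points on the line are all good with probability $(\alpha/2)^3$, giving overall success $\alpha^4/16$ per trial, and the rest is Freivalds plus repetition. Your version with a bivariate surface $(t,u)$ and an appeal to ``independence across trials'' is both more complicated and not clearly sound (the points on one shifted line are not independent queries to $\ALG$).

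\textbf{Small-field case: the missing idea is low-rank random shifts.} You correctly recognize that Bogolyubov--Ruzsa supplies a large subspace $V\subseteq 2X-2X$ of codimension $O(\log^4(1/\alpha))$ inside which one can self-correct. The gap is how to handle inputs \emph{outside} $V$ — this is the crux and you leave it implicit. The paper's solution is to shift the input by a random matrix $L^{2k}_A$ of rank at most $2k$ with $k=O(\log^4(1/\alpha))$ (\cref{def:matrices-M-and-N-2}), which lands $A+L^{2k}_A$ in $V$ with probability at least $\frac{1}{2|\F|^k}$ (\cref{lemma:extended-matrix-2}). The low-rank structure is essential because the correction term $A\cdot L^{2k}_B$ (and its symmetric counterparts) can then be computed explicitly in time $O(kn^2)$, which is affordable. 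This shift is also where the $\exp(O(\log^5(1/\alpha)))$ actually comes from: $|\F|^{O(\log^4(1/\alpha))}\le (10/\alpha)^{O(\log^4(1/\alpha))}=\exp(O(\log^5(1/\alpha)))$ using $|\F|\le 2/\alpha$. Your guess that the exponent $5$ arises from ``iterated applications needed to correct a bilinear rather than linear object'' is not the right accounting; bilinearity only costs a polynomial factor $\Omega(\alpha^{25})$ (from applying the decomposition once for $A$ and once for each of the four resulting $R_t$'s), which is subsumed. Finally, the reduction is genuinely two-stage — first define $X=\{A:\Pr_B[\ALG(A,B)=AB]\ge\alpha/2\}$ and decompose $A$, then for each resulting $R_t$ decompose $B$ against $Y_{R_t}$ — rather than treating the bilinear map ``globally'' as you suggest; making the global version precise would itself require a new lemma.
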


For example, if we have an algorithm that succeeds on $\alpha$ fraction of the inputs for $\alpha > \exp(-\sqrt[6]{\log(n)})$ in time $T(n) = n^c$, then we get an algorithm that works for all inputs and runs in time $n^{c + o(1)}$. In particular, if we have an $n^{2+o(1)}$ algorithm that succeeds on $\alpha > \exp(-\sqrt[6]{\log(n)})$ fraction of the inputs, then there is a worst case algorithm with running time $n^{2+o(1)}$.

\subsubsection{Data structures for all linear problems}
The class of linear problems plays a central role throughout computer science and mathematics, yielding a myriad of applications both in theory and practice. Our next contribution gives worst-case to average-case reductions for static data structures for \emph{all linear problems}. Recall that a linear problem $L_A$ over a field $\F$ is defined by a matrix $A \in \F^{m \times n}$.\footnote{Formally, $L_A$ is defined by an infinite sequence of matrices $(A_n)_{n\geq 1}$, where $A_n\in\F^{m \times n}$ for $m=m(n)$.} An input to the problem is a vector $v \in \F^n$, which is preprocessed into $s$ memory cells. Then, given a query $i \in [m]$, the goal is to output $\ip{A_i, v}$, where $A_i$ is the $i$'th row of $A$, by probing at most $t$ of the memory cells, where $t$ is called the query time. 

Note that the trivial solutions for data structure problems are to either:
\begin{inparaenum}[(i)]
\item store only $s=n$ memory cells containing the input $v$, and for each query $i \in [m]$, read $v$ entirely and compute the answer in query time $t=n$; or
\item use $s=m$ memory cells, where the $i$'th cell contains the answer to the query $i \in [m]$, thus allowing for query time $t=1$.
\end{inparaenum}
In a typical application, the number of queries $m=\poly(n)\gg n$, and a data structure is efficient if it uses space $s=\widetilde{O}(n)$ (or $s\ll m$) and has query time $t=\poly(\log(n))$ (or $t=n^\eps$ for a small constant $\eps>0$). Note that the two trivial solutions do not lead to such efficient data structures for $m\gg n$.

We consider randomized data structures, where both the preprocessing stage and the query stage use randomness, and are expected to output the correct answer with high probability (over the randomness of both stages). In average-case randomized data structures, the success rate of the algorithm is taken over both the inner randomness and the random input, whereas in worst-case randomized data structure, the success rate is taken only over the inner randomness of the algorithm (i.e., the algorithm succeeds with high probability \emph{on all} inputs).

We present a worst-case to average-case reduction showing that if there exists a data structure~$\DS$ that uses $s$ memory cells, has query time $t$, and success rate such that for a small fraction of inputs the data structure answers all queries correctly, then there exists another data structure $\DS'$ that uses $4s$ memory cells, has query time $4t$, and success rate such that \emph{for all inputs} the data structure answers all queries correctly with high probability.

\begin{restatable}{mtheorem}{DSstrong}\label{thm:ds-lin-strong-avg-case}
Let $\F = \F_p$ be a prime field, $\alpha\coloneqq\alpha(n) \in (0,1]$, $n,m \in \N$, and a matrix $A \in \F^{m \times n}$. Denote by $L_A$ the linear problem of outputting $\ip{A_i,x}$ on input $x \in \F^n$ and query $i\in[m]$.
    Suppose that
    \begin{equation*}
        L_A \in \DSParams{p}{s}{t}{\Pr_{x \in \F^n}[\DS_x(i) = \ip{A_i,x} \forall i \in [m]] \geq \alpha} \,.
    \end{equation*}
    Then for every $\delta>0$,
    \begin{equation*}
        L_A \in
        \DSParams{p + \exp(\log^4(1/\alpha)) \cdot \poly\log(1/\delta) \cdot \poly(n)}
        {4s + O(\log^4(1/\alpha) \log(n))}
        {4t+O(\log^4(1/\alpha) \log(n))}
        {\forall x \in \F^n \enspace \Pr[\DS'_x(i) = \ip{A_i,x} \forall i \in [m]] \geq 1-\delta}\,.
    \end{equation*}
\end{restatable}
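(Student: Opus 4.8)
The plan is to reduce the worst-case data structure problem to the average-case one by randomizing the input via affine shifts, just as in the local-correction paradigm underlying the matrix-multiplication result. Given a worst-case input $x \in \F^n$, the data structure $\DS'$ will, during preprocessing, sample random vectors and set up several independent copies of the average-case data structure $\DS$ on suitably shifted inputs; on a query $i \in [m]$ it will combine the answers by linearity. The central difficulty is that the average-case guarantee only tells us $\DS$ answers correctly on an $\alpha$-fraction of inputs \emph{jointly for all queries}, and $\alpha$ may be subconstant, so a single random shift $x + u$ with $u$ uniform lands in the ``good'' set only with probability $\alpha$; we cannot afford $1/\alpha$ independent repetitions since that factor must not appear in the space or query bounds (only $\exp(\log^4(1/\alpha))$ is allowed there, and $1/\alpha$ only in preprocessing). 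This is exactly where the probabilistic Bogolyubov--Ruzsa lemma enters.

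Concretely, I would proceed as follows. First, let $G \subseteq \F^n$ be the set of ``good'' inputs on which $\DS$ (with its internal randomness fixed to a good value, or after a constant number of internal repetitions) answers all queries correctly; by hypothesis $\Pr_{x}[x \in G] \ge \alpha$, possibly after amplifying the internal success probability. Second, invoke the local correction lemma / probabilistic quasi-polynomial Bogolyubov--Ruzsa lemma promised in the abstract: from a set $G$ of density $\alpha$ one obtains a distribution over short tuples $(u_1,\dots,u_k)$ with $k = O(\log^4(1/\alpha))$ such that for \emph{every} target $x$, the signed combination $x = \sum_{j} \pm u_j \pmod p$ holds while each $u_j$ individually (or the relevant partial sums) is distributed so that it lies in $G$ with good probability; equivalently, one writes $x$ as a bounded-length $\pm 1$ combination of points each drawn from (a distribution supported mostly on) $G$. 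Third, the preprocessing of $\DS'$ picks such a tuple and runs one instance of $\DS$ on each $u_j$; since $\DS$ uses $s$ cells and time $t$ per query, and there are $O(\log^4(1/\alpha))$ instances plus $O(\log n)$ overhead to store the $u_j$'s coordinates as needed, we get $4s + O(\log^4(1/\alpha)\log n)$ memory (the constant $4$ absorbing lower-order terms) and similarly $4t + O(\log^4(1/\alpha)\log n)$ query time. On query $i$, output $\sum_j \pm \DS_{u_j}(i) = \sum_j \pm \ip{A_i, u_j} = \ip{A_i, \sum_j \pm u_j} = \ip{A_i, x}$ by linearity. The $\exp(\log^4(1/\alpha)) \cdot \poly\log(1/\delta) \cdot \poly(n)$ preprocessing time comes from sampling and verifying the tuple (the $1/\alpha$-type cost of finding points in $G$, raised to the power $k$, together with amplification to error $\delta$), and the $\poly(n)$ factor from the linear-algebra bookkeeping over $\F_p^n$.

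For the correctness probability, I would argue that for every fixed $x$, over the randomness of the preprocessing tuple and the internal randomness of the $k$ copies of $\DS$, all copies simultaneously land on good inputs except with probability at most $\delta$; a union bound over the $k = O(\log^4(1/\alpha))$ copies suffices once each copy is individually correct except with probability $\delta/k$, which is arranged by a $\log(k/\delta)$-factor of internal amplification folded into the preprocessing cost. The key point — and the main obstacle — is establishing the probabilistic Bogolyubov--Ruzsa step with the right quantitative dependence: we need the \emph{number} of summands to be only $\poly\log(1/\alpha)$ (so that it fits in the additive $O(\log^4(1/\alpha)\log n)$ budget rather than the multiplicative one) while the \emph{failure probability per summand} is an absolute constant bounded away from $1$, independent of $\alpha$. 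Standard Bogolyubov--Ruzsa gives structure (a large Bohr set / subspace inside $G+G-G-G$) but not directly a sampling statement of this form; the new probabilistic variant, applied with the density-$\alpha$ set $G$, is what converts ``$x$ lies in an iterated sumset of $G$'' into ``$x$ is a short random $\pm 1$ combination of near-uniform points of $G$,'' and all the parameters in the theorem statement are inherited from its quasi-polynomial bounds. Everything else — linearity of $L_A$, the union bound, and the arithmetic of the resource counts — is routine given that lemma.
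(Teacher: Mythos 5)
Your high-level instinct — reduce to the average-case $\DS$ via additive-combinatorics local correction and use linearity to recombine — is the right one, but there is a genuine gap in how you invoke the Bogolyubov--Ruzsa step, and it is exactly that gap that makes the claimed $4s$ and $4t$ bounds come out of thin air in your argument. You propose to write $x = \sum_{j=1}^k \pm u_j$ with $k = O(\log^4(1/\alpha))$ and to run one instance of $\DS$ on each $u_j$. That construction would cost $k \cdot s = O(\log^4(1/\alpha)) \cdot s$ memory cells and $k \cdot t$ query time, not $4s + O(\log^4(1/\alpha)\log n)$ and $4t + O(\log^4(1/\alpha)\log n)$; saying the constant $4$ "absorbs lower-order terms" is not correct, since $O(\log^4(1/\alpha))\cdot s$ dominates $4s$ whenever $\alpha$ is subconstant. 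There is no version of the lemma that says every $x$ is a $\pm$-combination of $O(\log^4(1/\alpha))$ points each landing in the good set $G$ with constant probability, and with such a lemma the resource bound would still be wrong.

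The paper's decomposition is structurally different and this structure is essential: it writes $x = x_1 + x_2 - x_3 - x_4 + v$, with \emph{exactly four} summands $x_1,\dots,x_4 \in X$ that go through $\DS$, and one additional \emph{sparse} shift vector $v$ with at most $O(\log^4(1/\alpha))$ nonzero entries. The $4$ in $4s$ and $4t$ is precisely this four. The sparse part $v$ is \emph{not} fed to $\DS$ at all: because $L_A$ is linear and $v$ is sparse, the quantity $\ip{A_i,v}$ can be computed directly in the query phase by reading only the $O(\log^4(1/\alpha))$ relevant entries of the row $A_i$, costing only $O(\log^4(1/\alpha)\log n)$ additional time and storage (to hold the nonzero coordinates and their indices). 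That is what the probabilistic quasi-polynomial Bogolyubov--Ruzsa lemma (Lemma~\ref{thm:bogolyubov-quasipoly}) actually delivers through the intermediate local-correction lemma (Lemma~\ref{cor:y=4x+s}): a subspace $V$ of codimension $O(\log^4(1/\alpha))$ inside $2X - 2X$ with many representations, and a sparse-coordinate change of basis that shifts any $y \in \F^n$ into $V$. Your phrase "(or the relevant partial sums)" gestures in this direction but never isolates the sparse shift as the object that must be handled by explicit computation rather than by $\DS$; without that, neither the memory bound nor the correctness argument closes. Also, a smaller point: the success probability of each sampled representation is only $\Omega(\alpha^5)$, not an absolute constant, so the amplification to failure probability $\delta$ happens entirely in preprocessing (repeatedly sampling and verifying via a membership oracle for $X$), after which the query phase is deterministic; your proposed per-copy union bound at query time is not how the $\delta$ is spent.
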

We stress that in the average-case data structure we start with, the probability is taken over a random input (as well as the inner randomness of the algorithm), whereas in the worst-case data structure that we obtain, with high probably the algorithm is successful \emph{on all} inputs.

The reduction above shows that for any linear problem $L_A$, if a data structure succeeds on an arbitrary small constant $\alpha>0$ fraction of the inputs, then we can obtain a data structure that succeeds on all inputs with parameters that essentially differ only by a constant multiplicative factor, and the query complexity $t$ translates into query complexity $4t + O(\log(n))$.

We note that the $O(\log^4(1/\alpha) \log(n))$ overhead in the space complexity of the constructed data structure is caused by storing $O(\log^4(1/\alpha))$ numbers from $[n]$. In particular, if the word size of the data structure is $w\geq\log(n)$, then the space complexity of the resulting data structure is $4s+O(\log^4(1/\alpha))$. Similarly, in this case the query complexity of the resulting data structure is $4t+O(\log^4(1/\alpha))$.

Note that for any non-trivial data structure problem, a data structure must use at least $\Omega(n)$ memory cells (only to store a representation of the input). Therefore, even for $\alpha$ as small as $\alpha=2^{-n^\eta}$ for a small constant $\eta>0$, the overhead in the space complexity is negligible. For typical query times of data structures, such as $t=\poly(\log(n))$ and $t=n^\eps$, the overhead in the query time is negligible even for $\alpha=1/\poly(n)$ and $\alpha=2^{-n^\eta}$, respectively.

\subsubsection{Online matrix-vector multiplication}
Next we turn to the core data structure problem in fine-grained complexity, the online matrix-vector multiplication problem (OMV). In the data structure variant of this streaming problem, one needs to preprocess a matrix $M\in\F^{n\times n}$, such that given a query vector $v\in\F^n$, one can quickly compute $Mv$. The study of OMV (over the Boolean semiring) and its applications to fine-grained complexity originates from~\cite{HKNS15}, and \cite{LW17,CKL18} give surprising upper bounds for the problem. Over finite fields, \cite{FHM01,CGL15} give lower bounds for OMV, and~\cite{CKLM18} proves lower bounds for a related vector-matrix-vector multiplication problem. We prove an efficient worst-case to average-case reduction for OMV over prime fields. A concurrent and independent work~\cite{HLS21} studies worst-case to average-case reductions for OMV over the Boolean semiring and their applications.

Note that OMV is, in fact, \emph{not} a linear problem, because for a query $v$ the output is not a single field element, but rather a vector $Mv \in \F^n$. Moreover, the average case condition only guarantees success with probability taken over \emph{both} the matrix $M$ as well as the vector $v$. Nevertheless, we can exploit the fact that each coordinate of the correct output is a linear function in the entries of $M$, and extend our techniques to the more involved setting of OMV.

\begin{restatable}{mtheorem}{MV}\label{thm:MV-weak-avg-case}
    Let $\F = \F_p$ be a prime field, $n \in \N$, and $\alpha\coloneqq\alpha(n) \in (0,1]$.
    Consider the matrix-vector multiplication problem $OMV_\F$ for dimension $n$,
    and suppose that for some $\alpha>0$ it holds that
    \begin{equation*}
        OMV_\F \in \DSParams{p}{s}{t}{\Pr_{M,v}[\DS_M(v) = Mv] \geq \alpha}
        \, .
    \end{equation*}
    Then for every $\delta>0$,
    \begin{equation*}
        OMV_\F \in
        \DSParams{4p + \exp(\log^4(1/\alpha)) \cdot \poly\log(1/\delta)\cdot \poly(n)}
        {4s + O(\log^4(1/\alpha) n) + O(n^2)}
        {(4t + n) \cdot \poly(1/\alpha) \cdot \poly\log(1/\delta)}
        {\forall M,v : \Pr[\DS_M(v) = Mv] \geq 1-\delta}\,. 
    \end{equation*}
\end{restatable}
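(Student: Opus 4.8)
The plan is to leverage the reduction for linear problems (\cref{thm:ds-lin-strong-avg-case}), adapting it to the two features that make $OMV$ not literally a linear problem. First, although the answer $Mv$ is a vector, each of its coordinates $(Mv)_j=\ip{M_j,v}$ is a linear functional of the entries of $M$, and symmetrically, for fixed $M$ the map $v\mapsto Mv$ is linear in $v$; so $OMV$ is ``linear on the input side'' and ``linear on the query side,'' and the idea is to correct the input $M$ with the additive-combinatorial machinery and correct the query $v$ by ordinary self-reduction. Second, the average-case hypothesis is over the pair $(M,v)$, so an averaging step first turns $\Pr_{M,v}[\DS_M(v)=Mv]\geq\alpha$ into a set $\mathcal G\subseteq\F^{n\times n}$ of density $\geq\alpha/2$ of ``good'' matrices, each satisfying $\Pr_v[\DS_M(v)=Mv]\geq\alpha/2$; thus every $M\in\mathcal G$ is the input to a genuine average-case data structure for the linear problem $v\mapsto Mv$.

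Given a worst-case matrix $M^*$, the preprocessing of $\DS'$ applies the probabilistic version of the quasi-polynomial Bogolyubov-Ruzsa lemma (the local correction lemma) to the dense set $\mathcal G$: in time $\exp(\log^4(1/\alpha))\cdot\poly\log(1/\delta)\cdot\poly(n)$ it fixes a way to write $M^*$ as a signed sum of a constant number of matrices $\tilde M_1,\dots,\tilde M_4$ that are each close to uniform over $\mathcal G$ --- hence, after an inexpensive verify-and-resample step, each genuinely in $\mathcal G$ --- together with the $O(\log^4(1/\alpha))$-sized coset data handling the part of $M^*$ outside the relevant subspace. It then runs four copies of the original preprocessing on $\tilde M_1,\dots,\tilde M_4$ (the $4p$ and $4s$ terms), and, for each $i$, computes a subspace $H_i\leq\F^n$ with $2\mathcal V_i-2\mathcal V_i\supseteq H_i$ (where $\mathcal V_i=\{v:\DS_{\tilde M_i}(v)=\tilde M_iv\}$ has density $\geq\alpha/2$) and stores a basis of a complement of $H_i$ together with the corresponding columns $\tilde M_iw$; these amount to $O(\log^4(1/\alpha))$ vectors in $\F^n$, explaining the $O(\log^4(1/\alpha)\,n)$ storage, while the $O(n^2)$ term holds the decomposition of $M^*$.

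To answer a worst-case query $v^*$, for each $i$ split $v^*=z_i+c_i$ with $z_i\in H_i$ and $c_i$ in the stored complement; compute the correction $\tilde M_ic_i$ in $O(n)$ time from the stored columns, and expand $z_i\in H_i\subseteq 2\mathcal V_i-2\mathcal V_i$ (again algorithmically, via the probabilistic Bogolyubov-Ruzsa decomposition) as a signed sum of a constant number of queries each near-uniform in $\mathcal V_i$, so that the corresponding calls to the $i$-th stored data structure are correct by the definition of $\mathcal V_i$. Composing with the matrix-side decomposition writes $M^*v^*$ as a signed combination of a constant number of answers returned by the four stored data structures, plus an $O(n)$-time correction; repeating the randomized decompositions $\poly(1/\alpha)\cdot\poly\log(1/\delta)$ times and taking a coordinate-wise majority drives the failure probability below $\delta$ for every $M^*$ and $v^*$. (When $|\F|$ is large one can instead list-decode the code of linear functions on the query side, which is where the $\poly(1/\alpha)$ factor is genuinely needed.) Checking that the distributions produced by the probabilistic Bogolyubov-Ruzsa lemma are close enough to uniform on $\mathcal G$ and on each $\mathcal V_i$, and propagating the failure probabilities through the nested self-reductions, is routine. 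The main obstacle is the composition itself: the input-side correction yields matrices that are near-uniform only over the dense set $\mathcal G$, not over all of $\F^{n\times n}$, so \cref{thm:ds-lin-strong-avg-case} cannot be applied as a black box; one must re-run the additive-combinatorial argument a second time on the query side relative to the matrix-dependent sets $\mathcal V_i$, keeping the auxiliary data small and precomputable and ensuring that the two layers do not blow up the query count, the storage, or the preprocessing time beyond the stated bounds.
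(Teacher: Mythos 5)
Your overall architecture matches the paper's: define the set $\mathcal G$ (the paper calls it $Z$) of density $\geq\alpha/2$ of matrices $M$ with $\Pr_v[\DS_M(v)=Mv]\geq\alpha/2$, decompose $M^*$ via the local correction lemma (\cref{cor:y=4x+s}) into four matrices in $\mathcal G$ plus a sparse shift, preprocess each with the average-case $\DS$, then on the query side decompose $v^*$ a second time relative to each matrix-dependent good set $\mathcal V_i$, handling the out-of-subspace part explicitly. This two-layer decomposition is exactly the paper's plan, and your observation that \cref{thm:ds-lin-strong-avg-case} cannot be applied as a black box is correct.

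However, there is a genuine gap in how you certify the query-side decomposition. You write that one should ``repeat the randomized decompositions $\poly(1/\alpha)\cdot\poly\log(1/\delta)$ times and take a coordinate-wise majority.'' This does not work in the low-agreement regime: \cref{thm:bogolyubov-quasipoly} only guarantees that a \emph{random} decomposition $v^*-u_i = x^{(i)}_1+x^{(i)}_2-x^{(i)}_3-x^{(i)}_4$ has all four vectors in $\mathcal V_i$ with probability $\Omega(\alpha^5)$, which is tiny. In the overwhelming majority of repetitions the decomposition fails and $\DS_{\tilde M_i}(x^{(i)}_j)$ is wrong for some $j$, and an adversarial average-case $\DS$ can make all those wrong answers agree (e.g.\ always output the zero vector on inputs outside $\mathcal V_i$). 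So the majority/plurality vote lands on the wrong answer. What you actually need is a \emph{verifier} that recognizes a successful decomposition so you can resample until one passes --- analogous to Freivalds' algorithm in the matrix-multiplication reduction --- but you cannot afford to recompute $\tilde M_i x^{(i)}_j$ directly. The paper's fix is to precompute and store, for a $0.1$-biased set $S\subseteq\F^n$ of size $O(n)$, the products $e\tilde M_i$ for every $e\in S$ (this is precisely where the $O(n^2)$ term in the memory bound comes from); then at query time it checks $\ip{e,\DS_{\tilde M_i}(x^{(i)}_j)}=\ip{e\tilde M_i,x^{(i)}_j}$ for $O(\log(1/\delta))$ random $e\in S$, each test costing $O(n)$ and catching a wrong vector with constant probability. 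Your proposal is missing this small-biased-set verification ingredient entirely, and the $O(n^2)$ storage term is left unaccounted for (you attribute it to ``the decomposition of $M^*$,'' but the sparse shift $U$ is only $O(\log^4(1/\alpha)\log n)$). Without a verifier, the reduction as you describe it does not achieve worst-case correctness.
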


We stress that in the assumed data structure, the success rate asserts that for \emph{a random input $M$ and query~$v$}, the data structure produces the correct answer with (an arbitrary small) probability $\alpha>0$, where the probability is over
\begin{inparaenum}[(i)]
\item the random input $M$
\item random query $v$
\item and the randomness of the preprocessing and the query phases of the data structure.
\end{inparaenum}
On the other hand, the conclusion holds for worst case inputs and queries. That is, \emph{for every} input $M$ and query $v$, the obtained data structure produces the correct answer with high probability, where the probability is only over the randomness used in the preprocessing stage and the query phase of the data structure (i.e., with high probability we can compute \emph{all} of the inputs).

To understand the parameters of the reduction, note that in the the OMV problem with $n \times n$ matrices, the preprocessing must be at least $n^2$, as this is the size of the input matrix, and the query time must be at least $n$, as information-theoretically we need to output $n$ field elements. Our worst-case to average-case reduction is essentially optimal in these parameters for a constant $\alpha$, as a weak data structure that uses $s$ memory cells and query time $t$ is translated into a data structure that works for all inputs and all queries using space $4s + O(n^2)$ and query time $4t + O(n) = O(t)$.
In fact, even for $\alpha$ as small as $\alpha = 1/n^{o(1)}$, the space complexity is increased by at most $O(n^2)$, and the query time is multiplied by at most $n^{o(1)}$.

\subsubsection{Worst-case to weak-average-case reductions}
\label{sec:intro:rm}
In the following, we discuss how to obtain worst-case algorithms starting from a very weak, but natural, notion of average-case reductions that we discuss next.

Recall that in the standard definition of average-case data structures, the algorithm preprocesses its input and is then required to correctly answer all queries for an $\alpha$-fraction of all possible inputs. However, in many cases (such as in the online matrix-vector multiplication problem), we only have an average-case guarantee on both inputs and queries. In this setting, we should first ask what is a natural notion of an average-case condition.

A strong requirement for an average-case algorithm in this case is to correctly answer \emph{all queries} for at least $\alpha$-fraction of the inputs. However, it is desirable to only require the algorithm to correctly answer on an \emph{average input and query}. That is, a \emph{weak average-case} data structure for computing a function $f\colon \F^n \times [m] \to \F$ with success rate $\alpha>0$ receives an input $x\in\F^n$, which is preprocessed into $s$ memory cells. Then, given a query $i\in[m]$, the data structure $\DS_x(i)$ outputs $y\in\F^{n'}$ such that $\Pr_{x\in\F^n, i \in [m] }[\DS_x(i) = f(x,i)] \geq \alpha$.

The challenge in this setting is that the errors may be distributed between both the inputs and the queries. On one extreme, the error is concentrated on selected inputs, and then the data structure computes \emph{all queries} correctly for $\alpha$-fraction of the inputs. On the other extreme, the error is spread over all inputs, and then the data structure may only answer $\alpha$-{fraction of the queries} on any inputs. Of course, the error could be distributed anywhere in between these extremes.

While we showed that every linear problem has an efficient worst-case to average-case reduction, in \cref{sec:impossibility} we show that not all linear (and non-linear) problems admit a worst-case to \emph{weak}-average-case reductions. Nevertheless, we overcome this limitation for certain problems of interest. 

\medskip

One of the most-studied problems in static data structures is the polynomial evaluation problem~\cite{KU08,L12,DKKS21}. Here, one needs to preprocess a degree-$d$ polynomial $q \colon \F^m \to \F$ into $s$ memory cells, and then for a query $x\in\F^m$, quickly compute $q(x)$. We study the problem of evaluating a low degree polynomial in the regime where the average-case data structure might only succeed on a small $\alpha$ fraction of the queries (outside of the unique decoding regime, see discussion below). We show that we can use such an average-case data structure to obtain a worst-case data structure that can compute $q$ on any $x\in\F^m$.

\begin{restatable}{mtheorem}{evalpoly}\label{thm:DS-RM-weak-avg-case}
    Let $\F = \F_p$ be a prime field, $\alpha\coloneqq\alpha(n) \in (0,1]$, and let $m,d \in \N$ be parameters.
    Consider the problem $\RM_{\F,m,d}$ of evaluating polynomials of the form
    $q \colon \F^m \to \F$ of total degree $d$ (i.e., the problem of evaluating the Reed-Muller encoding of block length $n = \binom{m+d}{d}$).

    Suppose that
    \begin{equation*}
        \RM_{\F,m,d} \in \DSParams{p}{s}{t}{\Pr_{q,x}[\DS_q(x)] \geq \alpha}
        \enspace.
    \end{equation*}
    Then
    \begin{equation*}
        \RM_{\F,m,d} \in \DSParams{p + \exp(\log^4(1/\alpha)) \cdot \poly(n)}{4s+O(\log^4(1/\alpha) \log(n))}{O(\abs{\F}^2 \cdot t + \abs{\F} \log^4(1/\alpha) + \abs{\F} \log(n))}
        {\forall q, x : \Pr[\DS_q(x) = q(x) ] > 1 - O\left(\sqrt{\frac{d}{\abs{\F}}}\right)}
        \enspace.
    \end{equation*}
\end{restatable}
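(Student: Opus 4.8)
The plan is to treat $\RM_{\F,m,d}$ as a \emph{linear} data-structure problem: evaluating a degree-$d$ polynomial $q$ at a point $x\in\F^m$ is a fixed linear functional $\ip{A_x,\vec q}$ of the coefficient vector $\vec q\in\F^n$, $n=\binom{m+d}{d}$, so $\RM_{\F,m,d}=L_A$ for $A$ the Reed--Muller generator matrix. The weak-average-case hypothesis averages over both a random polynomial \emph{and} a random query, and the heart of the argument is to upgrade it to a (near-)strong-average-case guarantee — correct on \emph{every} query for a noticeable fraction of polynomials — after which \Cref{thm:ds-lin-strong-avg-case} removes the assumption that the input is random. The step that is genuinely new here is the upgrade, and it is where the local structure of Reed--Muller codes is used.

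For the upgrade: by averaging, for at least an $\alpha/2$ fraction of polynomials $q$ (call them \emph{nice}) the map $x\mapsto\DS_q(x)$ agrees with the codeword $(q(x))_{x\in\F^m}$ on at least an $\alpha/2$ fraction of $\F^m$. For a nice $q$ we want to recover $q(x_0)$ for \emph{every} $x_0$; since $\alpha/2$ may lie well below the unique-decoding radius, I would run a \emph{local list-decoder} for Reed--Muller codes: choose a random low-degree curve through $x_0$, query $\DS_q$ on its $\approx\abs{\F}$ points, and list-decode the resulting univariate Reed--Solomon received word, obtaining $\poly(1/\alpha)$ candidate values for $q(x_0)$. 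To select the correct candidate without having $q$ available at query time, at preprocessing time — where $q$ \emph{is} available — I would sample a small random ``anchor'' (a point, or a short seed that determines one), route the curve through it, store the corresponding value(s) of $q$ as $O(\log^4(1/\alpha)\log n)$ bits of advice, and keep only the consistent candidate; the step is amplified by repeating over independent curves. The list size and the probability that a random curve through $x_0$ carries enough agreement to be decodable are both controlled by the Johnson bound for Reed--Solomon codes, and this is what yields the $1-O(\sqrt{d/\abs{\F}})$ success guarantee; the curve has length $\abs{\F}$ and is repeated, giving the $O(\abs{\F}^2 t)$ term in the query time, while reading the per-curve advice accounts for the $O(\abs{\F}\log^4(1/\alpha)+\abs{\F}\log n)$ terms, and the preprocessing cost of this stage is $p+\poly(n)$.

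This produces a data structure that, for a $\poly(\alpha)$ fraction of polynomials $q$, answers every query correctly with probability $1-O(\sqrt{d/\abs{\F}})$ over its internal randomness — a (near-)strong-average-case data structure for the linear problem $L_A$. I would finish by invoking \Cref{thm:ds-lin-strong-avg-case}, which converts this into a data structure correct on \emph{all} inputs $q$ and all queries $x$ with the same $1-O(\sqrt{d/\abs{\F}})$ success probability, multiplying space and query time by $4$, adding the $O(\log^4(1/\alpha)\log n)$ overhead, and contributing the $\exp(\log^4(1/\alpha))\cdot\poly(n)$ term to the preprocessing time. Composing the parameters of the two stages then gives the bounds in the statement.

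I expect the main obstacle to be the disambiguation of the list-decoding output together with the ordering it forces on the two corrections. At query time $q$ is no longer available, so the curve must be pinned to a preprocessing-time anchor and the stored value(s) must single out the correct codeword among the $\poly(1/\alpha)$ candidates, while (i) keeping the advice inside the $O(\log^4(1/\alpha)\log n)$ budget, so anchor and curve are given by a short seed rather than written out in full; (ii) keeping the curve's points sufficiently (e.g.\ pairwise) independent despite the two pinned points $x_0$ and the anchor, so the agreement on the curve concentrates; and (iii) ensuring that after this stage the set of recoverable polynomials still has density $\poly(\alpha)$. Point (iii) is why one must correct for worst-case \emph{queries} before correcting for worst-case \emph{inputs}: the four-fold combination of data structures used inside \Cref{thm:ds-lin-strong-avg-case} would otherwise combine data structures whose per-query error sets are uncorrelated, and no union bound over queries would survive. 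A related delicacy is that the $O(\sqrt{d/\abs{\F}})$ decoding error is intrinsic and cannot be amplified away, so the intermediate guarantee cannot literally be ``all queries correct with high probability'' in the strictest sense — making the precise interface with \Cref{thm:ds-lin-strong-avg-case} (or a mild variant of it) the point that needs the most care.
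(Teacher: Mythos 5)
Your approach matches the paper's in its essentials: both upgrade the weak-average-case data structure via an anchor (reference) point stored at preprocessing time that disambiguates the output of Reed--Solomon list-decoding on a line through the query, and both then handle worst-case inputs by the Bogolyubov decomposition of the coefficient vector into four ``nice'' polynomials plus a sparse residue. The ordering you propose (boost queries first, then decompose inputs) is morally equivalent to the paper's order in its internal Lemma on boosting (decompose first, then boost each piece), since the anchor-and-$q_i(\vec{w})$ advice is attached to each $q_i$ individually in either case.

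There is, however, a genuine gap in the step where you ``finish by invoking \cref{thm:ds-lin-strong-avg-case}.'' That theorem's hypothesis requires the average-case data structure to answer \emph{all} queries correctly on an $\alpha$-fraction of inputs, but after your boost each nice $q$ still carries an irreducible per-query error of order $\sqrt{d/(\alpha\abs{\F})}$, which is far from negligible and cannot be amplified to zero. So \cref{thm:ds-lin-strong-avg-case} does not apply as a black box, and the parenthetical ``or a mild variant'' is precisely where the remaining work lives. The paper instead re-proves the composition explicitly: it decomposes $q=q_1+q_2-q_3-q_4+u$ with each $q_i$ in the dense set $Z$, boosts each $\DS_{q_i}$ separately via the anchor, and then takes a union bound over the four per-query error events (giving a $4\eps$ failure over a random query). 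Crucially, this still only gives an average-over-queries guarantee, because the boosted structure's success probability is over a random $\vec{x}$, not over internal randomness; the paper therefore needs a \emph{third} layer --- a standard Reed--Muller local correction along a fresh random line through $\vec{x}$ --- to convert that into the worst-case-query statement $\forall q,\vec{x}$, and this second line is exactly where the $\abs{\F}^2$ in the query time comes from (two nested passes over lines of $\abs{\F}$ points), not from ``repeating'' a single curve. Your idea of adding query-time randomness to the curve (a random curve pinned to $x_0$ and the anchor) could in principle absorb this outer layer, but then you must verify that the Schwartz--Zippel/Johnson-bound argument for killing wrong candidates at the anchor still goes through when the curve is constrained to pass through the fixed $\vec{w}$, and you must track how the restriction to a degree-$2$ curve inflates $d$ to $2d$; none of this is spelled out, and it is exactly the delicate part.
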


Here, similarly to \cref{thm:MV-weak-avg-case}, the assumed data structure
succeeds only for a small fraction of inputs and queries, while in the conclusion
the data structure succeeds with high probability on \emph{every input} and \emph{every query}.

As for the effect of the reduction on the parameters, we see that for any $\alpha > 1/\poly(n)$ the preprocessing time changes only by an additive $\poly(n)$, the space complexity changes from $s$ to $4s + \poly( \log(n))$, and the query time changes from $t$ to $O(\abs{\F}^2 \cdot t + \abs{\F} \cdot \poly\log(n))$. In the data structure setting, the number of queries is usually polynomial in input length. Thus, in a typical setting of parameters for $\RM_{\F,m,d}$, the field size is $|\F|=\poly(\log{n})$, and, therefore, the blow-up of $|\F|^2$ is not critical.

\paragraph{A coding-theoretic perspective.}
For small values of average-case rate $\alpha>0$, the polynomial evaluation problem can be cast as \emph{list decoding with preprocessing}, by viewing the outputs of the query phase of the data structure as a function $h \colon \F^m \to \F$ that agrees with the input polynomial $q\colon \F^m \to \F$ on some small fraction of the queries, and the goal is to recover $q$ from $h$.

Indeed, note that for a small $\alpha>0$, if a function $h \colon \F^m \to \F$ agrees with some unknown low-degree polynomial $q$ on $\alpha$ fraction of the inputs, then there are potentially $O(1/\alpha)$ possible low-degree polynomials that are equally close to $h$. Hence, without preprocessing it is impossible to recover the original polynomial $q$. However, in the data structure settings, we can use the preprocessing to obtain an auxiliary structural information that would later allow us to transition from the list decoding regime to the unique decoding regime, and in turn, compute the values of the correct polynomial~$q$ with high probability (see more details in \cref{tech:beyond}).

\subsection{Open problems}
Our work leaves many natural open problems, such as obtaining reductions for various natural problems in other computational models (e.g., communication complexity, property testing, PAC learning, and beyond). However, for brevity, we would like to focus on and highlight one direction that we find particularly promising.

In \cref{thm:ds-lin-strong-avg-case}, we design worst-case to average-case reductions for linear problems in the setting of static data structures. An immediate and alluring question is whether our local correction via additive combinatorics framework can also be used to show worst-case to average-case reductions for all linear problems for both circuits and uniform algorithms. We observe that using similar techniques as in \cref{thm:ds-lin-strong-avg-case}, our framework can be used to show that given an efficient average-case circuit or uniform algorithm and an efficient \emph{verifier} for the problem, one can indeed design an explicit efficient worst-case circuit or uniform algorithm. A natural open problem here is to eliminate the assumption about the verifier and answer the aforementioned question to the affirmative.

\subsection*{Acknowledgments}
We are grateful to Tom Sanders for providing a sketch of the proof of the probabilistic version of the quasi-polynomial Bogolyubov-Ruzsa lemma. We would also like to thank Shachar Lovett and Tom Sanders for discussions regarding the quasi-polynomial Bogolyubov-Ruzsa lemma. 

\section{Technical overview}
\label{sec:overview}

We provide an overview of the main ideas and techniques that we use to obtain our results. For~concreteness, we illustrate our techniques by first focusing on the matrix multiplication problem.
    
We start in \cref{tech:challange}, where we explain the challenge and discuss why the naive approach fails. In \cref{tech:ac} we present the technical components that lie at the heart of this work: \emph{local correction lemmas via additive combinatorics}. Equipped with these technical tools, in \cref{tech:matrix} we present the main ideas in our worst-case to average-case reduction for matrix multiplication. Finally, in \cref{tech:beyond} we briefly discuss how to obtain the rest of our main results.
    
\subsection{The challenge: low-agreement regime}
\label{tech:challange}

Recall that in the matrix multiplication problem we are given two matrices $A,B \in \F^n$, and the goal is to compute their matrix product $A \cdot B$. For simplicity of the exposition, unless specified otherwise, in this overview we restrict our attention to the field $\F_2$, and to constant values of the success rate parameter $\alpha>0$ of average-case algorithms.

We would like to show that if there is an \emph{average-case} algorithm $\ALG$ that can compute matrix multiplication for an $\alpha$-fraction of all pairs of matrices $A,B \in \F^n$ in time $T(n)$, then there is a \emph{worst-case} randomized algorithm $\ALG'$ that runs in time $O(T(n))$ and computes $A \cdot B$ with high probability for \emph{every} pair of matrices $A$ and $B$.

We start with the elementary case where the average-case guarantee is in the \emph{high-agreement regime}, i.e., where the algorithm succeeds on, say, $99\%$ of the inputs; that is, 
\begin{equation}
    \label{eq:tech:high_agr}
    \Pr_{A,B \in \F^{n \times n}}[\ALG(A,B)=A\cdot B] \geq \alpha \;,
\end{equation}
for $\alpha=0.99$.
In this case, a folklore local correction procedure (see, e.g., \cite{BlumLR90}) will yield a worst-case algorithm that succeeds with high probability on all inputs. We next describe this procedure.

Given an \emph{average-case} algorithm $\ALG$ satisfying \cref{eq:tech:high_agr} with $\alpha=0.99$, consider the worst-case algorithm $\ALG'$ that receives any two matrices $A,B \in \F^{n \times n}$ and first samples uniformly at random two matrices $R,S \in \F^{n \times n}$. Next, writing $A = R + (A-R)$ and $B = S + (B-S)$, the algorithm $\ALG'$ computes
\begin{equation}
    \label{eq:tech:self-corr}
    M = \ALG(R,S) + \ALG(A-R,S) + \ALG(R,B-S) + \ALG(A-R,B-S) \;.    
\end{equation}

Denote by $X$ the set of matrix pairs $(A,B)$ for which $\ALG(A,B)=A\cdot B$, and recall that by \cref{eq:tech:high_agr} the density of $X$ is $0.99$. Note that:
(a) the matrices $R$, $A-R$, $S$, and $B-S$ are uniformly distributed, and
(b) if the pairs $(R,S)$, $(A-R,S)$, $(R,B-S)$, and $(A-R,B-S)$ are in the set $X$, then by \cref{eq:tech:self-corr} we have $M = A\cdot B$, and the algorithm $\ALG'$ computes the multiplication correctly. Hence, by a union bound we have $\Pr[M = AB] \geq 1 - 4 \cdot 0.01 > 0.9$ \emph{for all} matrices $A,B \in \F^{n \times n}$. Of course, the error probability can be further reduced by repeating the procedure and ruling by majority.

Unfortunately, this argument breaks when the average-case guarantee is weaker; namely, in the \emph{low-agreement regime}, where the algorithm succeeds on, say, only $1\%$ of the inputs. Here, when trying to self-correct as above, the vast majority of random choices would lead to a wrong output, and so at a first glace, the self-correction approach may seem completely hopeless.\footnote{Indeed, consider the counterexample where the average-case algorithm $\ALG(A,B)$ outputs $A \cdot B$ in case the first element of $A$ is $0$ and returns the zero matrix in case the first element of $A$ is $1$. Note that in this case $\Pr_{A,B \in \F^{n \times n}}[\ALG(A,B)=A\cdot B] \geq 1/2$, yet no decomposition of $A=\sum_i A_i$ and $B=\sum_i B_i$ as described above could self-correct matrix multiplication where the first element of $A$ is $1$. Indeed, any such composition would have an $A_i$ with the first element $1$, where $\ALG(A_i,B_j)$ fails.}

Nevertheless, using more involved tools from additive combinatorics such as a probabilistic version of the quasi-polynomial Bogolyubov-Ruzsa lemma that we show, as well as tools such as small-biased sample spaces and the Goldreich-Levin algorithm, we can construct different local correction procedures that work in the \emph{low-agreement regime}. We proceed to describe our framework for local correction using the aforementioned tools.

\subsection{Local correction via additive combinatorics}
\label{tech:ac}

Additive combinatorics studies approximate notions of algebraic structures via the perspective of combinatorics, number theory, harmonic analysis and ergodic theory. Most importantly for us, it provides tools for transitioning between algebraic and combinatorial notions of approximate subgroups with only a small loss in the underlying parameters (see surveys \cite{Lovett15,Lovett17}).

The starting point of our approach for local correction is a fundamental result in additive combinatorics, known as \emph{Bogolyubov's lemma}, which shows that the $4$-ary sumset of any dense set in $\F_2^n$ contains a large linear subspace. More accurately, recall that the sumset of a set $X$ is defined as $X+X = \{ x_1+x_2 \,:\, x_1,x_2 \in X \}$, and similarly $4X = \{ x_1+x_2+x_3+x_4\,:\, x_1,x_2,x_3,x_4 \in X \}$. These quantities can be thought of as quantifying a combinatorial analogue of an approximate subgroup. Bogolyubov's lemma states that for any subset $X \subseteq \F_2^n$ of density $|X|/2^n \geq \alpha$, there exists a subspace $V \subseteq 4X$ of dimension at least $n - \alpha^{-2}$.

We will show that statements of the above form can be used towards obtaining a far stronger local correction paradigm than the one outlined in \cref{tech:challange}. To see the initial intuition, consider an average-case algorithm that is guaranteed to correctly compute $\alpha$-fraction of the inputs, and denote by $X$ the set of these correctly computed inputs. Then $|X|/2^n \geq \alpha$, and Bogolyubov's lemma shows that there exists a large subspace $V$ such that every $v \in V$ can be expressed as a sum of four elements in $X$, each of which can be computed correctly by the average-case algorithm.

The approach above suggests a paradigm for local correction, however, there are several non-trivial problems in implementing this idea. For starters, how could we handle inputs that lay \emph{outside of the subspace} $V$? To name a few others: how can we amplify the success probability in the low-agreement regime? How do we algorithmically obtain the decomposition?  Can we handle finite fields beyond $\F_2^n$? How do we handle average-case where the success rate $\alpha$ is sub-constant? 

Indeed, for our worst-case to average-case reductions, we will need local correction lemmas with stronger structural properties than those admitted by Bogolyubov's lemma, as well as new ideas for each one of the settings. In the following, we discuss the main hurdles for the foregoing approach and the tools that are needed to overcome them, leading to our main technical tool, which is a probabilistic version of the quasi-polynomial Bogolyubov-Ruzsa lemma that we obtain. Then, we present our framework for local correction using these techniques. Finally, in \cref{tech:matrix,tech:beyond} we show the additional ideas that are necessary for applying the local correction lemmas in the settings of matrix multiplication, online matrix-vector multiplication, and data structures.

\paragraph{A probabilistic Bogolyubov lemma.} An immediate problem with the aforementioned local correction scheme is that while Bogolyubov's lemma asserts that \emph{there exists} a decomposition of each input into a sum of four elements in $X$, it does not tell us how to obtain this decomposition.

Toward this end, we further show that each vector $v \in V$ has many ``representations'' as a sum of four elements from $X$. This way, for any $v \in V$ we can efficiently sample a representation $v = x_1 + x_2 + x_3 + x_4$, where each $x_i \in X$. More accurately, let $X \seq \F_2^n$ be a set of density $\alpha$, let $R = \{r \in \F^n \setminus \{0\} : \abs{\hat{1}_X(r)} \geq \alpha^{3/2}\}$, and let $V = \{v \in \F^n : \ip{v,r} = 0 \ \forall r \in R\}$ be a linear subspace defined by $R$. Then $|R| \leq 1/\alpha^2$ and for all $v \in V$ it holds that
    \begin{equation*}
        \Pr_{x_1,x_2,x_3}[x_1,x_2,x_3,v - x_1 - x_2 - x_3 \in X] \geq \alpha^5 \;.
    \end{equation*}

\paragraph{Sparse-shift subspace decomposition.} The probabilistic Bogolyubov lemma allows us to locally correct inputs inside the subspace $V \subseteq 4X$. However, we need to be able to handle any vector in the field. Towards that end, we show an algebraic lemma that allows us to decompose each element of the field into a sum of an element $v$ in the subspace $V$ and a \emph{sparse} shift-vector~$s$. More accurately, let $R \subseteq \F^n \setminus \{\vec{0}\}$ and $V = \{v \in \F^n : \ip{v,r} = 0 \; \forall r \in R\}$. We show that there exists a collection of $t\leq \abs{R}$ vectors $B=\{b_1,\dots,b_t\},\,b_i \in \F^n$ and indices $k_1,\dots, k_t \in [n]$ such that $\sp(B) = \sp(R)$ and every vector $y \in \F^n$ can be written as $y = v + s$, where $v \in V$ and $s = \sum_{j=1}^t c_j \cdot \vec{e}_{k_j}$ for $c_j = \ip{y, b_j}$ and $\vec{e}_{k_j}$ is a unit vector.
    
We stress that the sparsity of the decomposition is essential to our applications, as we cannot locally correct the shift part of the decomposition, and instead we need to compute it explicitly. We remark that for matrix multiplication we can obtain a stronger guarantee by dealing with matrices outside of the subspace $V$ via a low-rank random matrix shifts (see \cref{tech:matrix}).

\paragraph{Subspace computation via the Goldreich-Levin lemma.} In order to perform local correction using additive combinatorics machinery as above while maintaining computational efficiency, we need to be able to compute the aforementioned basis $b_1,\dots,b_t \in \F^n$ and indices $k_1,\dots, k_t \in [n]$ efficiently. We note that, in essence, this problem reduces to learning the heavy Fourier coefficients of the set $X$. Thus, using ideas from \cite{BenSassonRZTW14} and an extension of the Goldreich-Levin algorithm to arbitrary finite fields, we can perform the latter in a computationally efficient way.

\paragraph{Probabilistic quasi-polynomial Bogolyubov-Ruzsa lemma.}
The main weakness of Bogolyubov's lemma is that the co-dimension of the subspace that it admits is polynomial in $1/\alpha$, where $\alpha$ is the success rate of the average-case algorithm. While this dependency on $\alpha$ allows us to locally correct in the $1\%$ agreement regime, it becomes degenerate when $\alpha$ tends to $0$ rapidly.

A natural first step towards overcoming this barrier is to use a seminal result due to Sanders~\cite{Sanders12}, known as the \emph{quasi-polynomial Bogolyubov-Ruzsa lemma}, which shows the existence of a subspace whose co-dimension's dependency on $1/\alpha$ is \emph{exponentially} better. That is, the lemma shows that for a set $X \seq \F_2^n$ of size $\alpha \cdot \abs{\F_2}^n$, where $\alpha \in (0,1]$, there exists a subspace $V \seq \F_2^n$ of dimension $\dim(V) \geq n - O(\log^4(1/\alpha))$ such that $V \seq 4X$. However, as in the case of Bogolyubov's lemma, we have the problem that the statement is only \emph{existential}.

We thus prove a probabilistic version of the quasi-polynomial Bogolyubov-Ruzsa lemma (see \cref{thm:bogolyubov-quasipoly}) over any field $\F = \F_p$, which asserts that for an $\alpha$-dense set $X \seq \F^n$, there exists a subspace $V \seq \F^n$ of dimension $\dim(V) \geq n - O(\log^4(1/\alpha))$ such that for all $v \in V$ it
holds that
    \begin{equation*}
        \Pr_{x_1,x_2,x_3 \in \F^n}[x_1, x_2 \in A, x_3, x_4 \in -A] \geq \Omega(\alpha^5)
        \enspace,
    \end{equation*}
where $x_4 = v-x_1-x_2-x_3$.
Furthermore, by combining the techniques above, we show that given a query access to the set $X$, there is an algorithm that runs in time $\exp(\log^4(1/\alpha)) \cdot \poly\log(1/\delta) \cdot \poly(n)$ and with probability $1-\delta$ computes a set of vectors $R \seq \F^n$ such that $V = \{v \in \F^n : \ip{v,r} = 0 \; \forall r \in R\}$.

We are grateful to Tom Sanders for providing us with the argument for showing this lemma, and we provide the proof in \cref{sec:appendix-sanders}.

\paragraph{Our local correction lemma.}
We are now ready to provide an informal statement of our local correction lemma, which builds on the machinery above, and in particular, on the probabilistic quasi-polynomial Bogolyubov-Ruzsa lemma. 

Loosely speaking, our local correction allows us to decompose any vector $y\in\F^n$ as a linear combination of the form
\begin{equation*}
    y = x_1 + x_2 - (x_3 + x_4) + s
    \;,
\end{equation*}
where $x_1,x_2,x_3,x_4 \in X$ and $s \in \F^n$ is a \emph{sparse} vector.

\begin{lemma}[informally stated, see \cref{cor:y=4x+s}]
    For a field $\F = \F_p$ and $\alpha$-dense set $X \seq \F^n$, there exists $t \leq 1/\alpha^2$ vectors $b_1,\dots,b_t \in \F_2^n$ and indices $k_1,\dots, k_t \in [n]$ satisfying the following. Given a vector $y \in \F_2^n$, let $s = \sum_{j=1}^t \ip{y, b_j} \cdot \vec{e}_{k_j}$ we have
    \begin{equation*}
        \Pr_{x_1,x_2,x_3 \in \F^n}[ x_1, x_2 \in X, x_3, x_4  \in -X] \geq \Omega(\alpha^5)
        \enspace,
    \end{equation*}
    where $x_4=y - s - x_1 - x_2 - x_3$.

    Furthermore, given an oracle that computes $1_X(x)$ with probability at least $2/3$, there exists an algorithm that makes $\exp(\log^4(1/\alpha)) \cdot \poly\log(1/\delta) \cdot \poly(n)$ oracle calls and field operations, and with probability at least $1-\delta$ outputs $b_1,\dots,b_t$ and $k_1,\dots, k_t$.
\end{lemma}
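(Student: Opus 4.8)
The plan is to assemble the lemma from the three ingredients described above, in the following order. First I would invoke the probabilistic quasi-polynomial Bogolyubov-Ruzsa lemma (\cref{thm:bogolyubov-quasipoly}) applied to the $\alpha$-dense set $X \seq \F^n$. This produces a set $R \seq \F^n \setminus \{\vec 0\}$ with $|R| \leq 1/\alpha^2$ (in the quasi-polynomial regime one actually gets $|R| = O(\log^4(1/\alpha))$, but the weaker bound $1/\alpha^2$ suffices for the statement as written) such that the subspace $V = \{v \in \F^n : \ip{v,r}=0 \ \forall r \in R\}$ satisfies the sampling guarantee: for every $v \in V$,
\begin{equation*}
    \Pr_{x_1,x_2,x_3 \in \F^n}[x_1,x_2 \in X,\ x_3, x_4 \in -X] \geq \Omega(\alpha^5),
\end{equation*}
where $x_4 = v - x_1 - x_2 - x_3$. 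Next I would apply the sparse-shift subspace decomposition lemma to this same $R$: it yields $t \leq |R|$ vectors $b_1,\dots,b_t \in \F^n$ with $\sp(\{b_1,\dots,b_t\}) = \sp(R)$ and indices $k_1,\dots,k_t \in [n]$ so that every $y \in \F^n$ decomposes as $y = v + s$ with $v \in V$ and $s = \sum_{j=1}^t \ip{y,b_j}\cdot \vec e_{k_j}$. The key point making these two lemmas compatible is that they are built from the \emph{same} set $R$: the first controls $V$ through $R$, and the second manufactures a basis for $\sp(R)$ together with a canonical, explicitly computable shift into $V^\perp$'s complement. Substituting $v = y - s$ into the displayed probability bound gives exactly the conclusion with $x_4 = y - s - x_1 - x_2 - x_3$, since $y - s \in V$.

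For the algorithmic (``Furthermore'') part, I would argue in two stages. First, the probabilistic quasi-polynomial Bogolyubov-Ruzsa lemma already comes with an algorithm that, given query access to $1_X$, runs in time $\exp(\log^4(1/\alpha)) \cdot \poly\log(1/\delta) \cdot \poly(n)$ and with probability $1-\delta$ outputs a valid $R$ — this is where the Goldreich-Levin machinery (the finite-field extension of the Goldreich-Levin algorithm, together with the ideas from \cite{BenSassonRZTW14}) enters, since computing $R$ amounts to learning the heavy Fourier coefficients of $X$, and the oracle only needs to compute $1_X(x)$ correctly with probability $2/3$ (boosted by repetition at a $\poly\log(1/\delta)$ cost). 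Second, given $R$, the sparse-shift decomposition lemma is purely linear-algebraic and deterministic: extracting a basis $b_1,\dots,b_t$ of $\sp(R)$ and locating the pivot indices $k_1,\dots,k_t$ is Gaussian elimination over $\F_p$, costing only $\poly(n)$ field operations (indeed $\poly(|R|, n)$). Composing the two stages and folding the failure probabilities into a single $\delta$ via a union bound (re-parametrizing $\delta \gets \delta/2$, which only changes the $\poly\log(1/\delta)$ factor) gives the claimed running time and success probability.

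The main obstacle I anticipate is not in the wiring of the two lemmas — that is essentially bookkeeping — but in ensuring the two decompositions genuinely refer to a \emph{common} $R$ and that the sparse-shift lemma's hypothesis ($R \seq \F^n \setminus \{\vec 0\}$, $V$ defined as the joint kernel) matches verbatim the object produced by the probabilistic Bogolyubov-Ruzsa lemma; one has to be careful that the algorithm does not output some \emph{superset} or perturbed version of the heavy-coefficient set that would still define a valid $V$ for the sampling bound but break the sparsity count $t \leq 1/\alpha^2$. A secondary subtlety is the $\pm X$ asymmetry: the probabilistic lemma is stated with $x_1,x_2 \in X$ and $x_3,x_4 \in -X$ (reflecting that over $\F_p$ with $p > 2$ the set and its negation differ), so I would make sure the shift decomposition and the final substitution respect this, which they do since $s$ and $v$ are handled additively and the roles of the $x_i$ are untouched by the shift. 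Once these alignment issues are checked, the proof is a short composition.
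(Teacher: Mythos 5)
Your proposal is correct and follows essentially the same route as the paper's proof of \cref{cor:y=4x+s}: invoke \cref{thm:bogolyubov-quasipoly} to get $V$ and a defining set $R$, diagonalize $R$ (Gaussian elimination) to extract the $b_j$'s and pivot indices $k_j$ satisfying $\ip{y-s, b_j}=0$, then apply the sampling guarantee to $v=y-s\in V$; the algorithmic part likewise composes the lemma's Goldreich-Levin-based construction of $R$ with oracle amplification and a union bound on failure probabilities. The only cosmetic difference is that you invoke the ``sparse-shift subspace decomposition'' as a standalone ingredient, whereas the paper carries out that diagonalization and the verification $\ip{v,b_j}=0$ inline within the proof.
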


The aforementioned local correction lemmas lie at the heart of our average-case to worst-case reductions, which we discuss next.

\subsection{Illustrating example: matrix multiplication}
\label{tech:matrix}
    
    We present a high-level overview of our reductions for matrix multiplication, which illustrates the key ideas that go into the proof. Let $\ALG$ be an average-case algorithm that can compute matrix multiplication for an $\alpha$-fraction of all pairs of matrices $A,B \in \F^n$ in time $T(n)$.
    We use the \emph{average-case} algorithm $\ALG$ to construct a \emph{worst-case} randomized algorithm $\ALG'$ that runs in time $O(T(n))$ and computes $A \cdot B$ with high probability for \emph{every} pair of matrices $A$ and $B$.
    For simplicity of the exposition, in this overview we make the following assumptions:
    (1) the algorithm $\ALG$ is \emph{deterministic},
    (2) the input is a pair $(A,B)$ such that $A$ is a matrix satisfying $\Pr_{B'}[\ALG(A,B') = A \cdot B'] \geq \alpha$,
    (3) the success rate $\alpha$ is a constant, and
    (4) the field $\F$ is $\F_2$.
    
    We start by noting two simple facts. First, given the algorithm's (potentially wrong) output $\ALG(A,B)$, we can efficiently check whether the computation is correct using Freivalds' algorithm (\cref{lemma:freivalds}). Second, denoting by $X = \{B' \in \F_2^{n \times n} : \ALG(A,B') = A \cdot B'\}$ the set of ``good'' matrices, we have that if $B \in X$, then the average-case algorithm correctly outputs $\ALG(A,B) = A \cdot B$. Hence, the main challenge is in dealing with the case that $B \notin X$, in which we need to locally correct the value of the multiplication.
    
    \paragraph{Local correction via Bogolyubov's lemma.}
    The first idea is to reduce the problem to the case where the set of good matrices contains a large subspace, and hence admits local correction, as discussed in \cref{tech:ac}. Specifically, by the probabilistic Bogolyubov lemma, given $X$ we can choose a subspace $V \seq \F_2^{n \times n}$ of matrices, where
    $\dim(V) \geq n^2 - 1/\alpha^2$, such that for any $B' \in V$, if we sample $M_1,M_2,M_3$ uniformly at random, then 
    \begin{equation*}
        \Pr[M_1,M_2,M_3,M_4 \in X] \geq \alpha^5 \;,\; \text{where } M_4 = B' - M_1 - M_2 - M_3 \;.
    \end{equation*}
    Note that if the matrices $M_1,M_2,M_3,M_4$ produced by our sampling are all in the set of good matrices $X$, then we can self-correct the value of $\ALG(A,B')$ by evaluating $\{\ALG(A,M_i)\}_{i \in [4]}$ and computing the linear combination
    \begin{equation*}
        \sum_{i=1}^4 \ALG(A,M_i) = \sum_{i=1}^4 A \cdot M_i = A \cdot (\sum_{i=1}^4 M_i) = A \cdot B' \;.
    \end{equation*}
    
    Note that this event is only guaranteed to occur with probability $\alpha^5$, which is far smaller than $1/2$. Nevertheless, since we can verify the computation using Freivalds' algorithm, we can boost this probability to be  arbitrarily close to 1 by repeating the random sampling step $O(1/\alpha^5)$ times, each time computing $\sum_{i=1}^4 \ALG(A,M_i)$ and verifying if the obtained result is indeed correct using Freivalds' algorithm. Therefore, if $B$ belongs to the (unknown) subspace $V$, then the algorithm described above indeed computes $A \cdot B$ with high probability
    in time $O(T(n)/\poly(\alpha)) = O(T(n))$.
    
    However, the approach above does not work for matrices $B$ that do not lie in the subspace $V$ described above. To deal with this case, our next goal is to ``shift'' the matrix into the subspace $V$ using low-rank random shifts, which can then be computed efficiently and used for local correction. We describe this procedure next.
    
    \paragraph{Low-rank random matrix shifts.} 
    We start by making the following key observation: if we have an arbitrary matrix $A$, and a matrix $B \in \F^{n \times n}$ of rank $k$, then their product $AB$ can be computed in time $O(kn^2)$, given a rank-$k$ decomposition of $B$. Details follow.
    
    To see this, suppose that the first $k$ columns of $B$ denoted by $(B_i)_{i=1}^k$, are linearly independent, and for each of the remaining $n-k$ columns $(B_j)_{j=k+1}^n$, we know the linear combination $B_j = \sum_{i=1}^k d_{i,j} \cdot B_i$ for some coefficients $d_{i,j} \in \F$. We can first multiply $A$ by each of the $k$ linearly independent columns of $B$. Then, to compute the remaining columns, for each $i=1, \dots, k$ let $C_i = A \cdot B_i$ be the $i$'th column of the matrix $C = AB$, and observe that if $B_j = \sum_{i=1}^k d_{i,j} B_i$, then $C_j = A \cdot B_j = A \cdot (\sum_{i=1}^k d_{i,j} B_i) = \sum_{i=1}^k d_{i,j} \cdot C_i$, which can be computed in $O(kn)$ time for each $j$. Therefore the total running time of multiplying $A$ by $B$ is $O(kn^2)$.
    
    We are now ready to describe our method for shifting the matrices into the subspace $V$ using low-rank matrices, capitalizing on the observation above. Given the matrix $B$ (that is, possibly, not in $V$), we sample a random matrix $R_B \in \F^{n \times n}$ of rank $2k$ by randomly choosing $2k$ columns and filling them with uniformly random field elements.
    Note that with high probability these $2k$ columns are linearly independent. Then, we let the rest of the columns be random linear combinations of the first $2k$ columns we chose. We observe that if $\dim(V) = n-k$, then 
    \begin{equation*}
        \Pr[B+R_B \in V] \geq \frac{1}{2|\F|^{k}} \;.
    \end{equation*}
    If indeed $B+R_B \in V$, then we can compute $A \cdot(B+R_B)$ using the procedure discussed above, by writing $B+R_B$ as a sum of 4 random matrices $B+R_B = M_1+M_2+M_3+M_4$, applying $\ALG(A,M_i)$ for each $i=1...4$,
    and using Freivalds' algorithm to efficiently check if the produced output is correct or not.
    
    Note that since we have a lower bound on the probability that $B+R_B$ belongs to the desired subspace, we have an upper bound on the expected number of attempts required until this event occurs. When we obtain such low-rank matrix shifts, which we verify using Freivalds' algorithm, we proceed by computing $A \cdot R_B$.
    Since $R_B$ is a matrix of rank at most $2k$, the total running time of this will be $O(kn^2)$. Finally, we return 
    \begin{equation*}
        \ALG(A, B+R_B)- A \cdot R_B \;,
    \end{equation*}
    which indeed produces the correct answer assuming that $\ALG(A, B+R_B)$ is correct.
    
    \begin{remark}
        The discussion above made the simplifying assumption that the inputs we are getting are pairs $(A,B)$ such that $A$ is a matrix satisfying $\Pr_{B'}[\ALG(A,B') = A \cdot B'] \geq \alpha$.
        The actual proof require also handling the inputs for which the matrix $A$ does not satisfy this requirement, which is done using similar ideas by applying the local correction procedure first to $A$ and then to $B$.
    \end{remark}
\subsection{Beyond matrix multiplication}
\label{tech:beyond}

We conclude the technical overview by briefly sketching some of the key ideas in the rest of our worst-case to average-case reductions, building on the local correction lemmas outlined in \cref{tech:ac}. Below we assume that all data structures are deterministic, but by standard techniques this assumption is without loss of generality. We start with the simplest setting, and then proceed to the more involved ones.

\paragraph{Worst-case to average-case reductions for all linear data structure problems.}
The setting here is the closest to that of matrix multiplication. Let $\DS_A$ be an average-case data structure for a linear problem defined by $A$, where we preprocess an input vector $x$ and the answer to query $i$ is $\ip{A_i,x}$, and $A_i$ is the i'th row of $A$.

Given a vector $y \in \F^n$, we use our local correction lemma to obtain a decomposition of the form $y = x_1 + x_2 - ( x_3 + x_4) + v$, where $x_1,x_2,x_3,x_4 \in X$ (i.e., on which $DS_{x_j}(i) = \ip{A_i,x_j}$ for all $i$) and a sparse shift vector $v = \sum_{j=1}^t \ip{y, b_j} \cdot \vec{e}_{k_j}$. We then preprocess each of the $x_j$'s by applying $DS_{A}$ to it, and we also compute $\ip{A_i,v}$ efficiently by using its sparse representation. The idea is that by the linearity of the problem, we can locally correct according to $\sum_{j=1}^4 DS_{x_j}(i) + \ip{A_i,v}$.

It important to note that, unlike in the setting of matrix multiplication, we cannot use the random low-rank matrix shifts, nor Freivald's algorithm for verification. However, this is where we rely on the sparse subspace decomposition to shift the input into the subspace $V$ implied by the quasi-polynomial Bogolyubov-Ruzsa lemma. In addition, instead of relying on Freivalds' algorithm for verification, here we use the guarantee about the correctness of computation in the subspace $V$ together with the sparsity of the shift vector, which allows us to correct its corresponding contribution via explicit computation. See details in \cref{sec:all_DS}.

\paragraph{Online matrix-vector multiplication (OMV).}
The online setting of the OMV problem poses several additional challenges. Recall that in the average-case reductions above,  the input is a vector $x \in \F_2^n$, each query is a coordinate $i \in [n]$, and the matrix $M \in \F_2^{n \times n}$ is a hard-coded parameter. In the OMV problem, the matrix $M$ is the \emph{input}, the vector $x$ is the \emph{query}, and answer to a query is not a scalar but rather a \emph{vector}. Hence we need to use a two-step local correction where we first decompose the matrix and then decompose the vector. Observe that we can use our additive combinatorics mechanism to preprocess the matrix $M$ and get a description of the subspace $V$ that it asserts, as well as the formula that is required to compute the shift vector $s$ given $x$, but the problem is that here we cannot preprocess $x$, as it arrives online. Thus, in the query phase, when the algorithm receives $x$, we want to find the decomposition $x=x_1+x_2+x_3+x_4+s$. We then compute the shift vector $s$, and then sample $x_i$'s whose sum is $x-s$. However this leaves us with the task of checking that all of the $x_i$'s are computed correctly. To this end, we rely on a generalization of \emph{small-bias sample spaces} to finite fields in order to obtain an efficient verification procedure. See details in \cref{sec:OMV}.

\paragraph{Weak-average-case reductions.}
As discussed in the introduction, in the setting of weak-average-case we cannot expect a reduction for all linear problems. In turn, this leads to substantially different techniques. We concentrate on the multivariate polynomial evaluation problem. Here, we are given a polynomial $p \colon \F^m \to \F$ of degree $d$, where for simplicity, in this overview we fix the parameters $d = \log(n)$, $|\F| = \poly(\log(n))$, and $m = \log(n)/\log\log(n)$, so that we encode $n$ field elements using a codeword of length $\poly(n)$, and the distance is $1 - dm/|\F| > 0.99$. The polynomial is given as input by its $n = d^m$ coefficients, the queries are of the form $x \in \F^m$, and the goal is to output $p(x)$. The key difficulty here, is that for small values of the average-case rate $\alpha>0$, we need to be able to deal with the \emph{list decoding regime} (see discussion in \cref{sec:intro:rm}).

The first step is to rely on our additive combinatorics local correction tools similarly as in the OMV reduction. Here the idea is to preprocess the polynomial $p$ and obtain a decomposition of the form $p = p_1 + p_2+p_3+p_4+s$, where again $s$ is a sparse shift-vector. We then construct a data structure for each $p_i$. However, since we cannot process the queries $x \in \F^m$, we are left with the task of locally correcting the noisy polynomials $\{p_i\}$. If a polynomial $p_i$ is only slightly corrupted (i.e., within the unique decoding regime), we can easily locally correct it without using any preprocessing. However, we also need to deal with noisy polynomials $p_i$ in the \emph{list decoding regime} in which only $\alpha$-fraction of the points are evaluated correctly, for an arbitrarily small $\alpha$.

We overcome the difficulty above by capitalizing the preprocessing power of the data structure. Namely, we will show how to boost the success probability from the list-decoding regime to the unique-decoding regime, in which case we can perfectly correct the polynomial via the local correction of the Reed--Muller code. The key idea is that by the generalized Johnson bound, there is only a list of $O(1)$ codewords that agree with the average-case data structure on at least $\alpha/2$-fraction of the points. We thus fix a reference point $w \in \F^m$ and explicitly compute the correct value of $p_i(w)$. Next, we sample a random point $r$ and query the points of line $\ell_{x,w}$ incident to $r$ and the reference point $z$. Then, we consider the list (of size $O(1)$) of all low-degree univariate polynomials that agree with the queried points on $\ell_{x,w}$, and trim the list by removing each polynomial that does not agree on the reference point. Using the sampling properties of lines in multivariate polynomials, we can show that answering accordingly to the remaining polynomials in the list would yield the right value with high probability.

\section{Additive combinatorics toolbox}
\label{sec:AC}

In this section, we provide a toolbox for locally correcting vectors using techniques from additive combinatorics. The toolkit will play a key technical role in all of our worst-case to average-case reductions.

Throughout this section we fix a finite field $\F$. For simplicity, we set $\F = \F_p$ for a prime number~$p$. However, we remark that the following results hold for any finite field, with only a negligible change in parameters (see discussions in relevant places below). Recall that the sumset of a set $X$ is defined as $X+X = \{ x_1+x_2 \,:\, x_1,x_2 \in X \}$, and, similarly, $t \cdot X = \{ x_1+ \ldots + x_t\,:\, x_1, \ldots ,x_t \in X \}$ for an integer $t\geq1$.

Let $X \subset \F^n$ be a subset of size $\abs{X} = \alpha \cdot \abs{\F}^n$. As we outlined in \cref{sec:overview}, our goal is to decompose any vector $y\in\F^n$ as a linear combination of the form
\begin{equation*}
    y = x_1 + x_2 - (x_3 + x_4) + s
    \;,
\end{equation*}
where $x_1,x_2,x_3,x_4 \in X$, and $s \in \F^n$ is a sparse vector.

Towards this end, we will need additive combinatorics lemmas that will allow us to find a large subspace $V \subseteq 2X - 2X$, so that any vector $v \in V$ can be written as $v = x_1 + x_2 - (x_3 + x_4)$. Crucially, we will show that we can efficiently sample such a decomposition and verify membership in the subspace $V$.

\subsection{Probabilistic and quasi-polynomial Bogolyubov-Ruzsa lemmas}
\label{sec:bogo}

A natural starting point for obtaining a subspace as discussed above is via \emph{Bogolyubov's lemma}, which states that for any subset $X \subseteq \F_2^n$ of density $|X|/2^n \geq \alpha$, there exists a subspace $V \subseteq 4X$ of dimension at least $n - \alpha^{-2}$. However, in addition to minor issues such as being restricted to the field $\F_2$, there are some fundamental problems with using Bogolyubov's lemma for local correction. Most importantly for our application is that while Bogolyubov's lemma asserts that \emph{there exists} a decomposition of each input into a sum of four elements in $X$, it does not tell us how to obtain this decomposition.

Hence, we further show that each vector $v \in V$ has many ``representations'' of a sum of 4 elements from $X$. This way, for any $v \in V$ we can efficiently sample a representation $v = x_1 + x_2 + x_3 + x_4$, where each $x_i \in X$. We refer to this statement as the probabilistic Bogolyubov lemma. To make the following discussion precise, we shall need the following notation. 

Given a set $X \seq \F^n$, we denote by $1_X \colon \F^n \to \{0,1\}$ the indicator function of the set $X$.
The convolution of two boolean functions $f$ and $g$ we denote by $(f * g)(x) = \E_y[f(y)g(x-y)]$.
The Fourier expansion of a function $f \colon \F^n \to \C$ is given by $f(x) = \sum_{r \in \F^n} \hat{f}(r) \cdot \chi_r(x)$,
where the Fourier coefficients of $f$ are defined as $\hat{f}(r) = \ip{f,\chi_r} = \E_x[f(x) \cdot \overline{\chi_r(x)}]$,
with $\chi_r(v) = \omega^{\ip{v,r}}$ and $\omega = e^{\frac{2 \pi i}{p}}$ is the $p$'th root of unity.
In particular for convolution of two functions we have $(f * g)(x) = \sum_r \hat{f}(r)\hat{g}(r) \chi_r(x)$.

\begin{lemma}[Probabilistic Bogolyubov lemma]
    Let $\F = \F_p$ be a prime field, and 
    let $X \seq \F^n$ be a set of size $\abs{X} = \alpha \cdot \abs{\F}^n$ for some $\alpha \in (0,1]$.
    Let $R = \{r \in \F^n \setminus \{0\} : \abs{\hat{1}_X(r)} \geq \alpha^{3/2}\}$,
    and let $V = \{v \in \F^n : \ip{v,r} = 0 \ \forall r \in R\}$ be a linear subspace defined by $R$.
    Then $|R| \leq 1/\alpha^2$, and for all $v \in V$ it holds that
    \begin{equation*}
        \Pr_{x_1,x_2,x_3}[x_1,x_2,x_3,v - x_1 - x_2 - x_3 \in X] \geq \alpha^5 \;.
    \end{equation*}
\end{lemma}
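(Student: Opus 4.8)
The plan is to analyze the $4$-fold convolution $1_X * 1_X * 1_X * 1_X$ via its Fourier expansion and show that it is bounded below on all of $V$. First I would write, for any $v\in\F^n$,
\begin{equation*}
    \Pr_{x_1,x_2,x_3}[x_1,x_2,x_3,v-x_1-x_2-x_3\in X] = (1_X * 1_X * 1_X * 1_X)(v) = \sum_{r\in\F^n}\widehat{1_X}(r)^4\,\chi_r(v)\,,
\end{equation*}
using that $\widehat{1_X * 1_X * 1_X * 1_X}(r) = \widehat{1_X}(r)^4$. I then split the sum into three pieces: the $r=0$ term, which equals $\widehat{1_X}(0)^4 = \alpha^4$ (a real positive number); the terms with $r\in R$; and the terms with $r\notin R\cup\{0\}$. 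The main point is that for $v\in V$ the ``heavy'' terms behave well, and the ``light'' terms are small in aggregate.

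For the heavy terms, by definition of $V$ we have $\ip{v,r}=0$ for every $r\in R$, so $\chi_r(v)=1$. Hence $\sum_{r\in R}\widehat{1_X}(r)^4\chi_r(v) = \sum_{r\in R}\widehat{1_X}(r)^4$, which is a sum of nonnegative reals (fourth powers), so this whole contribution is $\geq 0$ and in particular does not hurt us. For the light terms, I would bound
\begin{equation*}
    \Bigl|\sum_{r\notin R\cup\{0\}}\widehat{1_X}(r)^4\chi_r(v)\Bigr| \leq \max_{r\notin R\cup\{0\}}\abs{\widehat{1_X}(r)}^2\cdot\sum_{r}\abs{\widehat{1_X}(r)}^2 \leq (\alpha^{3/2})^2\cdot\alpha = \alpha^4\,,
\end{equation*}
where the middle step uses that $\abs{\widehat{1_X}(r)}<\alpha^{3/2}$ for $r\notin R$, and Parseval's identity $\sum_r\abs{\widehat{1_X}(r)}^2 = \E_x[1_X(x)^2] = \alpha$. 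This is the crux of the argument; note it is exactly the place where the threshold $\alpha^{3/2}$ in the definition of $R$ is calibrated so that the light tail is controlled by $\alpha^4$. Combining, $(1_X*1_X*1_X*1_X)(v)\geq \alpha^4 + 0 - \alpha^4 = 0$, which is not yet good enough — I need a strictly positive lower bound $\alpha^5$.

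To fix the last gap I would be more careful: rather than bounding the light tail by the full Parseval sum, I would subtract off the $r=0$ and $r\in R$ contributions first. Actually the cleaner route is to lower the threshold slightly or, more simply, to observe that we do not need to use all of Parseval's mass on the tail. Concretely, I would instead bound the light tail using $\abs{\widehat{1_X}(r)}^2\le \alpha^3$ for $r\notin R$ together with $\sum_{r\notin R\cup\{0\}}\abs{\widehat{1_X}(r)}^2 \le \sum_r\abs{\widehat{1_X}(r)}^2 - \widehat{1_X}(0)^2 = \alpha - \alpha^2$, giving a light-tail bound of $\alpha^3(\alpha-\alpha^2) = \alpha^4 - \alpha^5 \le \alpha^4 - \alpha^5$; then $(1_X*1_X*1_X*1_X)(v) \ge \alpha^4 - (\alpha^4-\alpha^5) = \alpha^5$, as desired. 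The bound $\abs R\le 1/\alpha^2$ is immediate from Parseval: each $r\in R$ contributes at least $(\alpha^{3/2})^2=\alpha^3$ to $\sum_r\abs{\widehat{1_X}(r)}^2=\alpha$, so $\abs R\le \alpha/\alpha^3 = 1/\alpha^2$. The main obstacle is precisely this last-mile quantitative bookkeeping: the naive Parseval bound only yields nonnegativity, and one must exploit the slack between the threshold $\alpha^{3/2}$ and the available Fourier mass (after removing the $r=0$ term) to extract the strictly positive $\alpha^5$; everything else is routine Fourier analysis over $\F_p^n$.
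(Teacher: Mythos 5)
Your proof is essentially identical to the paper's: same Fourier expansion of $(1_X*1_X*1_X*1_X)(v)$, same three-way split by $r=0$, $r\in R$, and the tail, same Parseval bound $|R|\le 1/\alpha^2$, and the same refinement of the tail sum using $\sum_{r\notin R\cup\{0\}}|\hat 1_X(r)|^2\le\alpha-\alpha^2$ to land on $\alpha^4-(\alpha^4-\alpha^5)=\alpha^5$. The only cosmetic difference is that you drop the heavy terms entirely (lower-bounding them by $0$) whereas the paper keeps the term $|R|\cdot\alpha^6\ge 0$; either way the final bound is the same.

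One shared subtlety worth flagging: over $\F_p$ with $p>2$ the Fourier coefficients $\hat 1_X(r)$ are complex, so your assertion that $\sum_{r\in R}\hat 1_X(r)^4$ is ``a sum of nonnegative reals (fourth powers)'' is false as stated — $\hat 1_X(r)^4\ne|\hat 1_X(r)|^4$ in general, and since $|\hat 1_X(r)|$ can be close to $\alpha$ the real parts $\Re\hat 1_X(r)^4$ can be negative and of magnitude up to $\alpha^4$, enough to swamp the $r=0$ term. The paper's write-up silently makes the same substitution $(\hat 1_X(r))^4\mapsto|\hat 1_X(r)|^4$, so this is not a gap you introduced, but be aware that the $4X$ form of the lemma really only works cleanly over $\F_2$; for general $\F_p$ one should use the $2X-2X$ form, replacing $1_X^{*4}$ by $1_X*1_X*1_{-X}*1_{-X}$, whose Fourier coefficients are $|\hat 1_X(r)|^4$ on the nose — exactly as the paper does in the quasi-polynomial version (\cref{thm:bogolyubov-quasipoly}) and in the local correction lemma.
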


\begin{proof}
    Note first that by Parseval's identity we have $\alpha = \ip{1_X,1_X} = \norm{1_X}_2^2 = \sum_{r} \abs{\hat{1}_X(r)}^2$.
    In particular, for $R = \{r \in \F^n \setminus \{0\} : \abs{\hat{1}_X(r)}^2 > \alpha^3\}$
    we have $|R| \leq \frac{\alpha}{\alpha^3} = \frac{1}{\alpha^2}$.
    Furthermore, we have
    \begin{equation*}
      \sum_{r \in \F^n \setminus (R \cup \{0\})} \abs{\hat{1}_X(r)}^4
      \leq \alpha^3 \cdot \sum_{r \in \F^n \setminus (R \cup \{0\})} \abs{\hat{1}_X(r)}^2
      \leq \alpha^3 (\alpha - \alpha^2) \leq \alpha^4 - \alpha^5
      \enspace,
    \end{equation*}
    where the second inequality uses that $\sum_r \abs{\hat{1}_X(r)}^2=\alpha$, and $\abs{\hat{1}_X(0)}^2=\alpha^2$.
    
    Noting that for every $v \in V$ we have $\chi_r(v) = \omega^{\ip{v,r}} = \omega^0 = 1$ for all $r \in R$,
    it follows that
    \begin{eqnarray*}
        \Pr_{x_1,x_2,x_3 \in \F^n}[x_1,x_2,x_3,v-x_1-x_2-x_3 \in V]
        & = & (1_X* 1_X * 1_X * 1_X)(v) \\
        & = & \sum_{r \in \F^n} (\hat{1}_X(r))^4 \chi_r(v) \\
        & = & \abs{\hat{1}_X(0)}^4 \chi_0(v) + \sum_{r \in R} \abs{\hat{1}_X(r)}^4 \chi_r(v) \\
        && + \sum_{r \in \F^n \setminus (R \cup \{0\})} \abs{\hat{1}_X(0)}^4 \chi_r(v) \\
        & \geq & \alpha^4 + \abs{R} \cdot \alpha^6 - (\alpha^4 - \alpha^5) \\
        & \geq & \alpha^5
        \enspace,
    \end{eqnarray*}
    as required.
\end{proof}

In fact, the foregoing lemma suffices for our application for worst-case to average-case reductions where the success rate $\alpha$ is \emph{a constant}. However, to also allow for success rates that tend to zero, we shall need a much stronger statement of the form of the quasi-polynomial Bogolyubov-Ruzsa lemma, due to Sanders~\cite{Sanders12} (see also \cite{Lovett15, BenSassonRZTW14}), which admits an exponentially better dependency on $\alpha$, albeit without the efficient sampling property.

\begin{lemma}[Quasi-polynomial Bogolyubov-Ruzsa lemma~\cite{Sanders12}]
\label{thm:bogolyubov-quasipoly1}
    Let $\F = \F_p$ be a prime field,
    and let $X \seq \F^n$ be a set of size $\alpha \cdot \abs{\F}^n$ for some $\alpha \in (0,1]$.
    There exists a subspace $V \seq \F^n$ of dimension $\dim(V) \geq n - O(\log^4(1/\alpha))$
    such that $V \seq 2X - 2X$.
\end{lemma}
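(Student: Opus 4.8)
The statement is the quasi-polynomial Bogolyubov–Ruzsa lemma of Sanders~\cite{Sanders12}, and the plan is to follow his argument (as laid out in the expositions~\cite{Lovett15,BenSassonRZTW14}), adapted from $\F_2^n$ to $\F_p^n$. The first step is a Fourier reduction identifying exactly what must be shown. Write $\mu_X \coloneqq 1_X/\alpha$, so $\E\,\mu_X = 1$, and set $g \coloneqq \mu_X * 1_{-X}$ and $h \coloneqq g * g = \mu_X * 1_{-X} * \mu_X * 1_{-X}$. Then $h$ is nonnegative, supported on $2X - 2X$, has $\hat{h}(r) = \alpha^2 \abs{\widehat{\mu_X}(r)}^4 \geq 0$, and (as $g$ is even) $h(0) = \norm{g}_2^2 \geq (\E\,g)^2 = \alpha^2 > 0$. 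Hence it suffices to produce a subspace $V \seq \F^n$ with $\codim(V) = O(\log^4(1/\alpha))$ on which $h$ is strictly positive; since $h(v) = \sum_r \hat{h}(r)\chi_r(v)$ with $\hat{h} \geq 0$ and $\sum_r \hat h(r) = h(0)$, this in turn reduces to guaranteeing $\sum_{r \neq 0} \hat h(r)\bigl(1 - \mathrm{Re}\,\chi_r(v)\bigr) < h(0)$ for every $v \in V$, i.e.\ to confining the non-trivial Fourier mass of $\mu_X$ to the annihilator of $V$.

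The probabilistic Bogolyubov lemma proved above already does this with one linear constraint per element of $\Spec_{\sqrt\alpha}(1_X)$, of which there are only $O(1/\alpha^2)$ by Parseval --- a \emph{polynomial}-codimension subspace, which is the baseline we must improve exponentially. The key tool for that is the Croot--Sisask almost-periodicity lemma: applied with $A = X$ and $S = -X$ and parameters $\epsilon \in (0,1)$, $k \in \N$, it yields a set $T \seq \F^n$ of density at least $\tfrac12\alpha^{O(k/\epsilon^2)}$ such that every $t \in T$ is an $\epsilon$-almost-period of $g$ in $L^{2k}$, hence every element of $T+T-T-T$ is a $4\epsilon$-almost-period. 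Expanding $\E_{t \in T}\norm{g(\cdot + t) - g}_2^2$ in characters converts this into the Fourier statement that $\abs{\widehat{\mu_X}(r)}$ (and a fortiori $\hat h(r)$) is negligible for every $r$ outside the near-full spectrum $\Lambda \coloneqq \Spec_{1-\eta}(1_T)$, for a suitable $\eta$; and a structure theorem for near-full spectra --- the point where the quasi-polynomial saving over Parseval/Chang enters --- shows that $\Lambda$ is essentially a subspace, spanning one of dimension only $\mathrm{polylog}$ in the inverse density of $T$.

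A single pass through this chain does not yet give the $\log^4$ bound; Sanders' argument is iterative. One shows that each pass either certifies a subspace $V$ of the desired codimension with $\sum_{r \neq 0}\hat h(r)(1 - \mathrm{Re}\,\chi_r(v)) < h(0)$ for all $v \in V$ --- equivalently $h > 0$ on $V$, i.e.\ $V \seq 2X - 2X$ --- or else produces a subspace (Bohr set of bounded rank) of codimension larger by an additive $\mathrm{polylog}(1/\alpha)$ on which the relative density of $X$ has strictly increased; since the density cannot exceed $1$, the process halts after $O(\log(1/\alpha))$ rounds, and the accumulated codimension is $O(\log^4(1/\alpha))$. Taking $V$ to be the final subspace, the spectral-tail estimate then forces $h(v) > 0$, hence $V \seq 2X - 2X$, as required.

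The main obstacle is precisely this quantitative gain: it is not reachable from Parseval or Chang's lemma alone (which yield only polynomial codimension), and requires the full strength of Croot--Sisask --- including the $L^{2k}$ machinery with $k$ of order $\log(1/\alpha)$ --- together with a careful optimisation of $\epsilon$, $k$, $\eta$, and of the number of iterations in the density-increment loop. The passage from $\F_2$ to $\F_p$ is not an obstacle: characters now take values in $p$-th roots of unity rather than $\{\pm1\}$, which affects only constants; in particular $2X - 2X$ is the natural four-fold signed sumset in this setting, matching the convolution $h$ above. The remaining ingredients --- the proof of Croot--Sisask over $\F_p^n$, the near-full-spectrum structure theorem, and the bookkeeping of the iteration --- are technical but standard.
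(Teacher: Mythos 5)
Your Fourier reduction---working with $h = g * g$ for $g = \mu_X * 1_{-X}$, noting $\hat h(r) = \alpha^2\abs{\widehat{\mu_X}(r)}^4 \geq 0$, and observing that it suffices to confine the non-trivial Fourier mass to $V^\perp$---is correct, and you are right that Croot--Sisask almost-periodicity is the mechanism that beats the Parseval/Chang baseline. The defect is the claim that ``a single pass through this chain does not yet give the $\log^4$ bound'' and that Sanders' argument runs an iterative density-increment loop over cosets. That is not what Sanders does in the finite-field model, nor what \cref{sec:appendix-sanders} of this paper does: the proof is a \emph{single pass}. Croot--Sisask with $p = \log_2(1/\alpha)$, $t = \Theta(\log(1/\alpha))$, $\eps = 1/(40t)$ yields an almost-period set $T$ (called $X$ in the appendix) of density at least $(\alpha/2)^{O(p/\eps^2)} = \alpha^{O(\log^3(1/\alpha))}$, i.e.\ $\log(\abs{\F}^n/\abs{T}) = O(\log^4(1/\alpha))$. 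Chang's lemma applied \emph{to $T$} (not to $X$) with constant threshold $\gamma = 1/2$ then bounds $\dim(\sp(\Spec_{1/2}(\varphi_T)))$ by $O(\log(\abs{\F}^n/\abs{T})/\gamma^2) = O(\log^4(1/\alpha))$, and $V = \Spec_{1/2}(\varphi_T)^\perp$ is already the subspace you need. The entire quasi-polynomial saving comes from the fact that the log-inverse-density of the Croot--Sisask set, not of the original $X$, is what feeds into Chang; no density increment occurs. Had you tried to implement one, it would not even trigger, since the one-pass estimate already forces $h > 0$ on $V$.

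A secondary imprecision: the ``structure theorem for near-full spectra'' showing $\Lambda = \Spec_{1-\eta}(1_T)$ is ``essentially a subspace'' is not an ingredient. What is used is Chang's lemma applied to the $1/2$-level spectrum of the \emph{normalized} indicator $\varphi_T$, which bounds the dimension of the \emph{span} of that spectrum; there is no near-$1$ threshold and no approximate-subgroup structure claim being invoked. You are right that the passage from $\F_2$ to $\F_p$ is routine (characters become $p$-th roots of unity, so one tracks $\mathrm{Re}\,\chi_r(v)$ rather than $\pm 1$, and $2X-2X$ is the natural signed sumset matching the convolution $g*g$), and right about the role of the $L^{2k}$ formulation of Croot--Sisask with $k = \Theta(\log(1/\alpha))$.
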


The caveat, however, is that while in \cref{thm:bogolyubov-quasipoly1} the codimension of $V$ is only \emph{polylogarithmic} in $1/\alpha$ (as opposed to polynomial, as in the probabilistic Bogolyubov lemma), it only guarantees that for each $v \in V$ \emph{there exist} $x_1,x_2,x_3 ,x_4 \in X$ such that $x_1+x_2+x_3+x_4=v$.

Hence, we further show that each vector $v \in V$ has many ``representations'' in $2X-2X$. In particular, for any $v \in V$ we can efficiently sample a representation $v = x_1 + x_2 - x_3 - x_4$, where each $x_i \in X$. We are grateful to Tom Sanders for providing us with a modification of his proof that admits a probabilistic version of the quasi-polynomial Bogolyubov-Ruzsa lemma.
Furthermore, we rely on the Goldreich-Levin algorithm and the techniques in \cite{BenSassonRZTW14} to obtain an efficient algorithm for verifying membership in the implied subspace. This yields the main technical tool that underlies our local correction paradigm.

\begin{restatable}[Probabilistic quasi-polynomial Bogolyubov-Ruzsa lemma]{lemma}{sandersThm}\label{thm:bogolyubov-quasipoly}
    Let $\F = \F_p$ be a prime field, and let $A \seq \F^n$ be a set of size $\abs{A} = \alpha \cdot \abs{\F}^n$, for some $\alpha \in (0,1]$. Then, there exists a subspace $V \seq \F^n$ of dimension $\dim(V) \geq n - O(\log^4(1/\alpha))$ such that for all $v \in V$ it holds that
    \begin{equation*}
        \Pr_{a_1,a_2,a_3 \in \F^n}[a_1, a_2 \in A, a_3, a_4 \in -A] \geq \Omega(\alpha^5)
        \enspace,
    \end{equation*}
    where $a_4 = v-a_1-a_2-a_3$.
    Furthermore, given a query access to the set $A$, there is an algorithm
    that runs in time $\exp(\log^4(1/\alpha)) \cdot \poly\log(1/\delta) \cdot \poly(n)$ and with probability $1-\delta$ computes a set of vectors $R \seq \F^n$ such that $V = \{v \in \F^n : \ip{v,r} = 0 \; \forall r \in R\}$.
\end{restatable}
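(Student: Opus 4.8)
The plan is to bootstrap the probabilistic quasi-polynomial Bogolyubov-Ruzsa lemma from two ingredients: (i) the \emph{existential} quasi-polynomial Bogolyubov-Ruzsa lemma of Sanders (\cref{thm:bogolyubov-quasipoly1}), which gives a subspace $V$ of codimension $O(\log^4(1/\alpha))$ contained in $2A-2A$, and (ii) a Fourier-analytic counting argument, analogous to the proof of the Probabilistic Bogolyubov lemma above, that upgrades ``$v\in 2A-2A$'' to ``$v$ has $\Omega(\alpha^5)$-fraction of representations as $a_1+a_2-a_3-a_4$ with $a_i\in A$.'' The key point is that in Sanders' argument the subspace $V$ is not arbitrary — it is (up to a bounded-index refinement) the annihilator of the large spectrum of $1_A$, i.e.\ $V=\{v : \chi_r(v)=1 \ \forall r\in\Spec_\rho(1_A)\}$ for a suitable threshold $\rho=\poly(\alpha)$, where $\Spec_\rho(1_A)=\{r : |\hat 1_A(r)|\geq \rho\}$. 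One should follow Sanders' chain of Croot–Sisask / almost-periodicity steps but carry along, at each stage, a lower bound on the number of representations rather than mere nonemptiness; Tom Sanders' sketch (to be given in \cref{sec:appendix-sanders}) is exactly this modification. First I would write the representation count for $v\in V$ as the normalized convolution
\begin{equation*}
    \Pr_{a_1,a_2,a_3}[a_1,a_2\in A,\ a_3,a_4\in -A] = (1_A * 1_A * 1_{-A} * 1_{-A})(v) = \sum_{r\in\F^n} \hat 1_A(r)^2\,\overline{\hat 1_A(r)}^2\,\chi_r(v) = \sum_{r} |\hat 1_A(r)|^4 \chi_r(v),
\end{equation*}
and split the sum over $r=0$, over $r\in \Spec_\rho(1_A)$ (where $\chi_r(v)=1$ since $v\in V$, so these terms are nonnegative and only help), and over the tail $r\notin \Spec_\rho\cup\{0\}$, which is bounded in absolute value by $\rho^2\sum_r|\hat 1_A(r)|^2 = \rho^2\alpha$. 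For this crude split to give a positive bound one needs the main term $\alpha^4$ (from $r=0$) to dominate $\rho^2\alpha$, i.e.\ $\rho$ polynomially small in $\alpha$ — but that is precisely the regime of Sanders' lemma, where the spectral threshold is $\poly(\alpha)$ while the codimension is only $\poly\log(1/\alpha)$. If the naive $\ell_4$-tail bound is too lossy (because $\|1_A\|_\infty$ control of the convolution is not directly $\alpha^4$ once we pass to the refined subspace), the fallback is to use the almost-periodicity conclusion of Sanders' proof directly: it yields a set $B$ with $|B|\geq \alpha\cdot 2^n$ such that $B+B+B+B$ (or $2B-2B$) contains $V$ \emph{with multiplicity}, and then $1_A \geq 1_B$ pointwise up to normalization gives the representation bound for $A$.

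The second half of the statement is the algorithmic claim: given query access to $1_A$, compute a set $R$ with $V=\{v:\ip{v,r}=0\ \forall r\in R\}$ in time $\exp(\log^4(1/\alpha))\cdot\poly\log(1/\delta)\cdot\poly(n)$. Here the plan is to invoke the Goldreich–Levin / Kushilevitz–Mansour algorithm, generalized to $\F_p^n$, to find (with probability $1-\delta$) a list containing all $r$ with $|\hat 1_A(r)|\geq \rho$: since $\sum_r|\hat 1_A(r)|^2=\alpha$, there are at most $\alpha/\rho^2 = \poly(1/\alpha)$ such characters, and Goldreich–Levin finds them with $\poly(n,1/\rho)\cdot\poly\log(1/\delta) = \poly(n)\cdot\exp(\log^4(1/\alpha))\cdot\poly\log(1/\delta)$ queries and field operations (using the estimator $\hat 1_A(r)=\E_x[1_A(x)\overline{\chi_r(x)}]$ via random sampling, amplified by median-of-means). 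This recovers $\Spec_\rho(1_A)$; to get the actual $V$ from Sanders' proof one additionally needs the bounded-codimension refinement inside this spectrum's annihilator, which can be handled by taking $R$ to be the recovered spectrum together with $O(\log^4(1/\alpha))$ explicitly computed extra vectors (or by simply noting that the annihilator of the whole recovered spectrum already has codimension $\leq \alpha/\rho^2$, and a finer analysis — following \cite{BenSassonRZTW14} — shows one can thin it to codimension $O(\log^4(1/\alpha))$ while preserving the representation guarantee).

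I expect the main obstacle to be faithfully tracking the multiplicity (representation count) through Sanders' proof rather than just nonemptiness: Sanders' argument is a delicate iteration of the Croot–Sisask almost-periodicity lemma, and at the step where one passes from ``$A$ is $\eps$-almost-periodic along a large set'' to ``$2A-2A \supseteq V$'', the standard presentation discards quantitative information. The fix is the one attributed to Sanders in the acknowledgments — redoing the final deduction so that the $L^\infty$ almost-periodicity translates into a pointwise lower bound $(1_A*1_A*1_{-A}*1_{-A})(v)\geq \Omega(\alpha^5)$ for $v\in V$ — and a secondary subtlety is making sure the subspace $V$ produced by that quantitative argument is still (the annihilator of) a set of characters that Goldreich–Levin can recover, so that the existential and algorithmic halves are about the \emph{same} $V$. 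Both of these are resolved in the detailed treatment deferred to \cref{sec:appendix-sanders}; everything else (Parseval bookkeeping, the median-of-means amplification for Fourier estimation, the union bound over the $\poly(1/\alpha)$ recovered characters) is routine.
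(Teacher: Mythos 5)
Your first and most concrete plan — identifying $V$ with $\Spec_\rho(1_A)^\perp$ for a polynomially small threshold $\rho$ and then running the $\ell_4$ Fourier split — is the proof of the \emph{polynomial} Probabilistic Bogolyubov lemma (Section~3.1), not the quasi-polynomial one, and it cannot be upgraded. To beat the tail term $\rho^2\alpha$ against the main term $\alpha^4$ you need $\rho \lesssim \alpha^{3/2}$, but then $|\Spec_\rho(1_A)|$ can be as large as $\alpha/\rho^2 \gtrsim \alpha^{-2}$, so the annihilator has codimension polynomial in $1/\alpha$. Sanders' gain comes precisely from \emph{not} taking the spectrum of $1_A$: the paper's proof (Appendix~\ref{sec:appendix-sanders}) first defines the set $D$ of $\alpha^2/20$-popular differences, then applies Croot–Sisask to $\varphi_A * 1_D$ to produce a much sparser set $X$ of density $\alpha^{O(\log^3(1/\alpha))}$ whose translates almost-fix $\varphi_A * 1_D$, and only then takes $V = \Spec_{1/2}(X)^\perp$ — with a \emph{constant} threshold, so that Chang's lemma applied to $X$ (not $A$) gives $\codim(V) = O(\log(1/\text{dens}(X))) = O(\log^4(1/\alpha))$. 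Your sentence asserting $V$ is the annihilator of the large spectrum of $1_A$ is the crux of the gap.

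The multiplicity bound then does \emph{not} come from a direct four-fold Fourier sum at all. The paper establishes (Lemma~\ref{lemma:sanders-step2}) that $\Pr_{a\in A,\,v\in V}[v+a+b \in D]\geq 0.85$ for some shift $b$, derives a large symmetric set $C$ of coset representatives whose cosets $V+c$ are $0.8$-covered by $D$, and from this counts for each $v\in V$ at least $\Omega(\alpha)|\F|^n$ decompositions $v = d_1 + d_2$ with $d_1,d_2\in D$; unwinding the definition of $D$ (each $d_i$ has $\geq \delta |\F|^n$ representations as a difference of elements of $A$) gives $\Omega(\alpha\cdot\delta^2) = \Omega(\alpha^5)$. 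You gesture at "carrying the multiplicity through Sanders' chain" and "the fix attributed to Sanders," but you don't identify the specific mechanism (replace $A-A$ by the popular-difference set $D$, and run the coset-counting), nor the surprise that the final convolution bound never appears as a Fourier identity. Your fallback paragraph about a set $B$ with $1_A \geq 1_B$ is not a recognizable step in the argument.

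The algorithmic half inherits the same error: since $V = \Spec_{1/2}(X)^\perp$, the Goldreich–Levin invocation must be on (a query oracle for) $X$, not on $1_A$. Constructing that oracle is nontrivial — one tests membership $x\in X$ by estimating $\|\varphi_x * \varphi_A * 1_D - \varphi_A * 1_D\|_p$ — and this is where the $\exp(\log^4(1/\alpha))\cdot\poly(n)$ query budget and the ideas from \cite{BenSassonRZTW14} actually enter. Running Goldreich–Levin on $1_A$ and then "thinning" the recovered spectrum, as you propose, would not produce the correct $V$.
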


We defer the proof of \cref{thm:bogolyubov-quasipoly} to \cref{sec:appendix-sanders}.

\subsection{Local correction lemma}
Using the probabilistic quasi-polynomial Bogolyubov-Ruzsa lemma (i.e., \cref{thm:bogolyubov-quasipoly}), for any vector $v \in V$ we can efficiently sample $x_1, x_2, x_3, x_4 \in X$ such that we can write
\begin{equation*}
    v = x_1 + x_2 - (x_3 + x_4) \;.
\end{equation*}

However, we need to be able to handle any vector $y \in \F^n$, and not just vectors in the subspace $V$. Towards that end, we show that since the subspace implied by the probabilistic quasi-polynomial Bogolyubov-Ruzsa lemma is of large dimension (i.e., of dimension $\dim(V) \geq n - O(\log^4(1/\alpha))$), we can decompose any vector $y\in\F^n$ as a linear combination of the form
\begin{equation*}
    y = x_1 + x_2 - (x_3 + x_4) + s
    \;,
\end{equation*}
where $x_1,x_2,x_3,x_4 \in X$ and $s \in \F^n$ is a \emph{sparse} vector. We stress that the sparsity of the decomposition is essential to our applications, as we cannot locally correct the shift part of the decomposition, and instead we need to compute it explicitly.

The above captures our local correction lemma which will be used throughout this paper.

\begin{lemma}[Efficient local correction]\label{cor:y=4x+s}
    Let $\F = \F_p$ be a prime field, and let $X \seq \F^n$ be a set of size $\abs{X} = \alpha \cdot \abs{\F}^n$, for some $\alpha \in (0,1]$. Then, there exists a non-negative integer $t \leq O(\log^4(1/\alpha))$, a~collection of $t$ vectors $B = \{b_1,\dots,b_t \in \F^n\}$, and $t$ indices $k_1,\dots, k_t \in [n]$ satisfying the following:
    \begin{description}
      \item Given a vector $y \in \F^n$, define $s = \sum_{j=1}^t \ip{y, b_j} \cdot \vec{e}_{k_j}$
            where $(\vec{e}_i)_{i \in [n]}$ is the standard basis.
            Then
            \begin{equation*}
                \Pr_{x_1,x_2,x_3 \in \F^n}[ x_1, x_2 \in X, x_3, x_4  \in -X] \geq \Omega(\alpha^5)
                \enspace,
            \end{equation*}
    \end{description}
    where $x_4=y - s - x_1 - x_2 - x_3$.
    
    Furthermore, suppose we have a randomized membership oracle $O_X$ that for every input $x \in \F^n$, computes the indicator $1_X(x)$ correctly with probability at least $2/3$. Then, there exists an algorithm that makes $\exp(\log^4(1/\alpha)) \cdot \poly\log(1/\delta) \cdot \poly(n)$ oracle calls to $O_X$, performs $\exp(\log^4(1/\alpha)) \cdot \poly\log(1/\delta) \cdot \poly(n)$ field operations, and with probability at least $1-\delta$ returns vectors $b_1,\dots,b_t \in \F^n$ and indices $k_1,\dots, k_t \in [n]$ as described above.
\end{lemma}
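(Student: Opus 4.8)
The statement combines two parts: an existential sparsification of the subspace $V$ from \cref{thm:bogolyubov-quasipoly} into a sparse-shift decomposition, and an algorithmic claim that the relevant data $(b_j,k_j)$ can be computed from a noisy membership oracle. The plan is to derive the first part purely from linear algebra applied to the set $R$ of ``heavy'' directions guaranteed by \cref{thm:bogolyubov-quasipoly}, and to derive the second part by combining the algorithmic conclusion of \cref{thm:bogolyubov-quasipoly} (which already produces $R$ with high probability from an exact query oracle) with standard success-amplification to convert the noisy oracle $O_X$ into a reliable one.

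\textbf{Step 1: sparse-shift subspace decomposition.} Apply \cref{thm:bogolyubov-quasipoly} to $X$ to obtain a set $R = \{r_1,\dots,r_t\} \seq \F^n$ with $t = O(\log^4(1/\alpha))$ such that $V = \{v : \ip{v,r}=0 \ \forall r \in R\}$ satisfies the sampling guarantee $\Pr[x_1,x_2 \in X,\, x_3,x_4 \in -X] \geq \Omega(\alpha^5)$ for every $v \in V$, where $x_4 = v - x_1 - x_2 - x_3$. Now I want, for each $y \in \F^n$, a sparse vector $s$ such that $y - s \in V$. The key algebraic fact is: there is a collection $B = \{b_1,\dots,b_t\}$ with $\sp(B) = \sp(R)$ and indices $k_1,\dots,k_t \in [n]$ such that the matrix whose rows are the $b_j$, restricted to columns $k_1,\dots,k_t$, is upper-triangular with nonzero diagonal — equivalently, $\ip{b_j, \vec{e}_{k_i}} = 0$ for $i < j$ and $\ip{b_j,\vec{e}_{k_j}} \neq 0$. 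This is obtained by Gaussian elimination on the $t \times n$ matrix with rows $r_1,\dots,r_t$: perform row reduction to echelon form, letting $k_1 < \dots < k_t$ be the pivot columns and $b_1,\dots,b_t$ the reduced rows (rescaled so the pivot entry is $1$, which we may do over a field). Given $y$, set $s = \sum_{j=1}^t \ip{y,b_j}\vec{e}_{k_j}$; then I claim $y - s \in V$, i.e. $\ip{y-s, b_j} = 0$ for all $j$, which one checks by downward induction on $j$ using the triangular structure: $\ip{y-s,b_t} = \ip{y,b_t} - \ip{y,b_t}\ip{\vec{e}_{k_t},b_t} = 0$, and similarly for smaller $j$ once the higher-index terms vanish. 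Since $\sp(B) = \sp(R)$, having $y-s$ orthogonal to all $b_j$ is equivalent to $y - s \in V$. Writing $v = y - s$ and invoking the sampling guarantee for $v$ with $x_4 = v - x_1 - x_2 - x_3 = y - s - x_1 - x_2 - x_3$ gives exactly the stated probability bound.

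\textbf{Step 2: the algorithm.} The algorithmic half of \cref{thm:bogolyubov-quasipoly} already states that, given \emph{query access} to $X$, there is an algorithm running in time $\exp(\log^4(1/\alpha)) \cdot \poly\log(1/\delta) \cdot \poly(n)$ that with probability $1-\delta$ outputs $R$. Two adjustments are needed. First, replace the exact oracle by the noisy oracle $O_X$: run $O_X$ on each query $O(\log(Q/\delta))$ times and take the majority, where $Q$ is the total number of oracle calls the algorithm makes; by a Chernoff bound and a union bound this simulates an exact oracle to all queries with probability $1-\delta/2$, at the cost of only a $\poly\log(1/\delta)$ (and $\poly(n)$, since $Q$ is polynomial in the stated bound) multiplicative overhead, which is absorbed into the claimed complexity. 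Second, after obtaining $R$, run the Gaussian-elimination step of Step 1, which costs $O(t^2 n) = \poly(n)$ field operations, to produce $b_1,\dots,b_t$ and $k_1,\dots,k_t$; this is dominated by the bound already claimed. A final union bound over the two failure events completes the argument.

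\textbf{Main obstacle.} The genuinely new content — the probabilistic quasi-polynomial Bogolyubov-Ruzsa lemma with its efficient computation of $R$ — is assumed as \cref{thm:bogolyubov-quasipoly}, so no hard work remains there. The only point requiring care is Step 1: making sure the echelon-form basis $B$ has $\sp(B) = \sp(R)$ (so that orthogonality to $B$ really characterizes $V$) \emph{and} simultaneously has the triangular incidence structure with the chosen unit vectors $\vec{e}_{k_j}$, so that the explicit formula $s = \sum_j \ip{y,b_j}\vec{e}_{k_j}$ genuinely lands $y-s$ in $V$; this is routine linear algebra but is the crux of why a \emph{sparse} shift of support size exactly $t$ suffices. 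I would also double-check the bookkeeping that $t \leq O(\log^4(1/\alpha))$ propagates from $\dim(V) \geq n - O(\log^4(1/\alpha))$, i.e. that $\abs{R}$ can be taken equal to the codimension.
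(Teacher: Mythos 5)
Your overall strategy mirrors the paper's exactly: apply \cref{thm:bogolyubov-quasipoly} to get $R$ and $V$, produce a basis $B$ of $\sp(R)$ in a pivoted form together with pivot indices $k_j$, use these to define the sparse shift $s$, and for the algorithmic part amplify the noisy oracle $O_X$ by majority voting before running the algorithm from \cref{thm:bogolyubov-quasipoly} followed by the diagonalization. However, Step 1 has a genuine gap in the verification that $y-s \in V$, precisely at the point you flag as the crux.

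You take the $b_j$ to be in (row) \emph{echelon} form, with the triangular incidence $\ip{b_j,\vec{e}_{k_i}}=0$ only for $i<j$ and $\ip{b_j,\vec{e}_{k_j}}\neq 0$. That is not enough for the closed-form shift $s=\sum_{j=1}^t \ip{y,b_j}\vec{e}_{k_j}$ prescribed in the lemma. Writing $c_j=\ip{y,b_j}$, one gets
\begin{equation*}
\ip{y-s,b_j} \;=\; c_j-\sum_{i=1}^t c_i\, b_j[k_i] \;=\; -\sum_{i>j} c_i\, b_j[k_i]\,,
\end{equation*}
since echelon form only kills the $i<j$ terms and normalization handles $i=j$; the $i>j$ terms survive, and already for $j=t-1$ you get $\ip{y-s,b_{t-1}}=-c_t\, b_{t-1}[k_t]$, which is generically nonzero. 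The ``downward induction'' you describe implicitly imagines peeling off pivot coordinates one at a time with recomputed inner products against the partially-corrected vector, but that would give a \emph{different} (iteratively defined) $s$, not the fixed formula $\sum_j \ip{y,b_j}\vec{e}_{k_j}$ required by the statement. The fix, and what the paper actually does, is to take the \emph{fully reduced} row echelon form (RREF): demand $b_j[k_{j'}]=0$ for \emph{all} $j'\neq j$, not just $j'<j$. Then every cross term in $\sum_i c_i b_j[k_i]$ vanishes, $\ip{y-s,b_j}=0$ holds for every $j$ directly with no induction, and since $\sp(B)=\sp(R)$ this places $y-s$ in $V$. This change is cheap (one more pass of Gaussian elimination, still $O(t^2 n)$ field operations) and leaves the rest of your argument intact; in particular your accounting of $t\leq O(\log^4(1/\alpha))$ and the oracle-amplification step are both fine and match the paper.
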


\begin{proof}
    Fix a set $X \seq \F^n$ of size $\abs{X} = \alpha \cdot \abs{\F}^n$ for some $\alpha \in (0,1]$.
    By applying \cref{thm:bogolyubov-quasipoly}, we obtain a subspace $V \seq \F^n$ of dimension $\dim(V) = n - t$ for $t = O(\log^4(1/\alpha))$.
    Let $R \subseteq \F_2^n \setminus \{\vec{0}\}$ be a set of vectors in $\F^n$ of size $t$ such that $V = \{v \in \F_2^n : \ip{v,r} = 0 \; \forall r \in R\}$. Indeed, we can let $R$ be a set of $t$ linearly independent vectors in $V^\perp$.
    
    By writing the vectors of $R$ in a matrix and diagonalizing the matrix,
    we obtain: (1) a set of vectors $B = \{b_1,\dots, b_t \in \F_2^n\}$ such that $\sp(B) = \sp(R)$,
    and (2) the corresponding pivot indices $k_1,\dots,k_t \in [n]$
    such that $b_j[k_j] = 1$ and $b_j[k_{j'}] = 0$ for all $j \neq j'$.

    Given a vector $y \in \F^n$, define $s = \sum_{j=1}^t \ip{y, b_j} \cdot \vec{e}_{k_j}$, where $(\vec{e}_i)_{i \in [n]}$ is the standard basis,
    and let $v = y-s$.
    It is straightforward to verify that $v \in V$.
    Then for any $j \in [t]$ we have
    \begin{equation*}
        \ip{v, b_j}
        = \ip{y, b_j}  - \sum_{j=1}^t c_j \cdot \ip{\vec{e}_{k_j},b_j}
        \overset{\text{(*)}}{=} \ip{y, b_j}  - c_j \cdot \ip{\vec{e}_{k_j},b_j}
        \overset{\text{(**)}}{=} \ip{y, b_j}  - \ip{y, b_j} = 0
        \enspace,
    \end{equation*}
    where (*) is because $\ip{\vec{e}_{k_{j'}},b_j} = b_j[i_{j'}] = 0$ for $j \neq j'$,
    and (**) is because $\ip{\vec{e}_{k_j},b_j} = b_j[i_{j}] = 1$.
    
    Now, since $v \in V$, by the guarantees of \cref{thm:bogolyubov-quasipoly} it follows that 
    \begin{equation*}
        \Pr_{x_1,x_2,x_3 \in \F^n}[ x_1 \in X, x_2 \in X, x_3 \in -X,v - x_1 - x_2 - x_3 \in -X] \geq \Omega(\alpha^5)
        \enspace.
    \end{equation*}
    
    For the furthermore part, note that we can boost the success probability of the membership oracle $O_X$. That is, given a query $x$ we can decide if $x \in X$ with confidence $1 - \frac{1}{t}$ be repeatedly calling it $O(\log(t))$ times and taking the majority vote. In particular, for $t = \exp(\log^4(1/\alpha)) \cdot \poly\log(1/\delta) \cdot \poly(n)$, by making $O(\poly\log(1/\alpha) + \log(1/\delta) + \log(n))$ calls to $O_X(x)$ for each element $x \in \F^n$ we need to query, we may assume that all queries output whether $x \in X$ or not correctly.

    The furthermore part of the lemma follows immediately from the computational guarantees of \cref{thm:bogolyubov-quasipoly} together with the diagonalization procedure described above.
\end{proof}

\section{Worst-case to average-case reductions for matrix multiplication}\label{sec:MM-reduction}
We prove the worst-case to average-case reduction for matrix multiplication problem in this section. Let's restate
\cref{thm:mm-alg-reduction-intro} below.

\mmreduction*

We divide the proof into two parts, namely for when $\abs{\F}\geq \alpha/2$ and when $\abs{\F}\leq \alpha/2$.
The proof for the former case is given in \cref{sec:mm-large-fields} and the proof of the latter is given in \cref{sec:mm-small-fields}.

We would like to point out that when the field size is large enough, we can use the standard interpolation
techniques for low-degree polynomials to prove the reduction. However, the problem becomes more challenging
when the field size is small (say, $\F = \F_2$), and showing the reduction in this case requires novel ideas. We will discuss 
both cases in detail in the following sections.

\subsection{Reduction for matrices over small fields}\label{sec:mm-small-fields}
In this section, we show a worst-case to average-case reduction for matrix multiplication problem over small fields, namely, where $\abs{\F} \leq 2/\alpha$, where $\alpha$ is the success rate of the average-case algorithm.
Informally, we will show that if there exists an algorithm that is able to compute the multiplication for a small
percentage of matrices, then it is possible to boost this algorithm such that it works for all matrices,
without sacrificing the running time too much.
Before we proceed with the formal result, we need the following lemma known as Freivalds' algorithm.

\begin{lemma}[Freivalds' Algorithm~\cite{freivalds1977probabilistic}]\label{lemma:freivalds}
    Given matrices $A, B, C \in \F^{n \times n}$ there exist a probabilistic algorithm that verifies whether $A\cdot B = C$ with
    failure probability $2^{-k}$ where the algorithm runs in $O(kn^2)$.
\end{lemma}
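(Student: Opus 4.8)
The plan is to use the standard randomized verification via a random linear projection and then amplify by independent repetition. First I would describe the one-shot test: sample a uniformly random vector $r \in \F^n$, compute the vector $w = A\cdot(B\cdot r) - C\cdot r$ by performing three matrix–vector multiplications (each costing $O(n^2)$ field operations, and crucially \emph{not} forming the product $A\cdot B$), and accept if and only if $w = \vec{0}$. The total cost of one test is $O(n^2)$.

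For correctness, observe that if $A\cdot B = C$ then $w = (AB - C)r = \vec{0}$ for every $r$, so the test always accepts. The crux is to bound the error when $A\cdot B \neq C$. Writing $D = AB - C \neq \vec{0}$, there is some entry $D_{i,j} \neq 0$. Conditioning on the values of all coordinates of $r$ except $r_j$, the $i$-th coordinate $(Dr)_i = D_{i,j}\, r_j + \sum_{\ell \neq j} D_{i,\ell}\, r_\ell$ is an affine function of $r_j$ with nonzero leading coefficient, hence it vanishes for at most one value of $r_j \in \F$. Therefore $\Pr_r[(Dr)_i = 0] \le 1/\abs{\F} \le 1/2$, and a fortiori $\Pr_r[w = \vec{0}] = \Pr_r[Dr = \vec{0}] \le 1/2$.

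Finally I would repeat the test $k$ times with fresh independent randomness $r^{(1)},\dots,r^{(k)} \in \F^n$, accepting only if all $k$ tests accept. When $A\cdot B = C$ this always accepts; when $A\cdot B \neq C$, the probability that all $k$ tests accept is at most $2^{-k}$ by independence, which is the claimed failure probability. The running time is $k \cdot O(n^2) = O(kn^2)$, as required.

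There is no serious obstacle here; the only point requiring a little care is that the argument must go through over small fields such as $\F = \F_2$, where a single test has only constant success probability $1/2$ — which is precisely why the $k$-fold independent repetition is needed to drive the failure probability down to $2^{-k}$.
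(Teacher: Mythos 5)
Your proof is correct and is the standard argument for Freivalds' algorithm (random projection $r$, three matrix–vector products, the affine-in-$r_j$ argument for the soundness bound $1/\abs{\F}\le 1/2$, then $k$-fold independent repetition). The paper does not include its own proof of this lemma — it simply cites \cite{freivalds1977probabilistic} — so there is nothing in the paper to diverge from; your write-up supplies the standard justification the paper takes as known.
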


Throughout the proof, we will use Freivalds' algorithm to verify the result of matrix multiplication instances, 

In particular, it suffices to design an algorithm that given two matrices $A,B$ outputs their product with some non-negligible probability $\eps>0$. By repeating the algorithm $O(1/\eps)$ times, we can boost the probability of outputting the correct answer to a constant arbitrarily close to 1. This is done by applying Freivalds' algorithm on each of the outputs of the algorithm, rejecting incorrect outputs with high probability, and accepting when the correct answer is found.

We now demonstrate the main result of this section, which corresponds to the first case in \cref{thm:mm-alg-reduction-intro}.
\begin{theorem}\label{thm:mm-alg-reduction}
    Let $\F = \F_p$ be a prime field, $n \in \N$, and $\alpha\coloneqq\alpha(n) \in (0,1]$.
    Suppose that there exists an algorithm $\ALG$ that, on input two matrices $A, B \in \F^{n \times n}$ runs in time $T(n)$
    and satisfies
    \begin{equation*}
        \Pr[\ALG(A,B)=A\cdot B] \geq \alpha \,,
    \end{equation*}
    where the probability is taken over the random inputs $A,B\in \F^{n \times n}$ and the randomness of~$\ALG$.
 If $\abs{\F} \leq 2/\alpha$, then there exists a randomized algorithm $\ALG'$ that for \emph{every} input $A, B \in \F^{n \times n}$ and $\delta>0$, runs in time $O(\frac{\exp(O(\log^5(1/\alpha)))}{\delta} \cdot T(n))$ and outputs $AB$ with probability at least $1-\delta$.

\end{theorem}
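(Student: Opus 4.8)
\textbf{Proof plan for Theorem~\ref{thm:mm-alg-reduction}.}
The plan is to use the efficient local correction lemma (\cref{cor:y=4x+s}) twice, once to correct the left factor $A$ and once to correct the right factor $B$, bridging the two corrections with Freivalds' algorithm for verification. First, observe that the assumption $\Pr_{A,B}[\ALG(A,B)=A\cdot B]\ge\alpha$ together with an averaging (Markov) argument implies that at least an $\alpha/2$-fraction of matrices $A$ are ``good'' in the sense that $\Pr_{B}[\ALG(A,B)=A\cdot B]\ge\alpha/2$; call this set $\mathcal A\subseteq\F^{n\times n}$. Symmetrically, for each good $A$, the set $\mathcal B_A=\{B:\ALG(A,B)=A\cdot B\}$ has density $\ge\alpha/2$. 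The key point is that both $\mathcal A$ (as a subset of $\F^{n^2}$) and each $\mathcal B_A$ are dense sets to which \cref{cor:y=4x+s} applies, and membership in either set can be tested efficiently: given a candidate product, Freivalds' algorithm (\cref{lemma:freivalds}) verifies correctness in $O(kn^2)$ time with failure probability $2^{-k}$, so one call to $\ALG$ plus a Freivalds check gives a randomized membership oracle of the type required by the furthermore part of \cref{cor:y=4x+s}.

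The algorithm $\ALG'$ on input $(A,B)$ then proceeds as follows. In a preprocessing-like step, run the local correction algorithm of \cref{cor:y=4x+s} on the set $\mathcal A$ (using the Freivalds-based membership oracle described above, where testing ``$A'\in\mathcal A$'' is itself done by sampling a few random $B'$ and checking $\ALG(A',B')$), obtaining sparse-shift data $b_1,\dots,b_t$, $k_1,\dots,k_t$ with $t=O(\log^4(1/\alpha))$. This lets us write $A = A_1+A_2-(A_3+A_4)+S_A$ with $A_i\in\mathcal A$ and $S_A$ supported on $t$ coordinates (rows), and by sampling we get such a decomposition where each $A_i\in\mathcal A$ with probability $\Omega(\alpha^5)$. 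Now for each fixed good $A_i$ we recursively apply the same idea to $B$ relative to the set $\mathcal B_{A_i}$: decompose $B = B_1+B_2-(B_3+B_4)+S_B$ where each $B_j\in\mathcal B_{A_i}$ with probability $\Omega(\alpha^5)$, so that $\sum_j(-1)^{?}\ALG(A_i,B_j) = A_i\cdot(B_1+B_2-B_3-B_4)=A_i(B-S_B)$, and then add back $A_i\cdot S_B$. Since $S_B$ has only $t$ nonzero rows, $A_i\cdot S_B$ — more precisely the correction $A_i S_B$ viewed appropriately — costs $O(t n^2)$ to compute directly. Combining the two levels, $A\cdot B = \sum_{i,j}\pm\,\ALG(A_i,B_j) + (\text{sparse corrections involving }S_A, S_B)$, where the sparse corrections $S_A\cdot B$ and $A\cdot S_B$ and $S_A\cdot S_B$ are each computable in $O(tn^2)=O(\log^4(1/\alpha)\cdot n^2)$ field operations. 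Each of the $16$ terms $\ALG(A_i,B_j)$ is correct simultaneously with probability $\Omega(\alpha^{10})$ over the sampling, and we can verify the final candidate product against $AB$ using Freivalds' algorithm; hence repeating the sampling $O(\alpha^{-10}\log(1/\delta))$ times and keeping the first Freivalds-accepted output yields $AB$ with probability $\ge1-\delta$.

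For the running time: the one-time cost of running the local-correction algorithm on $\mathcal A$ (and on each $\mathcal B_{A_i}$) is $\exp(\log^4(1/\alpha))\cdot\poly\log(1/\delta)\cdot\poly(n)$ calls to the membership oracle, each of which is $O(\poly(1/\alpha))$ calls to $\ALG$ plus Freivalds checks, giving $\exp(O(\log^4(1/\alpha)))\cdot\poly(n)\cdot\poly\log(1/\delta)$ total; since the membership oracle for $\mathcal A$ requires testing candidate $A'$'s which in turn invokes the inner correction, one must be slightly careful about nesting, but the inner structural data depends only on $A_i$ and can be computed once per sampled $A_i$, and there are only $\poly(1/\alpha)\cdot\log(1/\delta)$ such $A_i$ across all repetitions. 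The dominant multiplicative cost in $T(n)$ comes from the $O(\alpha^{-10}\log(1/\delta))$ repetitions of the sampling loop, each invoking $\ALG$ a constant number of times, plus the $\exp(\log^4(1/\alpha))$ structural overhead; the sparse corrections and Freivalds checks contribute only additive $\poly(n,\log(1/\alpha))$ terms which are absorbed since $T(n)=\Omega(n^2)$. Folding the field-size assumption $|\F|\le 2/\alpha$ (which bounds the encoding length and is needed so that sampling uniform field elements in the shift directions is cheap), the total is $\frac{\exp(O(\log^5(1/\alpha)))}{\delta}\cdot T(n)$, where the extra $\log(1/\alpha)$ in the exponent accounts for the boosting of oracle calls by $O(\log t)=O(\log^4(1/\alpha))$ repetitions layered on the $\exp(\log^4(1/\alpha))$ structural cost; note we pay $1/\delta$ rather than $\log(1/\delta)$ because, to keep the analysis simple, the error of the sampling loop is driven down by running $O(\alpha^{-10}/\delta)$ independent attempts rather than using a sharper concentration bound.

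\textbf{Main obstacle.} The principal difficulty is the two-sided nature of the correction: the set of good right-factors $\mathcal B_A$ depends on $A$, and $\ALG$ is only guaranteed to behave well on $\mathcal A\times\bigcup_A\mathcal B_A$ in an averaged sense, so one must carefully set up the nesting of the two applications of \cref{cor:y=4x+s} — deciding membership in $\mathcal A$ already requires (randomized) interaction with $\ALG$ on random right factors, and only for $A_i\in\mathcal A$ is the inner correction with respect to $\mathcal B_{A_i}$ meaningful. Making sure the structural preprocessing is done on the correct sets, that the sparse shifts $S_A$ (a matrix with $t$ nonzero rows) and $S_B$ (with $t$ nonzero columns, or rows, depending on the vectorization convention) interact correctly with matrix multiplication so that $S_A\cdot B$, $A\cdot S_B$, $S_A\cdot S_B$ are genuinely cheap, and that the probability and running-time bookkeeping over the $16$-fold product of decompositions stays within the claimed bounds, is where the real work lies; everything else is routine union bounds and Freivalds verification.
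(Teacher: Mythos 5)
Your proposal uses a genuinely different route from the paper's, and the difference is not merely cosmetic---it changes the running time in a way that breaks the claimed bound.

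The paper never invokes the algorithmic (``furthermore'') part of \cref{cor:y=4x+s} for this theorem. Instead of explicitly computing the heavy Fourier coefficients and the sparse-shift basis $b_1,\dots,b_t$, the paper uses \emph{random low-rank matrix shifts}: it samples a random matrix $M^{2k}_A$ of rank $2k=O(\log^4(1/\alpha))$ (by choosing $2k$ random rows and letting the rest be random linear combinations of them), and then argues (\cref{lemma:extended-matrix-2}) that $A+M^{2k}_A$ lands in the unknown subspace $V_X$ with probability at least $\frac{1}{2|\F|^{O(\log^4(1/\alpha))}}$. Because a low-rank shift $L^{2k}_A$ can be multiplied against an arbitrary matrix in $O(kn^2)$ time, and because Freivalds' algorithm certifies a candidate product in $O(n^2)$ time, the algorithm never needs to know $V_X$ at all: it guesses a shift, guesses the four random summands, calls $\ALG$ sixteen times, undoes the shift cheaply, and checks the result. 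Each iteration costs $O(T(n))$ and succeeds with probability $\exp(-O(\log^5(1/\alpha)))$, giving the stated bound.

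Your approach, by contrast, runs the local-correction machinery to \emph{learn} the sparse-shift data $b_1,\dots,b_t, k_1,\dots,k_t$ for $\mathcal A$ and for each $\mathcal B_{A_i}$. By the furthermore part of \cref{cor:y=4x+s}, this costs $\exp(\log^4(1/\alpha))\cdot\poly\log(1/\delta)\cdot\poly(n)$ calls to a membership oracle, and each such oracle call (estimating the density of correct $B'$'s for a candidate $A'$) itself requires $\Theta(1/\alpha^2)$ executions of $\ALG$. That is a \emph{multiplicative} $\poly(n)$ overhead on $T(n)$, which you silently drop in your final accounting---your conclusion that ``the total is $\frac{\exp(O(\log^5(1/\alpha)))}{\delta}\cdot T(n)$'' does not follow from your own intermediate estimate of $\exp(O(\log^4(1/\alpha)))\cdot\poly(n)\cdot\poly\log(1/\delta)$ for the structural preprocessing, once the cost of the membership oracle is folded in. For $T(n)=n^{2+o(1)}$ the $\poly(n)$ factor is fatal to the stated theorem. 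You correctly identified in your ``main obstacle'' discussion that the nesting of the two corrections is subtle, but the actual obstacle the paper is routing around is different: the paper deliberately avoids the expensive subspace computation precisely because Freivalds gives a cheap certificate, so one can afford to guess-and-check a random low-rank shift rather than learn the subspace. This distinction is made explicit in the paper's technical overview (see the remark at the end of \cref{tech:beyond}): for data structures, where Freivalds and low-rank shifts are unavailable, the sparse-shift decomposition \emph{is} the right tool; for matrix multiplication it is not needed and would be too slow. Your decomposition $A\cdot B=\sum_{i,j}\pm\ALG(A_i,B_j)+(\text{sparse terms})$ and the bookkeeping around $S_A, S_B^{(i)}$ are otherwise sound, so this would yield a correct (but slower) reduction; to match the theorem you need to replace the explicit subspace computation with the random low-rank shift trick.
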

    Below we will prove the theorem assuming the algorithm~$\ALG$ is deterministic, but a straightforward generalization of the proof works for randomized algorithms as well.
    To proceed with the proof, we first need the following definitions.
    \begin{definition}\label{def:good-matrices}
    Let $X$ be the set of matrices $A$ such that $\ALG$ computes their product with matrices $B$ with probability at least $\alpha/ 2$. More formally
    \begin{equation*}
        X = \{A: \Pr_{B}[\ALG(A,B)=A\cdot B]\geq \alpha / 2\} \enspace.
    \end{equation*}
    Similarly, for each $A \in \F^{n \times n}$, we define $Y_A$ to be the set of matrices $B$ such that given $A$ and $B$, $\ALG$ correctly computes $A \cdot B$. In other words
    \begin{equation*}
        Y_A = \{B: \ALG(A,B)=A\cdot B\} \enspace.
    \end{equation*}
    \end{definition}
    \begin{claim}
    $X$ and $Y_A$, where $A \in X$, have density at least $\alpha/2$.
    \end{claim}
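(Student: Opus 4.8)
The claim asserts that $X$ has density at least $\alpha/2$, and that for every $A \in X$, the set $Y_A$ has density at least $\alpha/2$. The plan is a straightforward averaging argument, essentially the standard ``good rows'' / Markov-type decomposition underlying all such worst-case to average-case reductions.

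First I would handle $Y_A$, which is immediate: for $A \in X$, the definition of $X$ says precisely that $\Pr_B[\ALG(A,B) = A \cdot B] \geq \alpha/2$, and since $Y_A = \{B : \ALG(A,B) = A \cdot B\}$, this probability is exactly $\abs{Y_A}/\abs{\F}^{n \times n}$. Hence $Y_A$ has density at least $\alpha/2$ by definition, with nothing to prove.

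The substantive part is the density of $X$. Here I would use an averaging (Markov) argument on the joint distribution over $(A,B)$. By hypothesis, $\Pr_{A,B}[\ALG(A,B) = A \cdot B] \geq \alpha$, and writing this as $\E_A\left[\Pr_B[\ALG(A,B) = A \cdot B]\right] \geq \alpha$, I would split the expectation over $A \in X$ and $A \notin X$. For $A \notin X$ the inner probability is strictly less than $\alpha/2$, so the contribution of that part is at most $\alpha/2$. Therefore the contribution from $A \in X$ must be at least $\alpha - \alpha/2 = \alpha/2$; since each inner probability is at most $1$, this forces $\Pr_A[A \in X] \geq \alpha/2$, i.e., $X$ has density at least $\alpha/2$. (One could equally phrase this as the reverse Markov inequality: if a $[0,1]$-valued random variable has mean at least $\alpha$, it exceeds $\alpha/2$ with probability at least $\alpha/2$.)

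There is no real obstacle here; the only thing to be careful about is the direction of the inequality in the definition of $X$ (a strict-versus-non-strict ``$\geq \alpha/2$'' choice), which is harmless and can be absorbed by the slack. The argument is completely elementary and self-contained, relying on nothing beyond the definitions in \Cref{def:good-matrices} and the hypothesis of \Cref{thm:mm-alg-reduction}.
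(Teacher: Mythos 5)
Your proof is correct and is essentially the same argument as the paper's: a reverse-Markov (averaging) bound on the random variable $P_A = \Pr_B[\ALG(A,B) = A\cdot B]$, together with the trivial observation that $Y_A$ has density $\geq \alpha/2$ by the very definition of $A \in X$. The paper phrases the averaging step as a proof by contradiction while you decompose the expectation directly, but the content is identical.
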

    \begin{proof}
    Let $P_A$ be the random variable $P_A := \Pr_{B}[\ALG(A,B)=A\cdot B]$. From the definition, it is clear that $\E_A[P_A] \geq \alpha$.
    Now, by contradiction, if $\Pr_A[P_A \geq \alpha/2] < \alpha/2$ then we have
    \begin{equation*}
        \E_A[P_A] = \E_A{\Pr_{B}[\ALG(A,B)=A\cdot B]} < \alpha/2 \cdot 1 + (1-\alpha/2) \cdot \alpha/2 < \alpha \,.
    \end{equation*}
    Hence, $\Pr[P_A \geq \alpha/2] \geq \alpha/2$ and $X$ has density greater than or equal to $\alpha/2$.
    It follows from the definitions of $X$ and $Y_A$ that for all $A \in X$, $Y_A$ has density at least $\alpha / 2$.
    \end{proof}

    Next, we need the following definition.

    \begin{definition}\label{def:matrices-M-and-N-2}
       For a matrix $A \in \F^{n \times n}$ and $k \in [n]$, let $M^{k}_A = A +L^{k}_A$ where we define the matrix $L^{k}_A$ as follows.

      \begin{enumerate}
          \item First, choose a random subset $S$ of size $k$ from $[n]$.
          \item For each $i \in S$ let the $i$'th row of $L^{k}_A$ be uniformly random in $\F^n$.
          \item For all $j \in [n] \setminus S$, let the $j$'th row of $L^{k}_A$ be a random linear combination of the rows in $S$.
      \end{enumerate}

    \end{definition}
    \begin{remark}
    For matrix $L^{k}_A$ we have that $\rk(L^{k}_A)\leq k$, because every row indexed by $j \in [n] \setminus S$ is a linear combination of the rows in $S$.
    \end{remark}
    \begin{remark}
        If the random rows in $S$ are not linearly independent, we can throw them away and repeat Step 2.
        It is not hard to see that this event happens only with constant probability, and we can
        check this in $O(nk^2)$ time.
    \end{remark}
    
    The following lemma shows that matrix $M^{2k}_A = A + (L^{2k}_A)$ belongs to any subspace of matrices of constant co-dimension $k$ with constant probability.

    \begin{lemma}\label{lemma:extended-matrix-2}
        Given a matrix $A \in \F^{n \times n}$ and $k \in [n]$, for any subspace $V \seq \F^{n \times n}$ of $\dim(V) \geq n-k$ we have
        \begin{equation*}
            \Pr[M^{2k}_A \in V] \geq \frac{1}{2\abs{\F}^k}\enspace.
        \end{equation*}
    \end{lemma}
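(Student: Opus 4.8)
The plan is to read membership in $V$ as the vanishing of a fixed surjective $\F^k$-valued linear map and to argue that the random low-rank shift $L^{2k}_A$ attains any prescribed value of that map with probability about $|\F|^{-k}$. Concretely, I would first reduce to the case where $V$ has codimension exactly $k$ in $\F^{n \times n}$ (if its codimension is smaller, pass to a codimension-$k$ subspace contained in $V$, which only strengthens the claim), fix a surjective linear map $\phi \colon \F^{n \times n} \to \F^k$ with $\ker \phi = V$, and put $t := -\phi(A)$, so that the goal becomes $\Pr[\phi(L^{2k}_A) = t] \ge \tfrac{1}{2|\F|^k}$, where $L := L^{2k}_A$ is the matrix of \cref{def:matrices-M-and-N-2}. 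The key observation is that once the row set $S$ (of size $2k$) and the combination coefficients $(c_{ij})_{i \notin S, j \in S}$ are fixed, $L$ is a \emph{linear} image of the uniformly random tuple of rows $(v_i)_{i \in S} \in \F^{2kn}$; hence, conditioned on $(S, c)$, the vector $\phi(L)$ is distributed uniformly over the image of that linear map, which is a subspace of $\F^k$. So it suffices to show that this image equals all of $\F^k$ with probability at least $1 - |\F|^{-k}$ over the choice of $(S, c)$: combining with $\Pr[\phi(L) = t \mid \text{the image is } \F^k] = |\F|^{-k}$ then gives $\Pr[\phi(L) = t] \ge (1 - |\F|^{-k}) |\F|^{-k} \ge \tfrac{1}{2|\F|^k}$, using $|\F| \ge 2$ (the case $k = 0$ being trivial).

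Proving this surjectivity bound is the heart of the matter, and it is precisely here that step~3 of \cref{def:matrices-M-and-N-2} --- filling the rows outside $S$ with \emph{random} linear combinations of the rows in $S$ --- is used. Writing $F$ for the linear map $v \mapsto \phi(L)$ and representing the coordinates of $\phi$ by linearly independent matrices $\Phi_1, \dots, \Phi_k$ under the entrywise inner product $\ip{M, N} = \sum_{a,b} M_{a,b} N_{a,b}$, one checks that $F$ fails to be surjective iff some nonzero $\lambda \in \F^k$ lies in the kernel of $F^{\top}$, which unwinds to the condition that $\widetilde\Psi_j := \Psi_{j, \cdot} + \sum_{i \notin S} c_{ij}\,\Psi_{i, \cdot} = 0$ for every $j \in S$, where $\Psi := \sum_\ell \lambda_\ell \Phi_\ell \ne 0$ and $\Psi_{i, \cdot}$ denotes the $i$-th row of $\Psi$. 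For such a $\Psi$ I would split on whether its nonzero rows all lie inside $S$: if they do, then $\sum_{i \notin S} c_{ij}\,\Psi_{i, \cdot} = 0$, so $\widetilde\Psi_j = \Psi_{j, \cdot}$, which cannot vanish for all $j \in S$ since $\Psi \ne 0$ --- such $\Psi$ never witness non-surjectivity; otherwise $\Psi$ has a nonzero row indexed outside $S$, so for each fixed $j \in S$ the term $\sum_{i \notin S} c_{ij}\,\Psi_{i, \cdot}$ is uniform over a nonzero subspace of $\F^n$, whence $\Pr_c[\widetilde\Psi_j = 0] \le |\F|^{-1}$, and since distinct $j \in S$ use disjoint blocks of coefficients these $2k$ events are independent, giving $\Pr_c[\widetilde\Psi_j = 0 \text{ for all } j \in S] \le |\F|^{-2k}$. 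A union bound over the fewer than $|\F|^k$ nonzero $\Psi \in \sp\{\Phi_1, \dots, \Phi_k\}$ then gives $\Pr_c[\text{the image is not } \F^k] \le |\F|^k \cdot |\F|^{-2k} = |\F|^{-k}$, uniformly over $S$, which is exactly what is needed.

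I expect this union bound --- in particular getting it to beat $1/2$ while correctly exploiting the independence of the coefficient blocks across $j \in S$ --- to be the only real obstacle; the remaining steps are routine. It is worth stressing that the random combinations in step~3 are genuinely needed here: without them the argument would instead require the random $2k$-subset $S$ to meet the row-supports of every element of $V^\perp$, which it need not do. Finally, the ``throw away and repeat'' remark following \cref{def:matrices-M-and-N-2} does not affect this proof, since the formal object $L^{2k}_A$ is the unconditioned distribution analyzed above; that resampling is only invoked later to extract an explicit low-rank factorization for fast multiplication.
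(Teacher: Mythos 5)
Your proof is correct, and it takes a genuinely different (and arguably cleaner) route than the paper's.

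The paper's argument works in the primal: it writes $V$ as $\{\mathbf{m}: G\mathbf{m}=0\}$, diagonalizes $G$ to get $k$ pivot coordinates, observes that these lie in at most $k$ rows of the matrix, and then analyzes an event $Z$ (roughly, that the relevant rows of $L^{2k}_A$ are in generic position) with $\Pr[Z]\geq 1/2$, concluding that conditioned on $Z$ each constraint is satisfied independently with probability $1/\abs{\F}$. Your argument works in the dual: you encode $V$ as $\ker\phi$ for a surjective $\phi\colon\F^{n\times n}\to\F^k$ represented by matrices $\Phi_1,\dots,\Phi_k$, observe that conditionally on $(S,c)$ the image of $v\mapsto\phi(L)$ is a subspace of $\F^k$ over which $\phi(L)$ is uniform, and show that this image is all of $\F^k$ with probability $\geq 1 - \abs{\F}^{-k}$ by a union bound over the nonzero $\Psi\in\sp\{\Phi_\ell\}$ that could witness non-surjectivity. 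Both proofs exploit the choice of $2k$ (rather than $k$) random rows to obtain the $\tfrac{1}{2}$ cushion, but the mechanisms differ: the paper does so via the genericity event $Z$, whereas you do so via the gap between the $\abs{\F}^{-2k}$ per-$\Psi$ bound and the $\abs{\F}^k$ union-bound count. A practical advantage of your version is that it is written uniformly over $\F_p$; the paper's $\Pr[Z]$ computation is written with $2^{2k}$ in the denominators, i.e., as if $\F=\F_2$, and it also glosses over the independence of the $k$ constraint events in a way your disjoint-coefficient-block observation makes precise. Conversely, the paper's proof is shorter and is pitched to make the "pick a free coordinate per constraint" intuition visible. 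Your split of nonzero $\Psi$ into those supported in $S$ (which can never witness failure because $\widetilde\Psi_j = \Psi_{j,\cdot}$) versus those with a nonzero row outside $S$ (for which the $c_{ij}$ blocks give independent $\abs{\F}^{-1}$ bounds across $j\in S$) is the key new bookkeeping, and it is carried out correctly.
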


    \begin{proof}
        Since $V$ has co-dimension $k$, it can be defined by $k$ linear constraints on the elements of the matrix as follows.
        \begin{equation*}
            M^{2k}_{A}(i_0,j_0) + M^{2k}_{A}(i_1,j_1) + ... + M^{2k}_{A}(i_r,j_r) = 0 \enspace,
        \end{equation*} 
        where $r \in [1, n^2]$ denotes the number of coordinates that this constraint depends on.
        By re-writing $M^{2k}_A$ as a vector $\mathbf{m} \in \F^{n^2}$,
        we can construct the system of equations $G\cdot \mathbf{m} = \mathbf{0}$ for membership in $V$. Here, $G$ denotes the matrix of size $k \times n^2$, where each row specifies one single constraint of the aforementioned form. Now, if we diagonalize $G$ using Gaussian elimination, we can re-write the system of equations in the form $G'\cdot \mathbf{m} = \mathbf{0}$ for a matrix $G'$, where for each row $a$ in $G'$, there exists a column $b_a$ which has value 1 in this row and 0 in the other rows.
        
        For all $b_a$ where $a \in [k]$, we consider the coordinate $\mathbf{m}_{b_a}$. The set of these $k$ coordinates $\{\mathbf{m}_{b_1},\mathbf{m}_{b_2},...,\mathbf{m}_{b_k}\}$ corresponds to $k$ pairs of coordinates $\{(c_1, c'_1),(c_2, c'_2),...,(c_k, c'_k)\}$ in the original matrix. Note that these $k$ coordinates belong to at most $k$ rows in $M^{2k}_A$, and we want to bound the probability that none of these rows in $L^{2k}_A$ is a linear combination of the other rows. Let $Z$ be the event that all the $2k$ rows are pairwise linearly independent in $L^{2k}_A$. We have
        \begin{align*}
            \Pr[Z] &= \left(1 - \frac{1}{2^{2k}}\right)\left(1 - \frac{2}{2^{2k}}\right)\left(1 - \frac{4}{2^{2k}}\right) ... \left(1 - \frac{2^{k-1}}{2^{2k}}\right) \\
            &\geq \left(1 - \frac{2^{k-1}}{2^{2k}}\right)^{k} \geq \left(1 - \frac{1}{2^{k+1}}\right)^{k} \\
            &\geq 1 - \frac{k}{2^{k+1}} \geq \frac{1}{2} \enspace.
        \end{align*}

        Now, we observe that if $Z$ happens, it means that for all $k$ constraints, there exist a coordinate which we denote by $(c_i, c'_i)$ in $M^{2k}_A$ such that none of the other constraints depend on the value of $M^{2k}_A(c_i, c'_i)$, and the value of $M^{2k}_A(c_i, c'_i)$ is chosen uniformly at random. Hence, this random value is equal to the unique solution which satisfies $i$th constraint (assuming values of all other coordinates involved in this constraint are determined beforehand) with probability $1/\abs{\F}$. Therefore, the probability that $M^{2k}_A \in V$ is bounded by
        \begin{align*}
            \Pr[M^{2k}_A \in V] &= \Pr[\text{All $k$ constraints are satisfied}] \\ 
            &\geq \Pr[Z] \cdot \frac{1}{\abs{\F}^k} \\
            &= \frac{1}{2\abs{\F}^k}
            \enspace. \qedhere
        \end{align*}
    \end{proof}

\begin{proof}[Proof of \cref{thm:mm-alg-reduction}]
    To prove this theorem, we design the following algorithm and prove that this algorithm outputs the correct answer for the matrix multiplication problem with high probability.
\\
\begin{tcolorbox}[title=Algorithm~\customlabel{algone}{1}: Matrix multiplication reduction over small fields]   
\paragraph{Input: $\ALG$, $A, B \in \F^{n \times n}$}
\paragraph{Output: $A \cdot B$}
Set $k$ to be $O(\log^4(1/\alpha))$.
\begin{enumerate}
    \item Set $k$ to be $O(\log^4(1/\alpha))$.
    \item For matrices $A$ and $B$, construct the matrices $M^{2k}_A$ and $M^{2k}_B$.
    \item Sample 3 random matrices $R_1, R_2, R_3 \in \F^{n \times n}$ and set $R_4 = R_1 + R_2 - R_3 - A - M^{2k}_A$ so that $A + M^{2k}_A = R_1+R_2-R_3-R_4$.
    \item Sample 12 random matrices $S^{(t)}_1, S^{(t)}_2, S^{(t)}_3 \in \F^{n \times n}$ and set $S^{(t)}_4 = S^{(t)}_1 + S^{(t)}_2 - S^{(t)}_3 - B - M^{2k}_B $ for $t \in \{1,2,3,4\}$, so that $B + M^{2k}_B = S^{(t)}_1 + S^{(t)}_2 - S^{(t)}_3  - S^{(t)}_4$.
    \item Compute $O_L = \sum_{t=1}^4{\sum_{s=1}^4 {\mathsf{sign}}_{t,s}\ALG(R_t, S^{(t)}_s)}$, where $\mathsf{sign}_{t,s} = -1$ if $\{t,s\} \cap \{1,2\} = 1$, and $\mathsf{sign}_{t,s} = 1$ otherwise.
    \item Compute $O = O_L - A \cdot L^{2k}_B - L^{2k}_A \cdot B - L^{2k}_A \cdot R^{2k}_B$.\label{line:get-rid-of-noise}
    \item {If} $O = A \cdot B$ (check using \cref{lemma:freivalds}), {then} {return} $O$.
\end{enumerate}
\end{tcolorbox}

    \paragraph{Correctness:}
    Let's consider $X$ as it is defined in \cref{def:good-matrices}. By applying \cref{thm:bogolyubov-quasipoly} on $X$, we can conclude that there exists subspace $V_X$ with co-dimension at most $O(\log^4(1/\alpha))$ and the guaranteed properties.
    On the other hand, by \cref{lemma:extended-matrix-2} we have that
    \begin{equation*}
        \Pr[M^{2k}_A \in V_X] \geq \frac{1}{2\abs{\F}^{O(\log^4(1/\alpha))}} \enspace.
    \end{equation*}
    Assuming $M^{2k}_A \in V_X$, by \cref{thm:bogolyubov-quasipoly}, we have that
    \begin{equation*}
        \Pr_{R_1,R_2,R_3}[R_1,R_2,-R_3,-R_4 \in X] \geq \Omega(\alpha^5) \enspace.
    \end{equation*}
    Now, for each $R_t$ with $t \in \{1,2,3,4\}$, we can consider $Y_{R_t}$ using \cref{def:good-matrices}. Similarly, since each $Y_{R_t}$ has density at least $\alpha/2$, we can apply \cref{thm:bogolyubov-quasipoly} on $Y_{R_t}$ to define the corresponding subspaces $V_{Y_{R_t}}$. Having these four subspaces, we define $V_Y = V_{Y_{R_1}} \cap V_{Y_{R_2}} \cap V_{Y_{-R_3}} \cap V_{Y_{-R_4}}$. It is not hard to see that since each of the four subspaces has co-dimension $O(\log^4(1/\alpha))$, the co-dimension of $V_Y$ is at most $4\cdot O(\log^4(1/\alpha))$. Thus, by \cref{lemma:extended-matrix-2}
    \begin{equation*}
        \Pr[M^{2k}_B \in V_Y] \geq \frac{1}{2\abs{\F}^{O(\log^4(1/\alpha))}}\enspace.
    \end{equation*}
    Given $M^{2k}_B \in V_Y$, for each $t \in \{1,2,3,4\}$ by \cref{thm:bogolyubov-quasipoly}
    \begin{equation*}
        \Pr_{S^{(t)}_1, S^{(t)}_2, S^{(t)}_3 }[S^{(t)}_1, S^{(t)}_2, -S^{(t)}_3 ,-S^{(t)}_4 \in Y_{R_t}] \geq \Omega(\alpha^5) \enspace.
    \end{equation*}
    It is important to note that since $L^{2k}_A$ and $L^{2k}_B$ have rank less than or equal to $2k$, and all linear combinations of their rows are known previously, we can compute the multiplications in Step \ref{line:get-rid-of-noise} in time $O(n^2 \cdot \log^4(1/\alpha))$.

    Since all the events defined above are independent, we conclude that the algorithm succeeds with the following probability
    \begin{align*}
        \Pr[\mathrm{Algorithm~\ref{algone} \ succeeds}] &\geq \frac{\Omega(\alpha^{25})}{O(\abs{\F}^{O(\log^4(1/\alpha)))}}
        \geq \frac{\Omega(\alpha^{25})}{O(\frac{10}{\alpha})^{O(\log^4(1/\alpha))}}
        \geq \exp(-\log^5(1/\alpha))
        \enspace.
    \end{align*}
    
    Therefore, by repeating the algorithm $\frac{\exp(\log^5(1/\alpha))}{\delta}$ times, and using Freivalds' algorithm for verification, we obtain an algorithm that solves the matrix multiplication on all instances with probability at least $1-\delta$. 
    
\paragraph{Running time:}
Th running time of the procedure described above is essentially dominated by $\frac{\exp(\log^5(1/\alpha))}{\delta}$ calls to the weak average case algorithm, and hence the total running time is  $\frac{\exp(\log^5(1/\alpha))}{\delta} \cdot T(n)$.
In particular, even if the algorithm succeeds on a sub-constant fraction of inputs $\alpha = \exp(\log^{0.199}(n))$, the reduction turns it into an algorithm that works for  worst case instances in time $T(n) \cdot n^{o(1)}$.
\end{proof}

\subsection{Reduction for matrices over large fields}\label{sec:mm-large-fields}
We now prove case 2 of \cref{thm:mm-alg-reduction-intro} in this section. For concreteness, 
we restate this result.

\begin{theorem}\label{thm:wc-to-ac-mm-large}
    Let $\F = \F_p$ be a prime field, $n \in \N$, and $\alpha\coloneqq\alpha(n) \in (0,1]$.
    Suppose that there exists an algorithm $\ALG$ that, on input two matrices $A, B \in \F^{n \times n}$ runs in time $T(n)$
    and satisfies
    \begin{equation*}
        \Pr[\ALG(A,B)=A\cdot B] \geq \alpha \,,
    \end{equation*}
    where the probability is taken over the random inputs $A,B\in \F^{n \times n}$ and the randomness of~$\ALG$.
 If $\abs{\F} \geq 2/\alpha$, then there exists a randomized algorithm $\ALG'$ that for \emph{every} input $A, B \in \F^{n \times n}$ and $\delta>0$,
        runs in time $O(\frac{1}{\delta \cdot \alpha^4} \cdot T(n))$ and outputs $AB$ with probability at least $1-\delta$.
\end{theorem}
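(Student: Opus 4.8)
\textbf{Proof proposal for \cref{thm:wc-to-ac-mm-large}.}
The plan is to use the ``standard interpolation technique'' alluded to in the text: since $|\F|\geq 2/\alpha$ is large enough to supply many distinct evaluation points, we recover $A\cdot B$ by honest polynomial interpolation along a random line through $(A,B)$, with no recourse to the additive‑combinatorial machinery. As in the proof of \cref{thm:mm-alg-reduction} we may assume $\ALG$ is deterministic, and we set $G=\{(A',B')\in\F^{n\times n}\times\F^{n\times n} : \ALG(A',B')=A'B'\}$, a set of density at least $\alpha$. Throughout we use Freivalds' algorithm (\cref{lemma:freivalds}) for verification.

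Given a worst-case input $(A,B)$, I would sample uniformly random $R,S\in\F^{n\times n}$ and consider the matrix polynomial $P(t)=(A+tR)(B+tS)=AB+t(AS+RB)+t^2 RS$, each of whose entries is a univariate polynomial of degree at most $2$ in $t$, with $P(0)=A\cdot B$. Fix $m=\lceil 6/\alpha\rceil$ distinct nonzero scalars $t_1,\dots,t_m\in\F$ (available since $|\F|$ is large; one places the case split with \cref{thm:mm-alg-reduction} at $|\F|=\Theta(1/\alpha)$ with consistent constants). For each $i$, since $t_i\neq 0$ and $R,S$ are independent and uniform, the pair $(A+t_iR,B+t_iS)$ is uniformly distributed over $\F^{n\times n}\times\F^{n\times n}$, hence lies in $G$ with probability at least $\alpha$; call $i$ \emph{good} in that case. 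By linearity of expectation the number of good indices has expectation at least $\alpha m\geq 6$, and as it is supported on $\{0,1,\dots,m\}$ a reverse Markov inequality ($\alpha m\leq\E[X]\leq \tfrac{\alpha m}{2}+m\Pr[X\geq \tfrac{\alpha m}{2}]$) gives that with probability at least $\alpha/2$ at least $3$ indices are good.

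A single round then proceeds as follows. Query $M_i:=\ALG(A+t_iR,B+t_iS)$ for all $i\in[m]$, so that $M_i=P(t_i)$ whenever $i$ is good. For every triple $\{i,j,k\}\subseteq[m]$, form by Lagrange interpolation the unique matrix polynomial $\widetilde P$ of entrywise degree at most $2$ agreeing with $M_i,M_j,M_k$ at $t_i,t_j,t_k$, compute $\widetilde P(0)\in\F^{n\times n}$, and test whether $\widetilde P(0)=A\cdot B$ using Freivalds' algorithm with error probability at most $\alpha^4\delta/100$; output the first accepted candidate. On the ``$\geq 3$ good indices'' event the all-good triple yields $\widetilde P=P$, so $\widetilde P(0)=AB$ and Freivalds accepts it; and by a union bound over the $\binom{m}{3}=O(\alpha^{-3})$ triples no incorrect candidate is accepted except with probability $O(\alpha\delta)$. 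Hence a round succeeds with probability at least $\alpha/3$, and repeating it $O(\alpha^{-1}\log(1/\delta))$ times with fresh randomness drives the failure probability below $\delta$. The cost is $O(\alpha^{-2}\log(1/\delta))$ invocations of $\ALG$ (cost $T(n)$ each) plus $O(\alpha^{-4}\log(1/\delta))$ Freivalds verifications (cost $\widetilde O(n^2)$ each, lower order since $T(n)=\Omega(n^2)$), for a total of $\widetilde O(\alpha^{-4}\log(1/\delta))\cdot T(n)$, which is within the claimed bound $O(\tfrac{1}{\delta\alpha^4}\,T(n))$.

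I expect the only genuine subtleties to be the two probabilistic bookkeeping steps: (i) the reverse Markov estimate, which is precisely why we oversample to $m=\Theta(1/\alpha)$ evaluation points yet can only guarantee success probability $\Theta(\alpha)$ per round (amplified afterward by repetition); and (ii) budgeting the Freivalds error so that, after a union bound over all $\Theta(\alpha^{-4})$ verifications performed across all rounds, no false positive ever occurs. The conceptual point is that a large field furnishes enough distinct points $t_i$ to make genuine degree‑$2$ interpolation go through — which is exactly what fails in the small-field regime of \cref{sec:mm-small-fields} and forces the detour through the probabilistic quasi-polynomial Bogolyubov--Ruzsa lemma there.
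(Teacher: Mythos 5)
Your proposal is correct and takes essentially the same approach as the paper: sample a random line $\{(A+tR,\,B+tS) : t\in\F\}$ through $(A,B)$, observe that the product is a degree-$2$ matrix polynomial in $t$, query $\ALG$ along the line, verify with Freivalds, and interpolate. The only difference is bookkeeping: the paper queries $\ALG$ at three \emph{random} scalars $i,j,k\in\F$ per round, verifies each output individually with Freivalds, and shows (via an averaging argument that the random line contains an $\alpha/2$ density of good pairs with probability $\alpha/2$) that all three queried pairs are good with probability $\Omega(\alpha^4)$ per round; you instead fix $m=\Theta(1/\alpha)$ distinct nonzero scalars, invoke a reverse Markov inequality to get at least three good indices with probability $\Omega(\alpha)$ per round, and sweep all $\binom{m}{3}$ triples with Freivalds. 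Both lead to the same $O(\tfrac{1}{\delta\alpha^4}\,T(n))$ bound (yours is even slightly tighter, with $\log(1/\delta)$ in place of $1/\delta$), and your observation about moving the case-split threshold to $|\F|=\Theta(1/\alpha)$ to guarantee enough distinct nonzero evaluation points is the right fix for the minor constant mismatch.
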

\begin{proof}
    To prove \cref{thm:wc-to-ac-mm-large}, we use the following algorithm and we prove that this algorithm outputs the correct answer for the matrix multiplication problem with high probability.
\\
\begin{tcolorbox}[title=Algorithm~\customlabel{algtwo}{2}: Matrix multiplication reduction over large fields]   
\paragraph{Input: $\ALG$, $A, B \in \F^{n \times n}$}
\paragraph{Output: $A \cdot B$}
Set $k$ to be $O(\log^4(1/\alpha))$.
\begin{enumerate}
    \item Let $X$ and $Y$ be matrices chosen uniformly at random from $\F^{n \times n}$, and let $i, j,$ and $k$ be chosen uniformly
        at random from $\F$.\label{line:line-sampling}
    \item Compute $\ALG(A+iX, B+iY), \ALG(A+jX, B+jY),$ and $\ALG(A+kX, B+kY)$.\label{line:matrix-sampling} 
    \item {If} the computations are correct (check using \cref{lemma:freivalds}), {then} compute $A \cdot B$ by interpolating $(A+iX, B+iY), (A+jX, B+jY)$, and $(A+kX, B+kY)$.
\end{enumerate}
\end{tcolorbox}

\paragraph{Correctness:}

Again, we prove the result assuming the algorithm~$\ALG$ is deterministic, but a straightforward generalization of the proof works for randomized algorithms as well. We define the set of good pairs of matrices for $\ALG$ as follows.
\begin{definition}\label{def:good-pair-matrices}
Let $S \subseteq (\F^{n \times n} \times \F^{n \times n})$ be the set of pairs of matrices such that for all $(M, N) \in S$ we have $\ALG(M, N) = M \cdot N$. More formally
\begin{equation*}
    S = \{(M, N): \ALG(M,N) = M\cdot N, M \in \F^{n \times n}, N \in \F^{n \times n}\} \enspace.
\end{equation*}
Note that by definition, $S$ has density at least $\alpha$.
\end{definition}

\begin{claim}\label{claim:good-pair-density}
Let $\ell_{X,Y} = \{(A+iX, B+iY): i \in \F\}$ be the line that passes through $(A,B)$ and is defined by matrices $X$ and $Y$. Then, with probability $\alpha/2$, at least $\alpha/2$ fraction of pairs of matrices on this line belong to $S$.
\end{claim}
\begin{proof}
    Let $P_{(X,Y)}$ be the random variable $P_{(X,Y)} = \Pr_{i}[\ALG(A+iX,B+iY)=(A+iX)\cdot (B+iY)]$. From the definition,  $\E[P_{(X,Y)}] \geq \alpha$.
    Now, by contradiction, if $\Pr[P_{(X,Y)} \geq \alpha/2] < \alpha/2$ then we have
    \begin{equation*}
        \E[P_{(X,Y)}] = \E_{X,Y}{\Pr_{i}[\ALG(A+iX,B+iY)=(A+iX)\cdot (B+iY)]} < \alpha/2 \cdot 1 + (1-\alpha/2) \cdot \alpha/2 < \alpha \enspace.
    \end{equation*}
    Hence, $\Pr[P_{(X,Y)} \geq \alpha/2] \geq \alpha/2$.
\end{proof}

Having \cref{def:good-pair-matrices}, we can make the following claim.
\begin{claim}
The three pairs of matrices defined in Step~\ref{line:matrix-sampling} of Algorithm~\ref{algtwo} belong to $S$ with probability at least $\alpha^4 /16$.
\end{claim}

\begin{proof}
Note that in Step \ref{line:line-sampling} of Algorithm \ref{algtwo}, we are sampling a line $\ell_{X,Y}$, which by \cref{claim:good-pair-density} has density $\alpha/2$ of good pair of matrices with probability $\alpha/2$.
Also, in Step \ref{line:matrix-sampling}, we are sampling three uniformly random pairs on this line. Assuming $\ell_{X,Y}$ is a line with density $\alpha/2$ of good pairs of matrices, with probability at least $(\alpha/2)^3$ these three pairs belong to $S$. Hence, total probability that we sample three pairs such that they all belong to $S$ is at least $(\alpha/2)^4 = \alpha^4 /16$.
\end{proof}
Since matrix multiplication is a polynomial of degree 2 in the entries of the matrices, having 3 pairs where $\ALG$ outputs correct answers enables us to interpolate the value of $A \cdot B$. Thus, the algorithm succeeds with probability $O(\alpha^4)$.
By repeating the algorithm and verifying the answer using Freivalds' algorithm, one can amplify the success probability to any arbitrary constant.

\end{proof}

\section{Worst-case to average-case reductions for online matrix-vector multiplication}
\label{sec:OMV}
In this section we prove \cref{thm:MV-weak-avg-case}, which we restate below.

\MV*

\begin{remark}[Large fields]
    We would like to point out that this problem is more interesting when the field $\F$ is small. Indeed, if the size of $\F$ is relatively large
    (say, $\abs{\F} >2/\alpha$), then we can think of the matrix-vector multiplication as a polynomial of degree at most one in the elements of the vector. Therefore, we can use the standard self-correction techniques
    for evaluating low-degree polynomials to solve this problem. In particular, given a query $v \in \F^n$ we can sample a line $\ell \in \F^n$ that passes through $v$, and query two random vectors that belong to $\ell$, and compute $Mv$ by interpolating the two queried points.

\end{remark}

Before proceeding with the formal proof of \cref{thm:MV-weak-avg-case}, we informally outline the argument.
\paragraph{Proof sketch:}
The proof of \cref{thm:MV-weak-avg-case} relies on \cref{cor:y=4x+s}, which shows that any vector in $\F^n$ can be self-corrected via a linear constraint involving four vectors $x_1, x_2, x_3, x_4 \in X$ and a shift by sparse vector $u$.  This result will be used several times in the proof of \cref{thm:MV-weak-avg-case}, which we explain below.
\begin{itemize}
\item
First, note that if $\Pr_{M,v}[\DS_M(v) = Mv] \geq \alpha$, then there is a collection $Z \seq \F^n$ of size $\abs{Z} \geq \alpha/2 \cdot \abs{\F}^n$ of \emph{good} matrices, i.e., those matrices $M$ 
on which $\DS$ succeeds to compute $Mv$ on some non-negligible fraction of vectors $v$. More formally, $Z = \{M \in \F^{n \times n} : \Pr_{v}[\DS_M(v) = Mv] \geq \alpha/2 \}$.

\item
First time we apply \cref{cor:y=4x+s} on the set $Z \seq \F^{n \times n}$.
(That is, we identify $n \times n$ matrices with vectors of length $N =n^2$.)
Roughly speaking, given an arbitrary matrix $M$ we will apply the lemma so that we can write
$M = M_1 + M_2 - M_3 - M_4 + U$, where $M_1,M_2,M_3, M_4 \in Z$ and $U$ is a sparse matrix.
By the assumption in \cref{thm:MV-weak-avg-case} each $M_i$ succeeds on a non-negligible fraction of vectors $v$.

\item
Second time \cref{cor:y=4x+s} will be used with the set $X = X_{M_i}$ of vectors $v \in \F^n$ on which $\DS$ outputs $M_i v$ correctly,
where $M_i$ is each of the matrices above.
Using the lemma we will be able to represent every vector $v \in \F^n$
as $v = x_1 + x_2 - x_3 - x_4 + u$, where the $x_j$'s belong to $X_{M_i}$, i.e., the data structure outputs $M_i x_j$ correctly for all $j=1,2,3,4$, and $u \in \F^n$ is a sparse vector.

\item
In particular, for each of the matrices $M_i$ the data structure computes correctly $M_i x_j$ for all $j=1,2,3,4$,
and $M_i u$ can be computed in the query phase by reading only $O(1)$ columns of $M_i$, as $u$ is a sparse vector.
\end{itemize}

Before proceeding with the formal proof of \cref{thm:MV-weak-avg-case}, we need the following definition of small-bias sample spaces, and the theorem regarding their existence.

\begin{definition}[Small-bias sample spaces]\label{def:small-bias-space}
A sample space $S$ over $\F^n$ is called \emph{$\eps$-biased}
if for every $r \in \F^n \setminus \{0\}$ and every $b \in \F$ it holds that
\begin{equation*}
  \abs{\Pr_{s \in S}[\ip{s,r} = b] - \frac{1}{\abs{\F}}} \leq \eps
  \enspace.
\end{equation*}  
\end{definition}

In other words, a sample space is $\eps$-biased if it $\eps$-fools every nontrivial linear test,
i.e., for any $r \in \F^n \setminus \{0\}$
the distribution of $\ip{s,r}$ with $s$ sampled from $S$ is close in distribution to $\ip{s,r}$ for a \emph{uniformly random $s \in \F^n$}.
When the exact value of $\eps$ is not important, e.g., by setting $\eps = 0.1$,
we usually call such sample spaces \emph{small-bias sets}.
These objects have been introduced in the work of Naor and Naor \cite{NN93},
followed by a long line of work culminating in the recent
almost optimal construction of Ta-Shma~\cite{TaShma17}, who showed an efficient construction
of such sets of size $O(n/\eps^{2+o(1)})$ over $\F_2$. For our purposes, even a randomized construction will be sufficient (see, e.g., Corollary~3.3 in~\cite{azar1998approximating}).
\begin{theorem}\label{thm:small-biased-sets}
For every finite field $\F$, constant~$\eps\in[0,1]$ and $n \in \N$, a random set $S \seq \F^n$ of size $O(n \log{|\F|})$  is an $\eps$-biased with high probability. For the field $\F=\F_2$, there exists an explicit construction of size $\abs{S} = O(n)$.
\end{theorem}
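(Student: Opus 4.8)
For the first (randomized) assertion, the plan is to let $S$ consist of $m = \Theta\!\big(n\log\abs{\F}/\eps^2\big)$ points drawn independently and uniformly at random from $\F^n$, and to show that $S$ is $\eps$-biased with high probability by combining an exponential concentration bound for each linear test with a union bound over all tests. Since $\eps$ is a fixed constant this gives $m = O(n\log\abs{\F})$, as claimed; moreover since $m \ll \abs{\F}^n$ the sampled points are distinct with high probability, so $S$ can be taken to be a genuine set (cf.\ Corollary~3.3 in~\cite{azar1998approximating}). For the second assertion, the plan is simply to invoke one of the known \emph{explicit} constructions of small-bias sets over $\F_2$ — e.g.\ \cite{NN93} or the almost-optimal construction of Ta-Shma~\cite{TaShma17} — whose size, for any fixed constant $\eps$, is $O(n)$.

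\textbf{The concentration step.} Fix any $r \in \F^n\setminus\{0\}$ and any $b\in\F$. Since $r\neq 0$, for a uniformly random $s\in\F^n$ the value $\ip{s,r}$ is uniform over $\F$; hence, writing $s_1,\dots,s_m$ for the points of $S$, the indicators $\1[\ip{s_i,r}=b]$ are i.i.d.\ Bernoulli random variables with mean $1/\abs{\F}$, and $\Pr_{s\in S}[\ip{s,r}=b]=\frac1m\sum_{i=1}^m \1[\ip{s_i,r}=b]$ is their empirical average. By Hoeffding's inequality,
\begin{equation*}
\Pr\!\left[\ \abs{\ \Pr_{s\in S}[\ip{s,r}=b]-\tfrac{1}{\abs{\F}}\ } > \eps\ \right]\ \le\ 2\exp(-2\eps^2 m)\,.
\end{equation*}
A union bound over the at most $\abs{\F}^{n+1}$ pairs $(r,b)$ with $r\neq 0$ shows that $S$ fails to be $\eps$-biased with probability at most $2\abs{\F}^{n+1}\exp(-2\eps^2 m)$, which can be made an arbitrarily small inverse polynomial (indeed exponentially small in $n$) once $m \ge C\,n\log\abs{\F}/\eps^2$ for a suitable absolute constant $C$.

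\textbf{Main difficulty.} There is essentially no obstacle in this proof — it is a standard sampling argument — but the one point genuinely worth noting is that a \emph{second-moment} (Chebyshev) bound does not suffice: the union bound is over $\approx\abs{\F}^{n+1}$ events, so one needs the exponentially small tail provided by Hoeffding/Chernoff rather than the merely polynomial decay of Chebyshev. The only bookkeeping subtlety is that \cref{def:small-bias-space} measures bias by the pointwise closeness of the distribution of $\ip{s,r}$ to uniform, rather than by Fourier coefficients; this is, however, exactly the quantity that the Bernoulli/Hoeffding estimate controls directly. Finally, the extra $\log\abs{\F}$ factor in the size (absent for $\F=\F_2$) is forced precisely by the $\abs{\F}^{n+1}$ term in the union bound, and for the explicit $\F_2$ case the claimed bound is immediate from the cited constructions with $\eps$ set to a constant.
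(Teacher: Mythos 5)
Your proof is correct and is essentially the standard probabilistic argument (i.i.d.\ uniform samples, Hoeffding's inequality for each nontrivial linear test, union bound over all $\le\abs{\F}^{n+1}$ tests) that the paper delegates to Corollary~3.3 of~\cite{azar1998approximating}; the explicit $\F_2$ case is likewise handled exactly as the paper does, by citing~\cite{NN93,TaShma17}. Your remark that a Chernoff-type tail is needed (Chebyshev would not survive the exponential union bound) is apt, and the handling of possible repeated samples is a reasonable, if minor, bookkeeping point.
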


\noindent For concreteness, we will take $\eps = 0.1$, which suffices for our application.

We are now ready to prove \cref{thm:MV-weak-avg-case}.

\begin{proof}[Proof of \cref{thm:MV-weak-avg-case}]
    For each matrix $M \in \F^{n \times n}$, let $\DS_M$ be the weak average-case data structure implied by the hypothesis of the theorem, and denote by $X_M = \{ x \in \F^n \colon \DS_M(x)=Mx\}$ the set of vectors 
    on which the data structure outputs the correct answer.
    Let $Z \seq \F^{n \times n}$ be the set of matrices on which the data structure outputs the correct answer on at least $\frac{\alpha}{2}$-fraction of the inputs; that is, $Z = \{ M \in \F^{n \times n} \colon \abs{X_M} \geq \frac{\alpha}{2} \abs{\F}^n \}$. By Markov's inequality, we have $|Z| \geq \frac{\alpha}{2} \cdot \abs{\F}^{n^2}$. Observe that given access to $\DS_M$, we can easily construct a probabilistic oracle for approximate membership in $Z$.

    \begin{claim}
        \label{clm:membership_oracle}
        There exists a probabilistic membership oracle $O_Z$ that for any query $M \in \F^{n \times n}$ makes $t = O(1/\alpha)$ calls to $\DS_M(x)$ for uniformly random $x \in \F^n$, compares the result to $Mx$,
        and accepts if and only if at least $\alpha/3$ fraction of the calls to $\DS_M(x)$ output the correct answer. The oracle  has the following guarantees.
        \begin{itemize}
            \item If $M \in Z$, then $\Pr[O_Z(M) = ACCEPT]>2/3$.
            \item If $\abs{X_{M}} \leq \frac{\alpha}{4} \abs{\F}^n$ then $\Pr[O_Z(M) = REJECT]>2/3$.
        \end{itemize}
    \end{claim}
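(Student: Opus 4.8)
The plan is a standard concentration argument; the only real content is placing the acceptance threshold and choosing the sample size. As elsewhere in the paper, we may assume the data structure $\DS$ is deterministic, so that for a fixed $M$ the set $X_M = \{x \in \F^n : \DS_M(x) = Mx\}$ is a fixed subset of $\F^n$, and the only randomness in what follows is that of the oracle itself. The oracle $O_Z$ is the one in the statement: on a query $M \in \F^{n \times n}$, sample $t \coloneqq \lceil C/\alpha \rceil$ independent uniform vectors $x_1,\dots,x_t \in \F^n$ for a large absolute constant $C$ to be fixed, invoke $\DS_M(x_j)$ for each $j$, compute $M x_j$ directly (an extra $O(t n^2)$ field operations, which are not counted among the $\DS$-calls), and ACCEPT iff the empirical fraction $\hat p \coloneqq \frac1t \#\{\, j : \DS_M(x_j) = M x_j \,\}$ is at least $\alpha/3$.

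For the analysis, set $p \coloneqq \abs{X_M}/\abs{\F}^n$, so that the indicators $Y_j \coloneqq \1[x_j \in X_M]$ are i.i.d.\ Bernoulli$(p)$ and $\hat p = \frac1t \sum_{j=1}^t Y_j$ has mean $p$. The key observation is that the threshold $\alpha/3$ is sandwiched, with a constant multiplicative gap on both sides, between the two regimes: if $M \in Z$ then $p \ge \alpha/2$, whereas in the rejection regime $p \le \alpha/4$. I would then apply the multiplicative Chernoff bound separately in each regime. In the accept regime, $\Pr[\hat p < \alpha/3] = \Pr[\hat p < (1-\gamma)p]$ with $1-\gamma = (\alpha/3)/p \le 2/3$, hence $\gamma \ge 1/3$, and the lower-tail bound gives probability at most $\exp(-\gamma^2 p t/2) \le \exp(-\Omega(\alpha t))$. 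In the reject regime, $\Pr[\hat p \ge \alpha/3] = \Pr[\hat p \ge (1+\delta)p]$ with $1+\delta = (\alpha/3)/p \ge 4/3$, hence $\delta \ge 1/3$; using the upper-tail bound in the form $\exp(-\min(\delta,\delta^2)\, p t/3)$ together with $\delta p t = (\alpha/3 - p)t \ge (\alpha/12)t$ yields probability at most $\exp(-\Omega(\alpha t))$ as well. Choosing the constant $C$ large enough makes both failure probabilities at most $1/3$, which is exactly the claim, and the number of $\DS$-calls is $t = O(1/\alpha)$ as required.

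I do not expect a genuine obstacle here — the statement is essentially a textbook fact. The only point that needs a little care is making the two Chernoff estimates uniform over the full admissible range of $p$: in the rejection regime $p$ can be arbitrarily small, so the multiplicative deviation $\delta$ can be large, and one must use the $\min(\delta,\delta^2)$ (equivalently, additive/Bernstein) form of the upper tail rather than the purely quadratic one. Once this is noted, everything is mechanical, and the absolute constant $C$ (say $C = 1000$) can simply be read off from the two inequalities above.
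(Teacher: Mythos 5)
Your argument is correct. The paper actually states this claim without proof, treating it as an easy observation; your Chernoff-bound analysis is the standard argument that fills in that gap. The placement of the threshold $\alpha/3$ strictly between $\alpha/4$ and $\alpha/2$ and the choice $t = O(1/\alpha)$ are exactly what make the two-sided concentration work, and you were right to flag the one non-trivial point: in the rejection regime $p$ can be arbitrarily small, so the relative deviation $\delta$ is unbounded and one must use the $\exp\left(-\min(\delta,\delta^2)\,pt/3\right)$ (Bernstein-style) form of the upper tail rather than the purely quadratic form. A reader filling in this claim for the first time could easily trip on that. The bookkeeping $\delta p t = (\alpha/3 - p)t \geq (\alpha/12)t$ correctly bounds the exponent by $\Omega(\alpha t)$ uniformly over $p \in [0, \alpha/4]$, and symmetric reasoning handles $p \geq \alpha/2$, so a sufficiently large absolute constant $C$ in $t = \lceil C/\alpha\rceil$ drives both error probabilities below $1/3$. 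Nothing to correct.
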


    Using the weak average-case data structure $\DS_M$, we construct a worst-case data structure $\DS'$ as follows. First, we describe the preprocessing stage of the data structure $\DS^{(M)}$.

\begin{tcolorbox}[title=Preprocessing:]
\paragraph{Input:} A matrix $M \in \F^{n \times n}$ 
    \begin{enumerate}
      \item \textsf{Self-correcting $M$:} Using \cref{cor:y=4x+s} with probability $1-\delta$ we represent the matrix $M$ as ${M = M_1 + M_2 - M_3 - M_4 + U}$, where each $M_i$ has a large $X_{M_i}$ and $U$ is a $t$-sparse matrix for $t=O(\log^4(1/\alpha))$. 
      The running time of this step is $\exp(\log^4(1/\alpha)) \cdot \poly\log(1/\delta) \cdot \poly(n)$

      \item \textsf{Self-correcting $x$:} Then, for each $M_i$, we apply \cref{cor:y=4x+s} on $X_{M_i} = \{x \in \F^n \colon \DS_{M_i}(x) = M_ix\}$, and compute a collection of $t \leq O(\log^4(1/\alpha))$ vectors $B_i = \{b^{(i)}_1,\dots,b^{(i)}_t \in \F^n\}$
      and $t$ indices $k^{(i)}_1,\dots, k^{(i)}_h \in [n]$ that allow us to represent each vector
      $x = x_1+x_2-x_3-x_4 + u_i$, where $u_i$ has at most $O(\log^4(1/\alpha))$ non-zero elements, and $x_j \in X_{M_i}$ for all $i=1,2,3,4$.

      \item Let $S \seq \F^n$ be a small-biased set obtained by taking $O(n)$ uniformly random vertices in $\F^n$. Note that for $\F = \F_2$ we can take the explicit set $S$ from \cref{thm:small-biased-sets}
      with $\eps = 0.1$.
      
      \item For each $e \in S$ compute the multiplication from the left $e M_i$ of $e$ with each of the $M_i$, and store the pairs $(e,e M_i)$ in the data structure.
    \end{enumerate}
\end{tcolorbox}

   Overall, the data structure stores the following information in the preprocessing step:
    \begin{itemize}
    
        \item The sparse matrix $U \in \F^{n \times n}$ with at most $O(\log^4(1/\alpha))$ non-zero entries, obtained in Step~1;
        
        \item For each $M_i$, the weak average-case data structure $\DS_{M_i}$ for $M_i$, which outputs $\DS_{M_i}(x) = M_i \cdot x$ correctly on at least $\alpha/4$ fraction of $x$'s.
        
        \item For each $M_i$, the corresponding $t = O(\log^4(1/\alpha))$ vectors $B_i = \{b^{(i)}_1,\dots,b^{(i)}_t \in \F^n\}$, and $t$ indices $k^{(i)}_1,\dots, k^{(i)}_t \in [n]$, obtained in Step~2;

        \item The pairs $(e,e M_i)$ for every vector $e$ in the small-biased set $S \seq \F^n$ and for every $M_i$, obtained in Step~3.
    \end{itemize}
    
    \paragraph{Preprocessing time:} The preprocessing time is determined by the time needed to find matrices $M_i$, and the preprocessing time of the weak-average-case data structures. It is easy to verify that the preprocessing time is bounded by
    $4p + \exp(\log^4(1/\alpha)) \cdot \poly\log(1/\delta)\cdot \poly(n)$, where $p$ is the preprocessing time of the weak-average-case data structure. 

    \paragraph{Memory used:} In the preprocessing step, we store
    (1) 4 weak-average-case data structures of size $s$ for each of the matrices $M_i$,
    (2) for each matrix $M_i$ we store the collection of $t = O(\log^4(1/\alpha))$ vectors $B_i$ and $t$ indices,
    (3) a representation of the sparse matrix $U$ using $O(\log^4(1/\alpha) \log(n))$ field elements.
    (4) The pairs $(e,eM_i)$ for every vector $e$ in the small-biased set $S$, which is of size $O(n)$. Hence, the total space used is $4s + O(\log^4(1/\alpha) n) + O(n^2)$.

\medskip

    Next we describe the query phase of the worst-case data structure. Recall, for each matrix $M_i$ where $i=1,2,3,4$, we store the vectors $B_i = \{b^{(i)}_1,\dots,b^{(i)}_h \in \F^n\}$
        and indices $k^{(i)}_1,\dots, k^{(i)}_h \in [n]$ in out data structure,
        to compute $u_i$ every vector in $\F^n$ can be written as a linear combination of four vectors in $X_{M_i}$.
        The query phase works as follows.

\begin{tcolorbox}[title=Query phase:]
\paragraph{Input:} A query $x \in \F^n$
    \begin{enumerate}
      \item
        For $i \in \{1,2,3,4\}$,
        sample random $x^{(i)}_1,x^{(i)}_2,x^{(i)}_3 \in \F^n$
        and let $x^{(i)}_4$ be such that $x=u_i+x^{(i)}_1+x^{(i)}_2-x^{(i)}_3-x^{(i)}_4$.
      \item
         For each matrix $M_i$ and for $j\in \{1,2,3,4\}$, apply $\DS^{(M_i)}_\alpha$ to $x^{(i)}_j$.
      \item
        Verify that $\DS^{M_i}(x^{(i)}_j) = M_i x^{(i)}_j$ using the small biased set $S$.
        Specifically, sample $O(\log(1/\delta))$ vectors $e \in S$,
        and check that 
        \begin{equation*}
            \ip{e, \DS_{M_i}(x^{(i)}_j)} = \ip{eM_i, x^{(i)}_j}\enspace.
        \end{equation*}
        If the answer of $\DS_{M_i}(x^{(i)}_j)$ outputs the correct answer, then the inner products will all be equal.
        If $\DS_{M_i}(x_j) \neq M_i x^{(i)}_j$, then a random $e \in S$ will catch an inequality with probability at least $0.4$. 
      \item
        By repeating the sampling above for $O(\log(1/\delta) \cdot 1/\alpha^5)$ times, for each
        $i \in \{1,2,3,4\}$ we will find such
        $x^{(i)}_1,x^{(i)}_2,x^{(i)}_3,x^{(i)}_4$ on which $\DS_{M_i}(x^{(i)}_j)$ outputs the correct answer with high probability.
      \item
        Compute $M_i u_i$ directly. Since $u_i$ has at most $O(\log^4(1/\alpha))$ non-zero coordinates, it follows that $M_i u_i$ can be computed in time $O(\log^4(1/\alpha) n)$.
      \item
        For $i \in \{1,2,3,4\}$ compute $M_i y$ by taking $M_i x^{(i)}_1 + M_i x^{(i)}_2 - M_i x^{(i)}_3 - M_i x^{(i)}_4 + M_i u_i$.
      \item
        Compute $U y$ directly. Since $U$ has at most $O(\log^4(1/\alpha))$ non-zero elements, this can be done in time $O(\log^4(1/\alpha) \cdot \log(n))$.
      \item
        Return $My = M_1 y + M_2 y - M_3 y - M_4 y + U y$.
    \end{enumerate}
\end{tcolorbox}

    \paragraph{Correctness:} To prove the correctness, we bound the failure probability of the algorithm.
    Note that the failure of the algorithm only depends on the verification procedure in Step~3. In other words,
    if $\DS'_{M}(x) \neq Mx$, then at least for one pair of $(i, j)$ we have that
    $\DS_{M_i}a(x^{(i)}_j) \neq M_i x^{(i)}_j$, and none of the sampled vectors $e$ has caught this inequality.
    On the other hand, this event happens with probability at most $0.6^{O(\log(1/\delta))}$, bounding the failure probability to be at most $\delta$.

    \paragraph{Query time:} The query time consist of the time required to compute $M_i x^{(i)}_j$, the time
    required to compute $M_i u_i$, and the time needed to compute $U y$ where $i, j \in \{1,2,3,4\}$.
    The sampling in Step 1 will be done $O(\log(1/\delta) \cdot 1/\alpha^5)$ times, and for each sampled vectors, $\DS_{M_i}$
    is applied $4$ times. Also, the verification in Step 3 consists of computing the inner product for $O(\log(1/\delta))$ many vectors.
    Thus, the total query time is equal to
    \begin{align*}
        O(\log(1/\delta) \cdot 1/\alpha^5) \cdot 4 \cdot (t + O(\log(1/\delta)n) + O(n\log^4(1/\alpha))) \\
    = (4t + n) \cdot \poly(1/\alpha) \cdot \poly\log(1/\delta) \enspace.
    \end{align*}
    This completes the proof of \cref{thm:MV-weak-avg-case}.
\end{proof}

\section{Worst-case to average-case reductions for data structures}

In this section, we show worst-case to average-case reductions in the setting of static data structures. We start by showing a reduction for all linear data structure problems in \cref{sec:all_DS}.

Then, we consider a more powerful type of reductions, which can be used to derive worst-case algorithms from data structures that only satisfy a weak average case condition over both inputs and queries (similarly to the setting of online matrix-vector multiplication). On the negative side, we give a counterexample, showing that general weak-average-case reductions cannot hold for all linear problems. On the positive side, we show that the problem of evaluating a multivariate polynomial admits such a weak-average-case reduction. We stress that as opposed to the online matrix-vector multiplication problem discussed above, the problem of multivariate polynomial evaluation is an example of a non-linear problem admitting such a reduction.

\subsection{Average-case reductions for all linear problems}
\label{sec:all_DS}

Recall that in the setting of data structures, a linear problem over a field $\F$ is defined by a matrix $A \in \F^{m \times n}$. The input to the data structure is a vector $x \in \F^{n}$, which is preprocessed into $s$ memory cells. Then, given queries of the form $i \in [m]$, the goal of the data structure is to output $\ip{A_i,x}$, where $A_i$ is the $i$'th row of $A$. We show a worst-case to average-case reduction for data structures for \emph{all} linear problems. 

\begin{remark}
We note that the presented reduction results in \emph{uniform} data structures. That is, we give an efficient and simple procedure that, given an average-case data structure, creates a worst-case data structure in a black-box way that works for all values of~$n$. 

There is a trivial folklore argument that transforms an average-case data structure into a \text{\emph{non-uniform}} worst-case data structure as follows. Let $X\subseteq\F^n$ of size $|X|\geq\alpha|\F|^n$ be the set where for a given~$n$, the average-case data computes all queries correctly. By the probabilistic method, there exists $(n/\alpha)\log{|\F|}$ shifts of $X$ that cover all of $\F^n$. For every~$n$, a non-uniform data structure will remember all of those shifts, and for each shift $s\in\F^n$, it will also remember the product $As$. Now, given an input vector~$x$, in the preprocessing stage, the data structure just stores the index of a shift $s$ such that $x+s\in X$, and in the query phase it reads the index of the shift and, thus, learns $As$. Now, since $x+s\in X$, we can use the average-case data structure to compute $A(x+s)$, and, finally, compute $Ax=A(x+s)-As$. This results in a non-uniform worst-case data structure whose space complexity and query time differ from those of the average-case data structure by an additive term of $\log((n/\alpha)\log{|\F|})$.
\end{remark}
\DSstrong*
\begin{proof}
    Consider the data structure $\DS$ for the matrix $A$, implied by the assumption of the theorem. There exists a subset $X \seq \F^n$ of size $\abs{X} \geq \alpha \abs{\F}^n$ such that for every input $x \in X$ the data structure answers correctly all queries to the data structure, i.e., $\DS_x(i) = \ip{A_i,x}$ for all $i \in [m]$ and $x \in X$.

    We design a data structure $\DS'$ that outputs correct answers in worst case as follows. Let $x \in \F^n$ be the input to $\DS'$. We start by describing preprocessing and query phases of the data structure.
    
\begin{tcolorbox}[title=Worst-case data structure for $L_A$]    
    \paragraph{Preprocessing:}
        Given $x \in \F^n$ we apply the \cref{cor:y=4x+s} on $x$ with the set $X$, and obtain a non-negative integer $t \leq O(\log^4(1/\alpha))$, a vector $v \in \F^n$ with at most $t$ non-zero entries, and $x_1,x_2,x_3,x_4 \in X$ such that 
            \begin{equation*}
                x = x_1 + x_2 - x_3 - x_4 + v
                \;.
            \end{equation*}
        Furthermore, by \cref{cor:y=4x+s} such decomposition can be found using $\exp(\log^4(1/\alpha)) \cdot \poly\log(1/\delta) \cdot \poly(n)$ field operations with probability at least $1-\delta$.
    
    Then, we use the preprocessing algorithm of the average-case data structure on each one of the $x_j$'s to obtain the algorithms $DS_{x_1}(\cdot), DS_{x_2}(\cdot), DS_{x_3}(\cdot), DS_{x_4}(\cdot)$. Finally, we store the sparse shift vector $v$ by storing the $t$ coordinates, and their values.
    Therefore, the amount of memory used is $4s + O(\log^4(1/\alpha) \log(n))$.
    
    \vspace{10pt}
    \paragraph{Query phase:}
    Given a query $i \in [m]$, we invoke our four instantiations of the average-case data structure stored in the preprocessing stage and  compute $\DS_{x_1}(i), \DS_{x_2}(i), \DS_{x_3}(i), \DS_{x_4}(i)$. We then compute $\ip{A_i,v}$ and return 
    \begin{equation*}
        \DS_{x_1}(i) + \DS_{x_2}(i) - \DS_{x_3}(i) - \DS_{x_4}(i) + \ip{A_i,v} \;.
    \end{equation*}
\end{tcolorbox}

    \paragraph{Complexity:}
    The time and amount of memory used follow immediately from the description. Namely, note that applying the local correction lemma, which dominates the time complexity, is done in time $\exp(\log^4(1/\alpha)) \cdot \poly\log(1/\delta) \cdot \poly(n)$. Hence the total preprocessing time is $4p + \exp(\log^4(1/\alpha)) \cdot \poly(n)$.

    In terms of memory, we store $4$ instances of the average-case data structure $\DS$, where each instance requires $s$ memory cells. In addition we store the sparse vector $v$, by storing its $t$ non-zero indices and their values. Hence the total memory required is $4s+O(\log^4(1/\alpha) \cdot \log(n))$.

    Finally, the bound on the query time consists of $4$ queries to the the average case data structure $\DS$, as well as the computation of $\ip{A_i,v}$. The latter can be done by reading the description of the $t$-sparse vector $v$, and computing their inner product with the corresponding $t$ entries in the $i$'th row of $A$. Hence the total query time is $4t+O(\log^4(1/\alpha) \cdot \log(n))$.
    
    \paragraph{Correctness:}
    By \cref{cor:y=4x+s} we have
    \begin{equation*}
        x = x_1 + x_2 - ( x_3 + x_4) + v
        \;,
    \end{equation*}
    where $x_1,x_2,x_3,x_4 \in X$. By the definition of $X$, this implies that the average-case data structure $\DS$ computes these points correctly, hence
    \begin{equation*}
        \DS_{x_1}(i) + \DS_{x_2}(i) - \DS_{x_3}(i) - \DS_{x_4}(i) =
        \ip{A_i, x_1} + \ip{A_i, x_2} - \ip{A_i, x_3} - \ip{A_i, x_4}  \;,
    \end{equation*}
Furthermore, we directly compute $\ip{A_i,v}$, and hence, by the linearly of the inner product operation, it follows that 
\begin{equation*}
    \ip{A_i,x} = \sum_{j=1}^4\ip{A_i, x_j} + \ip{A_i,v} \;.
\end{equation*}
This concludes the proof of \cref{thm:ds-lin-strong-avg-case}.
\end{proof}

\subsection{Weak-average-case data structures}
In \cref{sec:all_DS}, we showed a worst-case to average-case reduction for all linear problems in the setting of data structures. In the following, we show how to obtain worst-case algorithms starting from a very weak, but natural, notion of average-case reductions that we discuss next.

Recall that in the standard definition of average-case data structures, the algorithm preprocesses its input and is then required to correctly answer all queries for an $\alpha$-fraction of all possible inputs. However, in many cases (such as in the online matrix-vector multiplication problem), we only have an average-case guarantee on both inputs and queries. In this setting, we should first ask what is a natural notion of an average-case condition.

A strong requirement for an average-case algorithm in this case is to correctly answer \emph{all queries} for at least $\alpha$-fraction of the inputs. However, a more desirable condition is to require the algorithm to correctly answer on an \emph{average input and query}. This is captured by the following definition.

\begin{definition}
    A \emph{weak average-case} data structure for computing a function $f\colon \F^n \times Q \to \F^k$ with success rate $\alpha>0$ receives an input $x\in\F^n$, which is preprocessed into $s$ memory cells. Then, given a query $q \in Q$, the data structure $\DS_x(q)$ outputs $y\in\F^{k}$ such that
    \begin{equation*}
        \Pr_{x\in\F^n, q \in Q }[\DS_x(q) = f(x,q)] \geq \alpha \;.
    \end{equation*}
\end{definition}

The challenge in this setting is that the errors may be distributed between both the inputs and the queries. On one extreme, the error could be concentrated on selected inputs, and then the data structure computes \emph{all queries} correctly for $\alpha$-fraction of the inputs. On the other extreme, the error could be spread over all inputs, and then the data structure may only answer $\alpha$-{fraction of the queries} on any inputs. Of course, the error could be distributed anywhere in between these two extremes.

\paragraph{Weak-average-case reductions for matrix-vector multiplication.}
As a first example of the weak-average-case paradigm, we note that our reduction for the online matrix-vector multiplication problem in \cref{sec:OMV} can be cast as a worst-case to weak-average-case reduction. Namely, we start with a data structure that receives a matrix $M\in\F^{n \times n}$ as an input, preprocesses it into $s$ memory cells. Then, on query $v\in \F^n$ the data structure algorithm $\DS_M$ satisfies
\begin{equation*}
    \Pr_{M\in\F^{n \times n}, v\in\F^n}[\DS_M(v) = Mv] \geq \alpha \;.
\end{equation*}
Hence we immediately obtain the following statement.
\MV*

An immediate question is whether it is possible to obtain worst-case to weak-average-case reductions not only for the matrix-vector multiplication problem, but rather for \emph{all} linear problems, as we have in the setting of (standard) average-case data structure. Alas, as we show next, such a general result is impossible.

\subsection{Impossibility of weak-average-case reductions for all linear problems}
\label{sec:impossibility}

We observe that for weak-average-case data structures, there is a simple counterexample which shows that it is \emph{impossible} to obtain worst-case to weak-average-case reductions for \emph{all} linear problems. Nevertheless, we later show that it is possible to obtain such reductions for specific natural problems beyond matrix-vector multiplication, namely for the (non-linear) problem of multivariate polynomial evaluation.

To see the counterexample, first note that a weak-average-case data structure can be equivalently thought of as a data structure where the answer to each input is a vector, rather than a scalar, and the requirement is that the algorithms on average outputs a partially correct vector. That is, a weak-average-case data structure computes a function $f\colon \F^n \to \F^m$ with success rate $\alpha>0$ if after the preprocessing, on query $x \in \F^n$ it satisfies that $\Pr_{x\in\F^n, i \in [m] }[\DS(x)_i = f(x)_i] \geq \alpha$.

\paragraph{Weak-average-case circuits.}
For simplicity, we start with a counterexample for weak-average-case circuits, then extend it to the setting of data structures. Note that the number of linear functions $f\colon \F_2^n \to \F_2^m$ is $2^{nm}$. The total number of (not necessarily linear) circuits with $g$ gates is $2^{O(g \log(g))}$ (see, e.g., Lemma~1.12 in~\cite{J12}). Therefore, by a simple counting argument, a random \emph{linear} function requires a circuit with $g \geq \Omega\left(\frac{mn}{\log(mn)}\right)$ gates.

Now fix a linear function~$f$ of complexity at least $\Omega\left(\frac{mn}{\log(mn)}\right)$, and an arbitrarily small constant $\eps>0$. Let us consider the function $h: \F_2^n \to \F_2^{m/ \eps}$ that embeds $f$ as follows: the first $m$ outputs of $h(x)$ compute $f(x)\in\F^m$, and the remaining $m/\eps-m$ outputs are always zeros. Note that $h$ is also a linear function that requires a circuit with at least $\Omega\left(\frac{mn}{\log(mn)}\right)$ gates.

On the other hand, note that the trivial circuit outputting $m/ \eps$ zeros well-approximates the function $h$, i.e., it satisfies the weak-average-case with success rate $\alpha = (1-\eps)$. Thus, any worst-case to average-case reduction for all functions in this setting would have to blow up the size of the trivial circuit computing $0$ to the size of at least $\Omega\left(\frac{mn}{\log(mn)}\right)$, which is almost the biggest circuit with this given number of inputs and outputs. Therefore, such a reduction would be degenerate.

\paragraph{Weak-average-case data structures.}
Moving on to the setting of data structures, here there is an issue with such an argument. To see that, recall that we want start by picking a \emph{linear} function that is hard even against \emph{non-linear} data structures. While the number of linear functions is still $2^{mn}$, a data structure can compute $s$ \emph{arbitrary} (i.e., not necessarily linear) functions in the preprocessing stage. The problem is that even one such function gives a data structure $2^{2^n}$ possibilities which is already larger than the number of \emph{linear} functions, and so the counting argument here doesn't work.

Nevertheless, we can still get essentially the same result for data structures by the following argument. Let $C(n,m)$ be the complexity of the hardest data structure for a \emph{linear} problem. Then, again, we take the hardest linear function from $n$ bits to $m$ bits, and extend it to a function with $m/\eps$ output bits (where $m/\eps - m$ outputs are constant zeros). The worst-case complexity of this function is at least $C(n, m)$, while the average-case complexity is $0$. Hence, every worst-case to average-case reduction will blow up the size from $0$ to $C(n, m)$. Since every linear function can be computed by a data structure of size $C(n, m/\eps)$ without any reduction, such a reduction is also degenerate.

\subsection{Weak-average-case reductions for multivariate polynomial evaluation}

Our main result in the weak-average-case setting is a worst-case to weak-average-case reduction for data structures computing the (non-linear) problem of multivariate polynomial evaluation. In this problem, the input is a polynomial $q \colon \F^m \to \F$ of total degree at most $d$, given as its coefficients. That is, the length of the input is $n = \binom{m+d}{d}$.
Given the input polynomial, it is preprocessed, and then in the query phase the goal is to respond to each query $x \in \F^m$ with the value $q(x)$. We restate and prove \cref{thm:DS-RM-weak-avg-case} below.

\evalpoly*

\cref{thm:DS-RM-weak-avg-case} follows from the following two lemmas.

\begin{lemma}\label{lemma:DS-RM-0.01-to-0.99}
    Let $\F$ be a prime field, and $d \leq \abs{\F} /10$.
    Suppose that for $\alpha > 2\sqrt{\frac{d}{\abs{\F}}}$ we have
    \begin{equation*}
        \RM_{\F,m,d} \in \DSParams{p}{s}{t}{\Pr_{q, x}[\DS_q(x)= q(x)] \geq \alpha}
        \enspace.
    \end{equation*}
    Then
    \begin{equation*}
        \RM_{\F,m,d} \in \DSParams{4p + \exp(\log^4(1/\alpha)) \cdot \poly\log(1/\delta) \cdot \poly(n)}{4s + O(\log^4(1/\alpha))}{4 \abs{\F}\cdot t + O(\log^4(1/\alpha))) + O(\log(n)}{\text{$\forall q$ with $deg(q) \leq d$} \colon \Pr_{x}[\DS_q(x) = q(x)]  \geq 1-O\left(\sqrt{\frac{d}{\alpha \cdot \abs{\F}}}\right)
        }
        \enspace.
    \end{equation*}

\end{lemma}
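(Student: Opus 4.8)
The plan is to prove \cref{lemma:DS-RM-0.01-to-0.99} by combining two observations: first, that evaluating a polynomial at a point is a \emph{linear} functional of its coefficient vector, so \cref{cor:y=4x+s} applies directly to the input polynomial; and second, that each of the four resulting ``good'' polynomials can be locally corrected by boosting it from the list-decoding regime to the unique-decoding regime with the help of a single preprocessed reference evaluation, using the sampling properties of low-degree curves together with the generalized Johnson bound.

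\emph{Reduction to good polynomials.} Assume (without loss of generality, as elsewhere in this section) that $\DS$ is deterministic, and identify a polynomial $\F^m\to\F$ of total degree at most $d$ with its coefficient vector in $\F^n$, $n=\binom{m+d}{d}$; under this identification the map $q\mapsto q(x)$ is $\F$-linear for every fixed $x$. Averaging the hypothesis $\Pr_{q,x}[\DS_q(x)=q(x)]\ge\alpha$, the set $Z=\{q:\Pr_x[\DS_q(x)=q(x)]\ge\alpha/2\}$ has density at least $\alpha/2$ in $\F^n$. Since $q$ is given explicitly we can evaluate $q(x)$ in $\poly(n)$ time, so we obtain a randomized membership oracle $O_Z$ by running $\DS_q$ on $O(1/\alpha)$ uniformly random points and thresholding the empirical agreement. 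In the preprocessing of $\DS'$, on input $q$ we invoke \cref{cor:y=4x+s} with the set $Z$ to get, with probability at least $1-\delta$, a decomposition $q=q_1+q_2-q_3-q_4+v$ with $q_1,q_2,q_3,q_4\in Z$ and $v$ a $t$-sparse coefficient vector, $t=O(\log^4(1/\alpha))$, in time $\exp(\log^4(1/\alpha))\cdot\poly\log(1/\delta)\cdot\poly(n)$. We then run the preprocessing of $\DS$ on each $q_i$, fix a canonical reference point $w\in\F^m$, compute each $q_i(w)$ explicitly, and store the four data structures $\DS_{q_i}$, the sparse vector $v$, and the four field elements $q_i(w)$. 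This yields preprocessing time $4p+\exp(\log^4(1/\alpha))\cdot\poly\log(1/\delta)\cdot\poly(n)$ and memory $4s+O(\log^4(1/\alpha))$, as claimed.

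\emph{Query phase.} On a query $x\in\F^m$, for each $i\in\{1,2,3,4\}$ we sample a fresh random low-degree curve $\gamma_i\colon\F\to\F^m$ through $x$ and $w$ with its remaining control point chosen uniformly at random, query $\DS_{q_i}$ at all $\abs{\F}$ points of $\gamma_i$, and form the list $\mathcal L_i$ of univariate polynomials of degree at most $(\deg\gamma_i)\cdot d$ that agree with the returned values on at least $\Omega(\alpha)$ of the $\abs{\F}$ points; by the generalized Johnson bound $\abs{\mathcal L_i}=O(1/\alpha)$. We discard from $\mathcal L_i$ every polynomial whose value at the parameter of $w$ differs from the stored $q_i(w)$, and set $\widetilde q_i$ to be the value at the parameter of $x$ of a surviving polynomial. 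Finally we output $\widetilde q_1+\widetilde q_2-\widetilde q_3-\widetilde q_4+v(x)$, with $v(x)$ computed directly from the $O(\log^4(1/\alpha))$ nonzero coefficients of $v$. Querying one curve per $q_i$ costs $\abs{\F}\cdot t$; list-decoding, trimming, and the explicit computation of $v(x)$ contribute lower-order terms, giving query time $4\abs{\F}t+O(\log^4(1/\alpha))+O(\log n)$.

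\emph{Correctness and the main obstacle.} By linearity of evaluation, $q(x)=q_1(x)+q_2(x)-q_3(x)-q_4(x)+v(x)$, so it suffices to prove $\widetilde q_i=q_i(x)$ for every fixed $q_i\in Z$ and all but a $O(\sqrt{d/(\alpha\abs{\F})})$-fraction of queries $x$. Here the key point is that the non-trivial points of a random low-degree curve through $x$ and $w$ are pairwise independent and uniformly distributed in $\F^m$; since $q_i\in Z$, the number of curve points on which $\DS_{q_i}$ agrees with $q_i$ then concentrates (Chebyshev) around a $\ge\alpha/2$ fraction, and the hypotheses $\alpha>2\sqrt{d/\abs{\F}}$ and $d\le\abs{\F}/10$ are exactly what keeps this agreement above the Johnson list-decoding radius except with probability $O(\sqrt{d/(\alpha\abs{\F})})$, in which case the true restriction $q_i|_{\gamma_i}$ lies in $\mathcal L_i$. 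Trimming by the known value $q_i(w)$ then removes every other member of the short list $\mathcal L_i$ (a spurious high-agreement polynomial coincides with $q_i|_{\gamma_i}$ at the $w$-parameter only with probability $O(1/\abs{\F})$ over the random curve, and a union bound over the $O(1/\alpha)$ list elements costs only $O(1/(\alpha\abs{\F}))\le O(\sqrt{d/(\alpha\abs{\F})})$), so $\widetilde q_i=q_i(x)$; a union bound over $i\in\{1,2,3,4\}$ finishes. The step I expect to be the main obstacle is precisely this quantitative sampling/list-decoding analysis: pinning down the curve's degree and parametrization so that a \emph{single} curve, within the $\abs{\F}$-query-per-$q_i$ budget, simultaneously has enough correct evaluations to land inside the Johnson radius and a short enough Johnson list, and verifying that one reference value collapses that list to the truth — thereby extracting the claimed error bound $O(\sqrt{d/(\alpha\abs{\F})})$ — which requires carefully combining the generalized Johnson bound with the pairwise independence of points on low-degree curves.
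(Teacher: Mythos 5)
Your overall skeleton matches the paper's: both decompose the coefficient vector as $q = q_1 + q_2 - q_3 - q_4 + u$ via \cref{cor:y=4x+s} applied to the dense set $Z$ of ``good'' polynomials, preprocess each $q_i$ together with a reference evaluation $q_i(\vecw)$, run a line/curve-based list-decode-and-trim in the query phase, and combine answers by linearity of $q \mapsto q(x)$. The preprocessing/space/query budgets you compute also match. Where you diverge from the paper is in \emph{how} randomness enters the boosting step, and that is where the proposal has a genuine gap.

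The paper's \cref{prop:DS-RM-0.01-to-0.99-for-all-q} chooses a single random reference point $\vecw$ \emph{in preprocessing} and then, at query time, deterministically considers the line $\ell_{\vecx,\vecw}$ through the query $\vecx$ and the stored $\vecw$; the concentration needed for the Johnson-bound and trimming steps comes from the fact that the guarantee is taken over a \emph{random query} $\vecx$, so that $\ell_{\vecx,\vecw}$ is a line through two uniformly random endpoints, which is exactly the regime of the Moshkovitz--Raz sampling lemma. A Markov/averaging step then shows that most fixed $\vecw$'s are good for most $\vecx$'s. You instead inject fresh randomness at query time: a ``low-degree curve $\gamma_i$ through $\vecx$ and $\vecw$ with its remaining control point chosen uniformly at random,'' and you base the concentration on the claim that the non-control points of $\gamma_i$ are pairwise independent and uniform. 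That claim is false for the curve you describe. With two fixed interpolation points $(\vecx,\vecw)$ and only \emph{one} random control point $r$, you get a degree-$2$ curve of the form $\gamma_i(t) = L_0(t)\vecx + L_1(t)\vecw + L_2(t)r$, so every non-control point is a deterministic affine function of the \emph{same} random $r$; knowing $\gamma_i(t_1)$ pins down $r$ and hence $\gamma_i(t_2)$ for every other $t_2$. There is no pairwise independence, and Chebyshev does not apply. To get pairwise independence through two fixed anchor points you need at least two random control points, i.e.\ a degree-$3$ curve, but then the restriction $q_i|_{\gamma_i}$ has degree $3d$, which shifts the Johnson bound to requiring $\alpha > 2\sqrt{3d/|\F|}$ rather than the stated $\alpha > 2\sqrt{d/|\F|}$, and it also changes the constants in the final $O(\sqrt{d/(\alpha|\F|)})$ bound; your parameter calculation is tuned to lines and degree $d$, not to a higher-degree curve. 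Beyond that, the trimming step also becomes delicate with fresh curve randomness: the list $\mathcal L_i$ and its spurious members are functions of the random curve, so the event that a spurious polynomial agrees with $q_i(\vecw)$ at the $\vecw$-parameter is not obviously $O(1/|\F|)$, whereas the paper keeps this clean by attributing the Schwartz--Zippel argument to the \emph{preprocessing-time randomness of $\vecw$} (Claims~\ref{claim:most-ws-are-good1}--\ref{claim:schwartz-zippel}), so the spurious polynomials depend only on $\vecx$ and the fixed $\vecw$, not on fresh query-time randomness.

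In short: the decomposition-and-reference-point plan is the same; the query-time sampler is where you deviate, and as written that sampler lacks the pairwise independence you invoke. Either revert to the paper's line through $\vecx$ and the preprocessed $\vecw$ (accepting an over-$\vecx$ guarantee, which is all this lemma claims; \cref{lemma:DS-RM-0.99-to-worst-case} later handles worst-case $\vecx$), or commit to a degree-$3$ curve with two fresh random control points and rework the Johnson-bound parameters and the trimming analysis accordingly.
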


\begin{lemma}\label{lemma:DS-RM-0.99-to-worst-case}
   Let $\F$ be a prime field, and $d \leq \abs{\F} /10$.
    Suppose that for $\gamma < 0.1$ we have
    \begin{equation*}
        \RM_{\F,m,d} \in \DSParams{p}{s}{t}{\text{$\forall q$ with $deg(q) \leq d$} \colon \Pr_{x}[\DS_q(x) = q(x)] \geq 1-\gamma }
        \enspace.
    \end{equation*}
    Then
    \begin{equation*}
        \RM_{\F,m,d} \in \DSParams{p}{s}{\abs{\F} \cdot t}{\text{$\forall q$ with $deg(q) \leq d$}, \forall x \in \F^n \colon \Pr[\DS_q(x) = q(x)] \geq 1-4\gamma}
        \enspace.
    \end{equation*}
\end{lemma}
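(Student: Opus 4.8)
This is the high-agreement (unique-decoding) counterpart of \cref{lemma:DS-RM-0.01-to-0.99}, and the natural approach is the classical self-correction of Reed--Muller codes along a random line, combined with Reed--Solomon unique decoding on that line. The data structure $\DS'$ will use exactly the preprocessing and memory of $\DS$ (so $p$ and $s$ are unchanged); only the query algorithm changes. On a query $x\in\F^m$, the new algorithm samples a uniformly random direction $y\in\F^m$, forms the line $\ell=\{x+iy:i\in\F\}$, and invokes the given query procedure at all $|\F|$ points of $\ell$, obtaining values $a_i:=\DS_q(x+iy)$ for $i\in\F$ — this is the source of the stated query time $|\F|\cdot t$ (the subsequent $\poly(|\F|)$ decoding touches no new memory cells and is dominated by the $|\F|\cdot t$ term in the regimes of interest). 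Since $\deg(q)\le d$, the restriction $g(i):=q(x+iy)$ is a univariate polynomial of degree at most $d$, and crucially $g(0)=q(x)$, so it suffices to recover $g$ exactly from the noisy evaluations $(a_i)_{i\in\F}$ and output $g(0)$.

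The next step is to bound the number of corrupted points on the line. For each fixed $i\in\F\setminus\{0\}$, as $y$ ranges uniformly over $\F^m$ the point $x+iy$ is uniformly distributed over $\F^m$; hence by the hypothesis $\Pr_y[a_i\ne g(i)]\le\gamma$ (if $\DS$ is randomized one also averages over the fresh internal randomness of that invocation; by the standard reduction one may as well take $\DS$ deterministic). By linearity of expectation the expected number of indices $i\ne 0$ with $a_i\ne g(i)$ is at most $\gamma(|\F|-1)$, so Markov's inequality gives that with probability at least $1-4\gamma$ this count is strictly below $(|\F|-1)/4$. The single remaining point $i=0$ has value $a_0=\DS_q(x)$, which may be wrong for this particular worst-case $x$; counting it as a potential error, the total number of errors among $(a_i)_{i\in\F}$ is at most $(|\F|-1)/4+1<(|\F|+3)/4$ with probability at least $1-4\gamma$.

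Finally, the evaluation vector of a degree-$\le d$ univariate polynomial over $\F$ is a codeword of the Reed--Solomon code of length $|\F|$ and minimum distance $|\F|-d$; since $d\le|\F|/10$ this distance is at least $0.9|\F|$, so the code uniquely corrects any fewer than $(|\F|-d)/2\ge 0.45|\F|$ errors, which comfortably exceeds $(|\F|+3)/4$. Hence running Berlekamp--Welch on $(a_i)_{i\in\F}$ recovers $g$ exactly whenever the above error event fails, and outputting $g(0)$ returns $q(x)$. Therefore $\Pr[\DS'_q(x)=q(x)]\ge 1-4\gamma$ for every $q$ with $\deg(q)\le d$ and every $x\in\F^m$, as claimed.

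\textbf{Main obstacle.} There is no deep difficulty here; the only point requiring care is the bookkeeping that combines the "random" errors on the $|\F|-1$ points $i\ne 0$ (controlled only through Markov, since points of a random line are pairwise but not mutually independent — this is precisely what forces the factor $4$ rather than $1+o(1)$) with the single adversarial error at $i=0$, and then the short arithmetic check that the resulting error budget $\approx|\F|/4$ stays below the unique-decoding radius $(|\F|-d)/2$ under the constraint $d\le|\F|/10$. A secondary detail to dispatch is the degenerate direction $y=0$, which occurs with probability $|\F|^{-m}$ and is absorbed into the failure bound.
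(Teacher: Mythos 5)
Your proof is correct and takes essentially the same route as the paper: sample a random line through the query point $x$, query the given data structure at all $|\F|$ points of the line, and unique-decode the resulting noisy univariate evaluation table to a degree-$\le d$ polynomial, giving $|\F|\cdot t$ query time with preprocessing and memory unchanged. The paper's own argument is terser (it simply states that the agreement on the line is at least $3/4$ with probability $\ge 1-4\gamma$ and appeals to the unique-decoding radius), whereas you separate the one adversarial point $i=0$ from the $|\F|-1$ points controlled by Markov before comparing against $(|\F|-d)/2$ — a cleaner bookkeeping of the same calculation. One small inaccuracy in your parenthetical: for a line through a \emph{fixed} point $x$ with a random direction, the other points are pairwise dependent (knowing $x+iy$ determines $y$ and hence all others); the argument doesn't need any independence, only linearity of expectation, so this does not affect the proof.
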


Before proceeding with the proofs of the lemmas above, we will need the following proposition.

\begin{proposition}\label{prop:DS-RM-0.01-to-0.99-for-all-q}
    Let $\F$ be a prime field, $d \leq \abs{\F} /10$, and let $\alpha > 2\sqrt{\frac{d}{\abs{\F}}}$. Let $n = \binom{d+m}{m}$ be the input length---the number of coefficients in a polynomial $q \colon \F^m \to \F$ of total degree at most $d$,

    Let $\DS$ be a data structure for $\RM_{\F,m,d}$ with preprocessing time $p$, that stores $s$ field elements, and has query time $t$.
    Then there exists another data structure $\DS'$ for $\RM_{\F,m,d}$ with preprocessing time $p+n$, that stores $s+m+1$ field elements, has query time $\abs{\F}t$, and satisfies the following guarantee for all input polynomials $q$ of degree at most $d$.
    \begin{equation*}
      \text{If $\Pr_{x}[\DS_q(x)=q(x)] \geq \alpha$, then $\Pr_{x}[\DS'_q(x) = q(x)]  \geq 1-\sqrt{\frac{d}{\alpha \cdot \abs{\F}}}$}
      \enspace.
    \end{equation*}
\end{proposition}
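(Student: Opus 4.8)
The plan is to realize the ``list decoding with preprocessing'' idea from the technical overview. In preprocessing, $\DS'$ runs the preprocessing of $\DS$ on the input polynomial $q$ and, in addition, samples a uniformly random reference point $w\in\F^m$, evaluates $z:=q(w)$ directly from the $n$ coefficients of $q$ in $O(n)$ field operations, and stores the pair $(w,z)$; this accounts for the extra $m+1$ stored field elements and the additive $n$ in the preprocessing time. On a query $x\in\F^m$ (if $x=w$ we simply return $z$), $\DS'$ forms the affine line $\ell(u)=x+u(w-x)$, so that $\ell(0)=x$ and $\ell(1)=w$, queries $\DS_q$ at all $|\F|$ points of $\ell$ --- this is the $|\F|\cdot t$ query time --- and records $a_u:=\DS_q(\ell(u))$. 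It then runs a polynomial-time Reed--Solomon list-decoder to compute the list $\mathcal{L}$ of all univariate polynomials $g$ of degree at most $d$ that agree with the received word $(a_u)_{u\in\F}$ on at least an $\alpha/2$ fraction of coordinates, deletes from $\mathcal{L}$ every $g$ with $g(1)\neq z$, and, if a unique polynomial $g^\star$ survives, outputs $g^\star(0)$ (otherwise it outputs an arbitrary value).

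For the analysis, fix $q$ with $\Pr_x[\DS_q(x)=q(x)]\ge\alpha$, let $G=\{y\in\F^m:\DS_q(y)=q(y)\}$ so that $|G|\ge\alpha|\F|^m$, and consider the joint randomness of the query $x$ and the reference point $w$. Since $\ell$ is determined by its marked points $\ell(0)=x$ and $\ell(1)=w$, which (up to the negligible event $x=w$) are uniform and independent, the $|\F|$ points $\ell(u)$ are pairwise independent with each point uniform in $\F^m$; hence the restriction $P(u):=q(\ell(u))$ is a degree-$\le d$ polynomial with $P(0)=q(x)$ and $P(1)=z$, and its agreement with $(a_u)_u$ equals $\tfrac{1}{|\F|}\,|\{u:\ell(u)\in G\}|$, a sum of pairwise-independent indicators of mean $\ge\alpha$. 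By Chebyshev's inequality this agreement is at least $\alpha/2$ except with probability $O(1/(\alpha|\F|))$; call this event $E_1$, and note that off $E_1$ we have $P\in\mathcal{L}$. Here the hypothesis $\alpha>2\sqrt{d/|\F|}$ is used crucially: it makes the decoding radius $\alpha/2$ exceed the Johnson radius $\sqrt{d/|\F|}$ of the degree-$d$ Reed--Solomon code over $\F$, so the generalized Johnson bound caps $|\mathcal{L}|$ by a bounded quantity $\ell_{\max}=O\!\big(\tfrac{d/|\F|}{\alpha^2/4-d/|\F|}\big)$. Finally, condition on the underlying line (which fixes $\mathcal{L}$ and $P$) and let $w$ range uniformly over its points: two distinct degree-$\le d$ polynomials agree on at most $d$ of the $|\F|$ parameters, so each $g\in\mathcal{L}\setminus\{P\}$ collides with $P$ at the marked parameter corresponding to $w$ with probability at most $d/|\F|$, and a union bound over $\mathcal{L}$ shows that, off a further event $E_2$ of probability $O(\ell_{\max}\cdot d/|\F|)$, $P$ is the only survivor of the trimming step, whence $\DS'$ outputs $g^\star(0)=P(0)=q(x)$. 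Thus the failure probability is at most $\Pr[E_1]+\Pr[E_2]=O(1/(\alpha|\F|))+O(\ell_{\max}\cdot d/|\F|)$, which in the regime $d\le|\F|/10$, $\alpha>2\sqrt{d/|\F|}$ (so that $d/(\alpha|\F|)$ is a small constant and $\ell_{\max}$ is controlled) is at most $\sqrt{d/(\alpha|\F|)}$.

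The main obstacle is quantitative: combining the two ``square-root'' savings --- the Chebyshev tail bounding $\Pr[E_1]$ and the random-reference-point collision bounding $\Pr[E_2]$, the latter governed by the generalized Johnson list size --- so that they add up to exactly $\sqrt{d/(\alpha|\F|)}$ rather than a constant multiple of it. This requires pinning down the Johnson list-size constant and, if necessary, choosing the list-decoding radius optimally as a function of $\alpha$ and $d/|\F|$ somewhere between $\sqrt{d/|\F|}$ and $\alpha/2$; the point is that $\alpha>2\sqrt{d/|\F|}$ is precisely what guarantees that $\alpha/2$ clears the Johnson radius and hence that $\mathcal{L}$ is short. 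The remaining items --- handling the negligible event $x=w$, the $\poly(|\F|)$ local cost of the list-decoder, and the $O(n)$ cost of evaluating $q$ at $w$ --- are routine and contribute only lower-order terms.
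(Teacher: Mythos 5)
Your construction and analysis match the paper's proof of this proposition essentially step for step: store a random reference point $\vecw$ and $q(\vecw)$ in preprocessing, query $\DS$ along the line through $x$ and $\vecw$, list-decode to agreement $\alpha/2$, and trim the list by checking agreement at $\vecw$; the three quantitative ingredients (pairwise-independent line sampling for placing $q$ in the list, the Johnson bound to keep the list short, and root-counting to show the trimming isolates $q$) are exactly those the paper invokes via \cite{MoshkovitzRaz08} and Schwartz--Zippel. The residual quantitative slack you flag is also present in the paper's own argument, which likewise concludes with $1-O\bigl(\sqrt{d/(\alpha|\F|)}\bigr)$ rather than the bare constant-free bound in the proposition statement, and the paper uses a Markov step over $\vecw$ to convert the joint bound into a per-$\vecw$ guarantee, which is where its square roots arise.
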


We emphasize that in the claim above the data structures do not depend on the polynomial~$q$.
The proposition says that if $q$ is an input such that $\DS$ outputs the correct evaluation $q(x)$ for at least $\alpha$ fraction of the queries $x$,
then $\DS'$ (which also does not depend on any particular $q$) succeeds on $1-\sqrt{\frac{d}{\alpha \cdot \abs{\F}}}$ fraction of the same input $q$.

\begin{proof}[Proof of \cref{prop:DS-RM-0.01-to-0.99-for-all-q}]
    Denote by $\DS$ the data structure for $\RM_{\F,m,d}$ that outputs the correct answer for at least $\alpha$ fraction of inputs to $q$.
    Below we describe the data structure $\DS'$.

\begin{tcolorbox}[title=Data Structure $\DS'$]
    \paragraph{Preprocessing:} Given the polynomial $q$ of degree at most $d$
    \begin{enumerate}
      \item Run the preprocessing procedure for $\DS$ on the input $q$.
      \item Choose a random \emph{reference point} $\vecw \in \F^m$ and compute $q(\vecw)$.
      \item Store $\vecw$ and $q(\vecw)$ in the memory.
    \end{enumerate}

    \paragraph{Query:} Given a query $\vecx \in \F^m$
    \begin{enumerate}
      \item Consider the line $\ell_{\vecx, \vecw} = \{ \vecx +  r(\vecw - \vecx) : r \in \F\}$ going through $\vecx$ and $\vecw$.
      \item Use the query algorithm of $\DS$ to compute $(\DS_q(z) : z \in \ell_{\vecx,\vecw})$.
      \item Let $Q = \{q_1,q_2,\dots,q_k\}$ be all the univariate polynomials of degree at most $d$
      that agree with $(\DS_q(z) : z \in \ell_{\vecx,\vecw})$ on at least $\alpha/2$ fraction of points in $\ell_{\vecx,\vecw}$.
      (It is possible that $Q = \emptyset$.)
      \item Use the value $q(\vecw)$ from the preprocessing phase, and let $Q'  =\{q' \in Q : q'(\vecw) = q(\vecw)\}$.
      (It is possible that $Q' = \emptyset$.)
      \item Choose $q' \in Q'$ arbitrarily and output $q'(\vecx)$.
    \end{enumerate}
\end{tcolorbox}

    Next, we claim that if $\alpha>2\sqrt{\frac{d}{\abs{\F}}}$, then
    for at least $1-O\left(\sqrt{\frac{d}{\abs{\F \alpha}}}\right)$ fraction of the queries $\vecx$
    the query phase correctly outputs $q(\vecx)$.

    It will be convenient to consider a function $A \colon \F^m \to \F$ defined as $A(z) = \DS_q(z)$ for all $z\in \F^n$.
    Note that $A$ agrees with $q$ on at least $\alpha$-fraction of points.
    Furthermore, note that for simplicity we may assume that $q$ is the all zeros polynomial.
    Indeed, we can define $A'(x) := A(x) - q(x)$, and consider the case where the input is the all zeros
    polynomial, and the query algorithm is $A'$.
    Therefore, (1) we have $\Pr_{x \in \F^m}[A(x) = 0] \geq \alpha$,
    and (2) in the preprocessing phase we know that $q(\vecw) = 0$ for a random point $\vecw$,
    though it is not necessarily true that $A(\vecw) = 0$.

    The following three claims complete the proof of \cref{prop:DS-RM-0.01-to-0.99-for-all-q}.
    \begin{claim}\label{claim:most-ws-are-good1}
    In the preprocessing phase, for a random choice of the reference point $\vecw$
    with high probability over $\vecx$ it holds that $\vec{0} \in Q$, and hence in $Q'$. More formally, we have
    \begin{equation*}
        \E_{\vecw \in \F^m}[\Pr_{x}[\vec{0} \in Q]] \geq 1 - \frac{4}{\abs{\F}\alpha} \enspace.
    \end{equation*}
    In particular, by Markov's inequality, for at least $1 - \sqrt{\frac{4}{\abs{\F}\alpha}}$ of $\vecw$'s it holds that
    \begin{equation}\label{eq:zero-in-Q}
        \Pr_{x}[\vec{0} \in Q] \geq 1 - \sqrt{\frac{4}{\abs{\F}\alpha}} \enspace.
    \end{equation}
    \end{claim}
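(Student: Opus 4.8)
The plan is to unwind the definition of $Q$ in the reduced setting (where $q\equiv0$, so the candidate constant polynomial $\vec{0}$ restricted to any line is identically $0$), turning the claim into a concrete second-moment estimate. Parametrize the line by $\vecz_r := \vecx + r(\vecw-\vecx) = (1-r)\vecx + r\vecw$ for $r\in\F$, so $\ell_{\vecx,\vecw}=\{\vecz_r:r\in\F\}$ with $\vecz_0=\vecx$ and $\vecz_1=\vecw$. Then $\vec{0}\in Q$ holds exactly when $\vec{0}$ agrees with $(A(\vecz_r))_{r\in\F}$ on an $\alpha/2$-fraction of the line, i.e.\ when
\begin{equation*}
  N := N(\vecx,\vecw) = \abs{\{r\in\F : A(\vecz_r)=0\}} \;\geq\; \tfrac{\alpha}{2}\abs{\F}\,,
\end{equation*}
and $\vec{0}(\vecw)=0=q(\vecw)$ gives $\vec{0}\in Q\iff\vec{0}\in Q'$. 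Hence it suffices to show $\Pr_{\vecx,\vecw}[N\geq\tfrac{\alpha}{2}\abs{\F}]\geq 1-\tfrac{4}{\abs{\F}\alpha}$; Fubini then yields $\E_{\vecw}[\Pr_{\vecx}[\vec{0}\in Q]]=\Pr_{\vecx,\vecw}[\vec{0}\in Q]\geq1-\tfrac{4}{\abs{\F}\alpha}$, and Markov's inequality applied to the nonnegative function $\vecw\mapsto 1-\Pr_{\vecx}[\vec{0}\in Q]$ gives \eqref{eq:zero-in-Q}.

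The core is a first-and-second moment computation for $N=\sum_{r\in\F}\1[A(\vecz_r)=0]$, whose only ingredient is a pairwise-independence property of the points $\vecz_r$. Write $p_0:=\Pr_{z\in\F^m}[A(z)=0]\geq\alpha$. First, for every fixed $r\in\F$ the map $(\vecx,\vecw)\mapsto\vecz_r=(1-r)\vecx+r\vecw$ pushes the uniform distribution on $\F^m\times\F^m$ to the uniform distribution on $\F^m$ (it is projection to $\vecx$ when $r=0$, and for $r\neq0$ one fixes $\vecx$ and uses that $\vecw\mapsto r\vecw$ is a bijection), so $\E[\1[A(\vecz_r)=0]]=p_0$ and $\E[N]=\abs{\F}\,p_0$. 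Second, for $r\neq r'$ the map $(\vecx,\vecw)\mapsto(\vecz_r,\vecz_{r'})$ is a linear bijection of $\F^m\times\F^m$, since on each coordinate it is given by the matrix $\left(\begin{smallmatrix}1-r & r\\ 1-r' & r'\end{smallmatrix}\right)$ with determinant $r'-r\neq0$; hence $(\vecz_r,\vecz_{r'})$ is uniform on $(\F^m)^2$ and the indicators $\1[A(\vecz_r)=0]$, $\1[A(\vecz_{r'})=0]$ are independent, each of mean $p_0$. Therefore $\Var[N]=\abs{\F}\,p_0(1-p_0)\le\abs{\F}\,p_0$.

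Finally I would apply Chebyshev's inequality. Since $\alpha\le p_0$ we have $\tfrac{\alpha}{2}\abs{\F}\le\tfrac12\E[N]$, so $\{N<\tfrac{\alpha}{2}\abs{\F}\}\subseteq\{\,\abs{N-\E[N]}>\tfrac12\E[N]\,\}$ and
\begin{equation*}
  \Pr_{\vecx,\vecw}\bigl[N<\tfrac{\alpha}{2}\abs{\F}\bigr]
  \;\le\;\frac{\Var[N]}{(\E[N]/2)^2}
  \;=\;\frac{4(1-p_0)}{\abs{\F}\,p_0}
  \;\le\;\frac{4}{\abs{\F}\,p_0}
  \;\le\;\frac{4}{\abs{\F}\,\alpha}\,,
\end{equation*}
which, via Fubini and Markov as above, proves both displays of the claim. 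One small point deserves a remark rather than a separate argument: in the degenerate case $\vecx=\vecw$ (probability $\abs{\F}^{-m}$) the ``line'' is a single point, but then $\vec{0}\in Q$ holds iff $A(\vecx)=0$ iff $N=\abs{\F}\ge\tfrac{\alpha}{2}\abs{\F}$, so the equivalence $\vec{0}\in Q\iff N\ge\tfrac{\alpha}{2}\abs{\F}$ and all moment identities remain valid verbatim. I do not expect a genuine obstacle here; the only mildly non-obvious step is the pairwise independence of the line points, and that is immediate from the $2\times2$ determinant being $r'-r$.
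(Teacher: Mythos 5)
Your proof is correct and uses essentially the same argument as the paper: the paper invokes the line-sampling bound of Moshkovitz--Raz (Corollary~1.2), which is exactly the pairwise-independence-of-line-points plus Chebyshev computation you carry out explicitly, and then applies Fubini and Markov as you do. You are slightly more careful than the paper's one-line citation in tracking that the density $p_0 = \Pr_z[A(z)=0]$ may strictly exceed $\alpha$ (taking $\eps = p_0/2$ rather than $\eps = \alpha/2$ so the final bound $4(1-p_0)/(\abs{\F}p_0) \le 4/(\abs{\F}\alpha)$ goes through cleanly), but this is a minor tightening, not a different route.
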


    \begin{proof}
        It is a standard fact in the literature on derandomization (see, e.g., \cite[Corollary~1.2]{MoshkovitzRaz08})
        that for any set $O \seq \F^n$ of size $\abs{O} = \alpha \abs{\F}^n$,
        and a random line $\ell_{\vecx, \vecw} = \{ \vecx + t( \vecw - \vecx) : t \in \F\}$
        that passes through uniformly random $\vecx, \vecw \in \F^n$ it holds that
        \begin{equation*}
            \Pr\left[\abs{\frac{\abs{\ell_{\vecx, \vecw} \cap O}}{\abs{\ell_{\vecx, \vecw}}} - \alpha} > \eps \right] \leq \frac{1}{\abs{\F}}\frac{\alpha}{\eps^2}
            \enspace.
        \end{equation*}
        The conclusion of the claim follows by letting $O = \{x \in \F^n : A(x) = 0\}$, and $\eps = \alpha/2$.
    \end{proof}

    \begin{claim}[{\cite[Proposition~3.5]{MoshkovitzRaz08}}]\label{claim:list-is-short}
    Choose $\vecw$ and $\vecx$ uniformly at random and consider the line~$\ell_{\vecx,\vecw}$.
    Let $Q = \{q_1,q_2,\dots,q_k\}$ be all the univariate polynomials of degree at most $d$
    that agree with~$A$ on at least $\alpha/2$ fraction of points in $\ell_{\vecx,\vecw}$.
    If $\alpha > 2\sqrt{\frac{d}{\abs{\F}}}$, then $k \leq 2/\alpha$.
    \end{claim}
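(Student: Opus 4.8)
The plan is to recognize \cref{claim:list-is-short} as the Johnson list-decoding bound for Reed--Solomon codes, specialized to the line $\ell = \ell_{\vecx,\vecw}$. First I would identify $\ell$ with $\F$ through its parametrization $r \mapsto \vecx + r(\vecw-\vecx)$ and set $g \colon \F \to \F$ to be the received word $g(r) = A(\vecx + r(\vecw-\vecx))$; then, by definition, $Q$ is exactly the set of univariate polynomials of degree at most $d$ agreeing with $g$ on at least $\tfrac{\alpha}{2}\abs{\F}$ of the $\abs{\F}$ points of $\ell$, so the bound $k \le 2/\alpha$ is precisely \cite[Proposition~3.5]{MoshkovitzRaz08}, which I would simply invoke. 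For self-containedness I would also include the short counting argument sketched next.

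The only algebraic fact I would use is that two distinct polynomials of degree at most $d$ agree on at most $d$ points (their difference is a nonzero polynomial of degree at most $d$). Writing $Q = \{q_1,\dots,q_k\}$ and, for $z \in \F$, $c(z) = \abs{\{ i : q_i(z) = g(z)\}}$, a double count gives $\sum_{z} c(z) \ge k \cdot \tfrac{\alpha}{2}\abs{\F}$ (each $q_i$ agrees with $g$ on at least $\tfrac{\alpha}{2}\abs{\F}$ points), while counting pairs that agree with $g$ at a common point gives $\sum_z c(z)(c(z)-1) \le k(k-1)d$ (each of the $k(k-1)$ ordered pairs has at most $d$ common agreement points). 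Convexity of $x \mapsto x(x-1)$ gives $\sum_z c(z)(c(z)-1) \ge \abs{\F}\,\bar c(\bar c-1)$ with $\bar c = \abs{\F}^{-1}\sum_z c(z) \ge \tfrac{\alpha}{2} k$; if $k < 2/\alpha$ there is nothing to prove, so I may assume $k \ge 2/\alpha$, hence $\bar c \ge 1$, and then combine the two bounds and rearrange, using the hypothesis $\alpha > 2\sqrt{d/\abs{\F}}$ (i.e.\ $(\alpha/2)^2\abs{\F} > d$) to keep the relevant denominator positive; this yields $k = O(1/\alpha)$.

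The step I expect to be the only delicate one is pinning down the exact constant: the naive double count above comfortably gives $k \le 4/\alpha$, and $k \le 2/\alpha$ once $\alpha$ sits a constant factor above the Johnson radius $2\sqrt{d/\abs{\F}}$, but squeezing the bound to exactly $2/\alpha$ at the threshold requires the slightly sharper (weighted/geometric) form of the Johnson bound carried out in \cite[Proposition~3.5]{MoshkovitzRaz08}. Since that proposition is available to cite, this is not a genuine obstacle; moreover, replacing $2/\alpha$ by $O(1/\alpha)$ throughout \cref{prop:DS-RM-0.01-to-0.99-for-all-q} and \cref{lemma:DS-RM-0.01-to-0.99} would only change constant factors in the resulting query time.
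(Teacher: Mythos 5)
The paper gives no proof for this claim; it is imported verbatim by citing Proposition~3.5 of Moshkovitz--Raz, so your primary plan --- to invoke that proposition --- is exactly the paper's treatment. Your supplementary Johnson-bound double-counting sketch is correct in its ingredients (the bound $\sum_z c(z)(c(z)-1)\le k(k-1)d$ from the degree argument, the lower bound $\sum_z c(z)\ge k\cdot\tfrac{\alpha}{2}\abs{\F}$, and convexity), and you are right to flag that this naive count degenerates as $\alpha$ approaches the threshold $2\sqrt{d/\abs{\F}}$ and only yields $O(1/\alpha)$ a constant factor above it, which is precisely why the sharper cited bound is what the paper relies on.
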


    \begin{claim}\label{claim:schwartz-zippel}
    Choose $\vecw$ and $\vecx$ uniformly at random and consider the line $\ell_{\vecx,\vecw}$.
    Let $Q = \{q_1,q_2,\dots,q_k\}$ be all the univariate polynomials of degree at most $d$
    that agree with $A$ on at least $\alpha/2$ fraction of points in $\ell_{\vecx,\vecw}$.
    Then, for all $q_i \in Q$ that are not identically zero it holds that
    $\Pr[q_i(\vecw) = 0] \leq \frac{d}{\abs{\F}}$.

    In particular, if $\alpha > 2\sqrt{\frac{d}{\abs{\F}}}$ then
    \begin{equation*}
      \Pr_{\vecw}\left[\Pr_{\vecx}[\forall q_i \in Q \setminus \{\vec{0}\}: q_i(\vecw) \neq 0] \geq 1 - \sqrt{\frac{2d}{\alpha\abs{\F}}} \right] \geq 1 - \sqrt{\frac{2d}{\alpha\abs{\F}}} \enspace.
    \end{equation*}
    \end{claim}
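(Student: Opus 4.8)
The plan is to prove the two assertions of \cref{claim:schwartz-zippel} by conditioning on the line and then applying (i) the univariate root bound and (ii) Markov's inequality. Recall that in the proof of \cref{prop:DS-RM-0.01-to-0.99-for-all-q} we have already reduced to the case where the input polynomial $q$ is identically zero, so $q(\vecw)=0$ and the restriction of $q$ to any line is the zero univariate polynomial; thus a ``wrong'' candidate that survives the reference-point filter $Q' = \{q' \in Q : q'(\vecw) = q(\vecw)\}$ is precisely a nonzero $q_i \in Q$ with $q_i(\vecw)=0$. The core arithmetic fact is that since $d \leq \abs{\F}/10 < \abs{\F}$, a nonzero univariate polynomial of degree at most $d$ over $\F$ has at most $d$ roots, and hence, viewed as a function on the $\abs{\F}$ points of a line, vanishes at at most $d$ of them.

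For the first assertion, condition on the line $\ell = \ell_{\vecx,\vecw}$. First, each element of $Q$ is naturally a function $\ell \to \F$ (degree at most $d$ and evaluation at a point are invariant under affine reparametrization of the line, and for $\abs{\F} > d$ distinct low-degree polynomials give distinct functions), and with this identification $Q$ depends only on the values $(\DS_q(z) : z \in \ell)$ together with the fixed parameters $\alpha,d$ --- in particular not on which point of $\ell$ is distinguished as $\vecw$. Second, since $\vecx$ and $\vecw$ are independent and uniform, conditioned on $\ell_{\vecx,\vecw} = \ell$ the ordered pair $(\vecx,\vecw)$ is uniform over ordered pairs of distinct points of $\ell$, so the marginal of $\vecw$ is uniform over all $\abs{\F}$ points of $\ell$ and is independent of the (now fixed) set $Q$. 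Therefore for any fixed nonzero $q_i \in Q$ we get $\Pr_{\vecw}[q_i(\vecw)=0 \mid \ell] \leq d/\abs{\F}$, and averaging over $\ell$ yields $\Pr[q_i(\vecw)=0] \leq d/\abs{\F}$.

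For the ``in particular'' statement, invoke \cref{claim:list-is-short}, which under the hypothesis $\alpha > 2\sqrt{d/\abs{\F}}$ gives $\abs{Q} \leq 2/\alpha$. A union bound over the at most $2/\alpha$ nonzero elements of $Q$ then gives
\begin{equation*}
    \Pr_{\vecx,\vecw}\bigl[\exists\, q_i \in Q \setminus \{\vec{0}\} : q_i(\vecw)=0\bigr] \;\leq\; \frac{2}{\alpha}\cdot\frac{d}{\abs{\F}} \;=\; \frac{2d}{\alpha\abs{\F}} \enspace.
\end{equation*}
Writing the left-hand side as $\E_{\vecw}[g(\vecw)]$ with $g(\vecw) := \Pr_{\vecx}[\exists\, q_i \in Q \setminus \{\vec{0}\} : q_i(\vecw)=0]$ and applying Markov's inequality to $g$ gives $\Pr_{\vecw}[g(\vecw) \geq \sqrt{2d/(\alpha\abs{\F})}] \leq \sqrt{2d/(\alpha\abs{\F})}$, which is exactly the displayed inequality in the claim.

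The only genuinely subtle point is the decoupling in the second paragraph: although the query algorithm produces $Q$ from a parametrization of the line rooted at $\vecx$, one must observe that $Q$ (as a set of functions on $\ell$) and the root count of each $q_i$ do not see the distinguished point, so after conditioning on $\ell$ the reference point $\vecw$ is a ``fresh'' uniform point of $\ell$ to which the root bound applies verbatim. Everything else --- the univariate root bound, the union bound over the short list, and the Markov step that swaps the roles of $\vecw$ and $\vecx$ --- is routine. One harmless technicality is the apparent off-by-one between $\abs{\F}$ and $\abs{\F}-1$ coming from $\vecx \neq \vecw$: it does not actually arise, since it is only the \emph{ordered pair} that is constrained and the marginal of $\vecw$ given the line is genuinely uniform over all $\abs{\F}$ points.
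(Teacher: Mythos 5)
Your proposal is correct and follows the same route as the paper's own proof: the univariate root bound for each nonzero $q_i$, a union bound over the list of size at most $2/\alpha$ from \cref{claim:list-is-short}, and Markov's inequality on the $\vecw$-average to obtain the displayed two-sided bound. The one place you are noticeably more careful than the paper is the decoupling step: the paper's phrasing ``for any choice of $\vecx$ if $\vecw$ is chosen uniformly at random\dots by Schwartz--Zippel'' glosses over the fact that $Q$ is built from the line $\ell_{\vecx,\vecw}$ and hence a priori depends on $\vecw$; your observation that $Q$, viewed as a set of functions on $\ell$, depends only on the line and not on which of its points is designated $\vecw$, combined with the fact that $\vecw$ is uniform on $\ell$ given $\ell$, is exactly the justification needed to make the paper's first sentence rigorous.
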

    \begin{proof}
        For any choice of $\vecx$ if $\vecw$ is chosen uniformly at random, and each univariate polynomial $q_i$ is of degree $d$, the by Schwarz-Zippel lemma
        \begin{equation*}
            \Pr[q_i(\vecw) = 0] \leq \frac{d}{\abs{\F}} \enspace.
        \end{equation*}
        Also, by \cref{claim:list-is-short}, we know that $\abs{Q} = k \leq 2/\alpha$, and hence, by union bound
        \begin{equation*}
            \Pr_{\vecx,\vecw}\left[\exists q_i \in Q \setminus \{\vec{0}\}: q_i(\vecw) = 0 \right]
            \leq \frac{k d}{\abs{\F}} \leq \frac{2d}{\alpha\abs{\F}} \enspace.
        \end{equation*}
        This implies 
        \begin{equation*}
            \E_{\vecw}\left[\Pr_{\vecx}[\forall q_i \in Q \setminus \{\vec{0}\}: q_i(\vecw) \neq 0]\right]
            \geq 1 - \frac{k d}{\abs{\F}} \geq 1 - \frac{2d}{\alpha\abs{\F}} \enspace.
        \end{equation*}
        The claim follows by Markov's inequality.
    \end{proof}
    We now return to the proof of \cref{prop:DS-RM-0.01-to-0.99-for-all-q}.
    By the claims above, for most $\vecw$'s it holds that if we choose $\vecw$ as a reference point,
    then for most $\vecx$'s, we have $\vec{0} \in Q$, and there is no other univariate polynomial $q_i$ in $Q$ such that $q_i(\vecw)=0$.
    More precisely, 
    by combining \cref{claim:most-ws-are-good1} with \cref{claim:schwartz-zippel}
    for at least $1 - O\left(\sqrt{\frac{d}{\alpha\abs{\F}}} \right)$ fraction of $\vecw$'s it holds that
    \begin{equation}\label{eq:non-zero-not-in-Q}
      \Pr_x[\vec{0} \in Q \wedge \forall q \in Q : q \not \equiv 0, q(\vecw) \neq 0]  \geq 1 - O\left(\sqrt{\frac{d}{\alpha\abs{\F}}}\right)
      \enspace,
    \end{equation}
    as required.
\end{proof}

Now we proceed with proving \cref{lemma:DS-RM-0.01-to-0.99}.

\begin{proof}[Proof of \cref{lemma:DS-RM-0.01-to-0.99}]
    Suppose there is a data structure $\DS$ as in the assumption of \cref{lemma:DS-RM-0.01-to-0.99},
    with success probability $\Pr_{q, x}[\DS_q(x)= q(x)] \geq \alpha$.
    We show below how to construct a data structure $\DS'$ that will work for all input polynomial $q$ and for most queries $x$.

\begin{tcolorbox}[title=Preprocessing:]
\paragraph{Input:} An input polynomial $q$ of degree at most $d$
      \begin{enumerate}
        \item Identify $q$ with a vector $q \in \F^n$ of its coefficients for $n = \binom{m+d}{d}$.

        \item Let $Z = \{q \in \F^n : \abs{X_q} \geq \frac{\alpha}{2} \cdot \abs{\F}^n \}$.
        
        \item Let $O_Z$ be a membership oracle for $Z$ that given a polynomial $q'$ estimates $\frac{\abs{X_{q'}}}{\abs{\F}^n}$,
        the fraction of points on which $\DS_{q'}$ outputs $q'(x)$ correctly,
        within an additive error of $\alpha/10$, and returns ACCEPT if and only if the estimated fraction is more than $\alpha/3$.\footnote{
        This is done by sampling $O(1/\alpha^2)$ uniformly random $x$'s in $\F^n$, computing $\DS_{q'}(x)$ and $q'(x)$, and comparing the two results.
        In particular, if $\abs{X_{q'}} \geq \frac{\alpha}{2}\cdot \abs{\F}^n$, then $O_Z(q') = ACCEPT$ with probability $1 - \eps$,
        and if $\abs{X_{q'}} \leq \frac{\alpha}{4}\cdot \abs{\F}^n$, then $O_Z(q') = REJECT$ with probability $1 - \eps$.}

        \item By applying \cref{cor:y=4x+s}, with probability $1-\delta$ we obtain a vector $u$ with at most $O(\log^4(1/\alpha))$ non-zero elements such that
        \begin{equation*}
            \Pr_{q_1,q_2,q_3 \in \F^n}[q_1,q_2,-q_3,-q_4 \in Z] \geq \Omega(\alpha^5) \:,
        \end{equation*}
        where 
        $q_4 \in \F^n$ is such that $q-u=q_1+q_2-q_3-q_4$.
        
        \item Therefore, given $q$ and $s$ we can sample $O(\log(1/\delta)\cdot \log^4(1/\alpha))$ triplets of vectors
        until we find a triplet $(q_1,q_2,q_3)$ and let $q_4 = q_1+q_2-q_3-q-u$ satisfying $q_1,q_2 \in Z, -q_3,- q_4 \in Z$.
        Note that we can use the membership oracle $O_Z$ to check that the vectors belong to $Z$.
        
        \item Note that since each $v_i$ belongs to $Z$, we have that $\DS$
        outputs $p_i(x)$ correctly on at least $\alpha/4$ fraction of inputs.
        Thus, we can apply \cref{prop:DS-RM-0.01-to-0.99-for-all-q} on $\DS$ and
        obtain the data structure $\DS'$ such that $\Pr_{x \in \F^n}[\DS'_{q_i}(x) = q_i(x)] \geq 1-O\left(\sqrt{\frac{d}{\alpha \cdot \abs{\F}}}\right)$
        for all $i=1,2$, and  $\Pr_{x \in \F^n}[\DS'_{-q_j}(x) = -q_j(x)] \geq 1-O\left(\sqrt{\frac{d}{\alpha \cdot \abs{\F}}}\right)$ for $j=3,4$.
        
        \item We store all the memory obtained by preprocessing the polynomials $q_1,q_2,q_3,q_4$ with $\DS$.
        We also store the sparse vector $s$ by storing the $O(\log^4(1/\alpha))$ non-zero coordinates, and their values.
      \end{enumerate}
\end{tcolorbox}

For every polynomial $q$ of degree at most $d$, let $X_q = \{x \in \F^m : \DS_q(x) = q(x)\}$.
    By averaging, there is a set $Z$ of degree $d$ polynomials such that
    $\abs{Z} \geq \alpha/2 \cdot \abs{\F}^n$, and $\abs{X_q} \geq \alpha/2 \cdot \abs{\F}^m$ for every $q \in Z$.
    Furthermore, note that it is straightforward to construct a membership oracle $O_Z$ for $Z$,
    that given a polynomial $q$ and access to $\DS_q$ estimates the fraction of queries $x$ on which $\DS_q(x) = q(x)$.
    
Below we describe the preprocessing phase and the query phase of $\DS'$.

    The preprocessing on an input $q$ works as follows.
    By identifying $q$ with the vector of its coefficients in $\F^n$ with $n = \binom{m + d}{m}$,
    we use \cref{cor:y=4x+s} to represent the vector $q$ as $q = q_1 + q_2 - q_3 - q_4 + u$, where each $q_i \in Z$ and $u$ is a sparse vector.
    Then, for each $q_i$ we the reduction from \cref{prop:DS-RM-0.01-to-0.99-for-all-q}
    to obtain a data structure that works for each of the $q_i$ for almost all queries $x$.
    
    In the query phase, we use the data structures for each $q_i$ to compute $q_i(x)$, 
    and for the sparse polynomial $s$, we simply compute $u(x)$ using brute force.
    Finally, we return $q_1(x) + q_2(x) - q_3(x) - q_4(x) + u(x)$.

    \paragraph{Preprocessing time and space:} In the preprocessing step, we store the memory of the preprocessing for each of $q_i$ and 
    in addition the sparse vector $u$.
    Hence, the total space used is $4s + \log^4(1/\alpha)$ field elements + additional $\log^4(1/\alpha)$ coordinates of the input.
    Also, the running time is determined by number of samples needed to construct the oracle in Step 1 using
    \cref{cor:y=4x+s}. Both these steps
    take at most $O(\log^4(1/\alpha) \cdot \log(1/\delta))$ samples, bounding the running time of the preprocessing step.

    Next we describe the query phase of our data structure.

\begin{tcolorbox}[title=Query phase:]
    \paragraph{Input:} A query $\vecx \in \F^m$. 
    
    Recall that for the polynomial $q$, we
    have stored high-agreement data structures for evaluating $q_1, q_2, q_3, q_3$, together with a polynomial which is represented by a sparse vector of coefficient $u$.

    \begin{enumerate}
        \item
         For each polynomial $q_i$ let $y_i = \DS'_{q_i}(\vecx)$.
      \item
        Compute $u(x)$. Since $u$ has at most $O(\log^4(1/\alpha))$ non-zero coordinates, it follows that $u(x)$ can be computed in query time $O(\log^4(1/\alpha) \log(n))$.
      \item
        Return $y_1 + y_2 - y_3 - y_4 + u(x)$.
    \end{enumerate}
\end{tcolorbox}

    \paragraph{Query time:} The query time consists of querying the data structure $4$ times, and evaluating $u(x)$.
    Note that the high-agreement data structure makes $\abs{\F}$ queries
    to the weak-average-case data structure, and each query takes time $t$. Thus, the total
    query time is $4\abs{\F}\cdot t + O(\log^4(1/\alpha) \log(n)))$.

    \paragraph{Correctness:} To prove the correctness, we bound the failure probability of the algorithm.
    Note that the algorithm returns the correct answer, unless for one of the polynomials $q_i$ it holds that,
    $\DS'_{q_i}(\vecx) \neq q_i(\vecx)$. This event happens with probability at most
    $O(\sqrt{\frac{d}{\abs{\F}\alpha}})$.
    Hence, by applying union bound we can bound the failure probability to $O(\sqrt{\frac{d}{\abs{\F}\alpha}})$.
\end{proof}

Finally, we prove \cref{lemma:DS-RM-0.99-to-worst-case}.
\begin{proof}[Proof of \cref{lemma:DS-RM-0.99-to-worst-case}]
    The proof of this lemma basically relies on the local decoding algorithm for Reed-Muller codes.
    Given a point $\vecx$, the query algorithm samples a random line $\ell_{x,y} = \{ \vecx + r (\vecy - \vecx) : r \in \F\}$ passing through $\vecx$, and queries the data structure for all the points on this line.
    Given these values, the algorithm finds the closest univariate polynomial of degree at most $d$, call it $h$,
    and outputs $h(\vecx)$. It is not hard to see that the algorithm succeeds with probability at least $1-4\gamma$.

    For correctness since $\DS_p$ agrees with $q$ on $1-\gamma$ fraction of the points $z \in \F^n$,
    it follows that for a random line $\ell$ through $x$
    the data structure $\DS$ satisfies
    $\Pr[agr \geq 3/4] \geq 1-4\gamma$,
    where $agr$ denotes the fraction of points $z \in \ell$ with $\DS_{q_i}(z) = q_i(z)$.
    For each such line $\ell$ the only polynomial that agrees with $\DS$ on $\ell$ is $q_i$,
    and hence with probability at least $1-4\gamma$ the data structure outputs $q_i(x)$, as required.
\end{proof}

\paragraph{Putting it all together:} Below we summarize the reductions above, and describe the full reduction
that given a weak-average-case data structure $\DS$ that computes the correct answer for only $\alpha$ fraction of $(q,x)$,
gives us a data structure that works with high probability for all inputs $q$ and all queries $x$.

We first use \cref{lemma:DS-RM-0.99-to-worst-case}, reducing the problem to evaluating $p$ on a random line $\ell$ passing through $\vec{x}$.
Then, we apply \cref{lemma:DS-RM-0.01-to-0.99}, to write $p$ as sum of $5$ polynomials, for which we know one of them is sparse and can be computed efficiently,
and the other four belong to the set of \textit{good} polynomials, i.e., those polynomials for which the data structure succeeds in evaluating them on all
but a small fraction of inputs. Finally, for each of these polynomials we apply the reduction in \cref{prop:DS-RM-0.01-to-0.99-for-all-q}.
This last step corresponds to choosing a random reference point $\vecw \in \F^n$,
and passing lines between every $\vecz \in \ell$ and $\vecw$, and evaluating each of the $q_i$ on each of the $\F$ lines.

\bibliographystyle{alpha}
\bibliography{refs}
\appendix

\newpage
\section{Proof of the probabilistic version of Sanders' lemma}\label{sec:appendix-sanders}

Below we prove \cref{thm:bogolyubov-quasipoly}. The proof follows the approach of Sanders, with several modifications. We follow the exposition of Lovett~\cite{Lovett15} in the proof of the lemma.

\sandersThm*

Before starting with the proof, let us establish some notation.
For a set $A \seq \F^n$, we denote by $1_A \colon \F^n \to \{0,1\}$ the indicator function of $A$,
where $1_A(x) = 1$ if $x \in A$, and $1_A(x) = 0$ otherwise.
In particular $\E_{x \in \F^n}[1_A(x)] = \frac{\abs{A}}{\abs{\F}^n}$.
We also let $\varphi_A \colon \F^n \to \R$ be the normalization of $1_A$ defined as $\varphi_A(x) = 1_A(x) \cdot \frac{\abs{\F}^n}{\abs{A}}$ so that $\E_{x \in \F^n}[\varphi_A(x)] = 1$.
When the set $A$ is a singleton $A = \{a\}$, we will write $\varphi_a = \varphi_{\{a\}}$.

It is easy to verify that for a set $A$ and a function $f$ the convolution $\varphi_A*f$ is given by
$\varphi_A*f(x) = \E_{a \in A}[f(x-a)]$.
In particular, for $a \in \F^n$ we have $\varphi_a*f(x) = f(x-a)$.

As a starting point, we define the set $D = \{d \in \F^n : 1_A * 1_{-A}(d) \geq \delta\}$ for a parameter $\delta = \alpha^2/20$.
That is, $D$ is the set of all \emph{popular differences} of two elements of $A$. In other words, $D$ consists of all $d \in \F^n$ such that
there are $\delta\abs{\F}^n$ pairs $(a,a') \in A^2$ satisfying $d=a-a'$, i.e., $\Pr_{a \in \F^n, a' = d-a}[a \in A, a' \in -A] \geq \delta$.

Note first that $\ip{1_{A-A}, \varphi_A * \varphi_{-A}} = \E_{x,y \in \F^n}[1_{A-A}(x-y)\varphi_A(x)\varphi_{-A}(-y)] = \E_{x,y \in A}[1_{A-A}(x-y)] = 1$.
Next we observe that $D$ approximates $A-A$, in the sense that $\ip{1_{D}, \varphi_A * \varphi_{-A}} \geq 1 - \frac{\delta}{\alpha^2} = 0.95$.
Indeed, using the fact that $D \seq A-A$, we have
\begin{align}
    \ip{1_D, \varphi_A * \varphi_{-A}}
    &= 
    \ip{1_{A-A}, \varphi_A * \varphi_{-A}} - 
    \ip{1_{\F^n \setminus D}, \varphi_A * \varphi_{-A}}
    \nonumber\\
    &= 
    1 - 
    \frac{1}{\alpha^2}\ip{1_{\F^n \setminus D}, 1_A * 1_{-A}}
    \nonumber\\
    &= 1 - \frac{1}{\alpha^2}\cdot \Pr_{d,a \in \F^n} [a \in A, d-a \in -A | d \notin D] \cdot \Pr_{d \in \F^n}[d \notin D]
         \nonumber\\
     &\geq 1 - \frac{\delta}{\alpha^2}\,.\label{eq:D-approx-A-A}
 \end{align}

We remark that this is one of the main differences in our proof compared to the original proof of Sanders,
who only relied on the fact that $\ip{1_{A-A}, \varphi_A * \varphi_{-A}} = 1$.

The proof of \cref{thm:bogolyubov-quasipoly} consists of the following two parts.

\begin{lemma}\label{lemma:sanders-step1}
    Let $A \seq \F^n$ be a set of size $\abs{A} = \alpha \abs{\F}^n$. Set $t = O(\log(1/\alpha))$.
    There exists a set $X \seq \F^n$ of size $\abs{X} \geq \alpha^{O(\log^3(1/\alpha))} \abs{\F}^n$ such that for all $x_1,x_2,\dots,x_t \in X$
    it holds that
    \begin{equation}\label{eq:sanders-step1}
        \Pr_{a_1,a_2 \in A}[a_1 - a_2 - \sum_{i=1}^t x_i \in D] \geq 0.9
        \enspace.
    \end{equation}
\end{lemma}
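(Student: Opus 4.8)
The plan is to derive \cref{lemma:sanders-step1} from $L^{2k}$ almost periodicity (the Croot--Sisask lemma), following Sanders' argument. Since $|A| = \alpha|\F|^n$ we have $|A - A| \le (1/\alpha)|A|$, so I would apply Croot--Sisask with ``doubling'' $K = 1/\alpha$ to the pair $(A,-A)$: this produces a set $X \seq \F^n$ with $|X| \ge K^{-O(k/\eps^2)}|\F|^n = \alpha^{O(k/\eps^2)}|\F|^n$ such that the smoothed indicator $g := 1_A * \varphi_{-A}$ is $(\eps,2k)$-almost periodic with respect to single translations by elements of $X$, i.e.\ $\|g - \tau_x g\|_{2k} \le \eps\,\|1_A\|_{2k} = \eps\,\alpha^{1/2k}$ for every $x \in X$ (here $\tau_s f$ denotes the translate $y \mapsto f(y-s)$). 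I would then set the parameters to $2k = \Theta(\log(1/\alpha))$ and $\eps = \Theta(1/t)$ with $t = \Theta(\log(1/\alpha))$; this is precisely the choice for which $\alpha^{O(k/\eps^2)} = \alpha^{O(\log^3(1/\alpha))}$, matching the claimed size of $X$.

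The second step upgrades single-translation almost periodicity to almost periodicity under a sum of $t$ elements of $X$. Because the $L^{2k}$ norm is translation invariant, telescoping gives, for $s = x_1 + \dots + x_t$ with each $x_i \in X$,
\[
\|g - \tau_s g\|_{2k} \;\le\; \sum_{i=1}^{t} \|\tau_{x_1+\dots+x_{i-1}}g - \tau_{x_1+\dots+x_i}g\|_{2k} \;=\; \sum_{i=1}^{t}\|g - \tau_{x_i}g\|_{2k} \;\le\; t\eps\,\alpha^{1/2k}.
\]
The point of taking $2k$ logarithmic is that then $\|1_A\|_{2k} = \alpha^{1/2k} = \Theta(1)$, so the accumulated error $t\eps\,\alpha^{1/2k}$ is a small absolute constant, as small as we like by tuning the constant in $\eps = \Theta(1/t)$.

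The final step converts this into the popular-differences statement. Writing the target quantity as $\Pr_{d\sim\varphi_A*\varphi_{-A}}[g(d-s)\ge \alpha/20]$ (using $1_A*1_{-A}=\alpha\, g$, so that $d\in D$ iff $g(d)\ge \alpha/20$), one compares it against $\Pr_{d\sim\varphi_A*\varphi_{-A}}[g(d)\ge \alpha/20]\ge 0.95$ from \cref{eq:D-approx-A-A}. The discrepancy between the two probabilities is bounded by the $\varphi_A*\varphi_{-A}$-measure of the set of $d$ on which $g(d)$ and $g(d-s)$ lie on opposite sides of the threshold, which in turn is controlled by a level-$2k$ Chebyshev/Markov estimate for $\|g - \tau_s g\|_{2k}$ together with the fact that $\varphi_A*\varphi_{-A}$ has $\ell_\infty$-norm at most $1/\alpha$ (absorbed by taking $2k$ slightly above $\log(1/\alpha)$, so that the $2k$-th power beats $1/\alpha$). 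Since \cref{eq:D-approx-A-A} leaves a genuine $0.05$ slack, one concludes the target probability is at least $0.9$.

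I expect this last step to be the main obstacle, and the reason the \emph{popular-differences} modification recorded in \cref{eq:D-approx-A-A} is needed at all: almost periodicity only controls deviations up to a small absolute constant, whereas the threshold $\delta = \alpha^2/20$ defining $D$ is subconstant, so one cannot argue by naive pointwise closeness of $g$. The reconciliation must balance three competing constraints at once --- (i) the moment $2k$ must be $\Theta(\log(1/\alpha))$ so that $\|1_A\|_{2k}$ is a constant and the $2k$-th moment tail decays like a power of $\alpha$; (ii) $\eps$ must be $\Theta(1/t)$ so the telescoped error stays below the slack; (iii) $t$ itself must be $\Theta(\log(1/\alpha))$ as in the lemma --- and it is this accounting, rather than the Croot--Sisask input or the telescoping, that requires care.
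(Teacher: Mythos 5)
Your choice of $B$ in Croot--Sisask is where the argument goes off the rails. You take $B = -A$, obtaining $L^{2k}$-almost-periodicity of $g = 1_A * \varphi_{-A}$, and then propose a level-set comparison: show that $g(d)$ and $g(d-s)$ rarely sit on opposite sides of the threshold $\alpha/20$. But with your parameters ($2k = \Theta(\log(1/\alpha))$, $\eps = \Theta(1/t)$, $t = \Theta(\log(1/\alpha))$) the accumulated error $t\eps\,\alpha^{1/2k}$ is a small absolute \emph{constant}, while the threshold defining $D$ on $g$ is $\Theta(\alpha)$. A level-$2k$ Markov bound controls $\Pr_{\text{unif}}[\,|g - \tau_s g| > c\,]$ by $(\text{err}/c)^{2k}$, and reweighting by $\|\varphi_A * \varphi_{-A}\|_\infty \leq 1/\alpha$ costs another factor $1/\alpha$; for this to beat the $0.05$ slack you would need $c = O(\alpha)$, hence $\text{err} = O(\alpha)$, hence $\eps = O(\alpha/t)$. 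But that forces the Croot--Sisask exponent $O(k/\eps^2)$ up to $\widetilde{O}(\log(1/\alpha)/\alpha^2)$, giving $|X| \geq \alpha^{\widetilde{O}(\alpha^{-2})}|\F|^n$ rather than the claimed $\alpha^{O(\log^3(1/\alpha))}|\F|^n$. The three constraints you list are genuinely incompatible with a level-set reconciliation; the deeper reason is that $L^p$-closeness of two functions gives essentially no control on how often they straddle a fixed threshold whose width is far below the $L^p$ error, and no amount of tuning $2k$ changes that.

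The paper's proof never reasons about level sets. It applies Croot--Sisask with $B = D$, obtaining $\|\varphi_x*\varphi_A*1_D - \varphi_A*1_D\|_p \leq \eps$ for all $x \in X$ with $p = \log_2(1/\alpha)$, and telescopes to $\|\varphi_s*\varphi_A*1_D - \varphi_A*1_D\|_p \leq t\eps \leq 1/40$. The target probability $\Pr_{a_1,a_2\in A}[a_1-a_2-s\in D]$ is then written exactly as the inner product $\langle \varphi_s*\varphi_A*1_D, \varphi_A\rangle$ and compared to the $s=0$ case $\langle \varphi_A*1_D, \varphi_A\rangle = \langle 1_D, \varphi_A*\varphi_{-A}\rangle \geq 1 - \delta/\alpha^2 = 0.95$ (this is precisely where \cref{eq:D-approx-A-A} enters) by a single application of H\"older with conjugate exponent $q = p/(p-1)$, using $\|\varphi_A\|_q = \alpha^{-1/p} \leq 2$. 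No threshold is ever crossed, so the scale mismatch never arises, and the very same parameter triple closes the argument with the claimed bound on $|X|$. If you replace $B = -A$ by $B = D$ and swap the Chebyshev step for the H\"older computation, the rest of your outline goes through.
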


Given the set $X$ from \cref{lemma:sanders-step1} we use a standard Fourier-analytic argument
to define a large subspace $V$ such that $\abs{D \cap V'} \geq 0.8\abs{V}$ where
$V'$ is some coset of $V$. In fact, we show that there are many such cosets.
Formally, we prove the following lemma.

\begin{lemma}\label{lemma:sanders-step2}
    Let $A \seq \F^n$ be a set of size $\abs{A} = \alpha \abs{\F}^n$,
    Then, there exists a subspace $V \seq \F^n$ of dimension $\dim(V) = n - O(\log^4(1/\alpha))$
    and a vector $b \in \F^n$ such that if we sample a uniformly random $a \in A$ and $v \in V$ then
    \begin{equation}\label{eq:sanders-step2}
        \Pr_{a \in A, v \in V} [v+a+b \in D] \geq 0.85
        \enspace.
    \end{equation}
\end{lemma}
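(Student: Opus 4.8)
The plan is to combine \cref{lemma:sanders-step1} with a short Fourier computation, extracting the bounded dimension of $V$ from Chang's lemma applied to the large spectrum of the set $X$. Let $X\seq\F^n$ be the set produced by \cref{lemma:sanders-step1}; it has density $\mu(X)=\abs{X}/\abs{\F}^n\geq\alpha^{O(\log^3(1/\alpha))}$, so $\log(1/\mu(X))=O(\log^4(1/\alpha))$. First I would consider the large spectrum $\Lambda=\{r\in\F^n:\abs{\widehat{\varphi_X}(r)}\geq 1/2\}$. By Parseval $\abs{\Lambda}\leq 4/\mu(X)$, but more usefully, by Chang's lemma (see, e.g., \cite{Lovett15}) the set $\Lambda$ spans a subspace $W=\sp(\Lambda)$ of dimension $O(\log(1/\mu(X)))=O(\log^4(1/\alpha))$. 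I then set $V=W^{\perp}=\{v\in\F^n:\ip{v,r}=0\ \forall r\in W\}$, so that $\dim(V)\geq n-O(\log^4(1/\alpha))$ as required, and $V^{\perp}=W$.

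Next I would push the conclusion of \cref{lemma:sanders-step1} to the Fourier side. Since \eqref{eq:sanders-step1} holds for \emph{every} choice of $x_1,\dots,x_t\in X$, it also holds in expectation over i.i.d.\ uniform $x_1,\dots,x_t\in X$; expanding $1_D$ in its Fourier series and using $\E_{x\in X}[\chi_r(-x)]=\widehat{\varphi_X}(r)$ and $\E_{a\in A}[\chi_r(a)]\cdot\E_{a\in A}[\chi_r(-a)]=\abs{\widehat{\varphi_A}(r)}^2$ gives
\begin{equation*}
    \sum_{r\in\F^n}\hat{1}_D(r)\,\abs{\widehat{\varphi_A}(r)}^2\,\widehat{\varphi_X}(r)^t \ \geq\ 0.9\,.
\end{equation*}
I then split the sum according to whether $r\in W$. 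For $r\notin W$ we have $r\notin\Lambda$, hence $\abs{\widehat{\varphi_X}(r)}<1/2$ and $\abs{\widehat{\varphi_X}(r)^t}<2^{-t}$; since $\sum_r\abs{\hat{1}_D(r)}\abs{\widehat{\varphi_A}(r)}^2\leq\norm{1_D}_2\norm{\varphi_A*\varphi_{-A}}_2\leq\alpha^{-1/2}$ by Cauchy--Schwarz and Parseval, taking $t=\Theta(\log(1/\alpha))$ with a large enough constant makes the off-$W$ contribution at most, say, $0.01$, so $\sum_{r\in W}\hat{1}_D(r)\abs{\widehat{\varphi_A}(r)}^2\widehat{\varphi_X}(r)^t\geq 0.89$. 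The reason for truncating precisely to $W=V^{\perp}$ is that $\sum_{r\in W}\hat{1}_D(r)\chi_r(y)=\E_{v\in V}[1_D(y+v)]$, so the last inequality reads
\begin{equation*}
    \E_{x_1,\dots,x_t\in X}\ \E_{a_1,a_2\in A}\ \E_{v\in V}\big[\,1_D\big(a_1-a_2-(x_1+\dots+x_t)+v\big)\,\big] \ \geq\ 0.89\,.
\end{equation*}

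Finally I would fix the randomness by averaging: there exist $x_1^{\star},\dots,x_t^{\star}\in X$ and $a_2^{\star}\in A$ with $\E_{a_1\in A}\E_{v\in V}[1_D(a_1-a_2^{\star}-(x_1^{\star}+\dots+x_t^{\star})+v)]\geq 0.89$; setting $b:=-a_2^{\star}-(x_1^{\star}+\dots+x_t^{\star})\in\F^n$ yields $\Pr_{a\in A,\,v\in V}[a+b+v\in D]\geq 0.89\geq 0.85$, which is \eqref{eq:sanders-step2}. I expect the main obstacle to be the Chang's-lemma step: it is exactly what converts the polynomially large (in $1/\mu(X)$) spectrum $\Lambda$ into a subspace of only polylogarithmic codimension, and one has to track constants so that the $O(\log^3(1/\alpha))$ loss in \cref{lemma:sanders-step1} combines with the $\log(1/\mu(X))$ of Chang's lemma to the advertised $O(\log^4(1/\alpha))$, while simultaneously ensuring that the $t$ fixed in \cref{lemma:sanders-step1} is a large enough multiple of $\log(1/\alpha)$ to annihilate the off-$W$ frequencies above.
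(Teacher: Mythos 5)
Your proof is correct and follows essentially the same route as the paper: you define $V$ as the annihilator of the $\tfrac12$-spectrum of $X$, invoke Chang's lemma to bound its codimension by $O(\log(1/\mu(X)))=O(\log^4(1/\alpha))$, expand $1_D$ in Fourier, and observe that truncating to $V^\perp=\sp(\Spec_{1/2}(X))$ is exactly what introduces the average over $v\in V$, while the off-$V^\perp$ frequencies are killed by $\abs{\widehat{\varphi_X}(r)}^t<2^{-t}$; you then fix $a_2$ and the $x_i$'s by averaging to extract $b$. The only cosmetic difference from the paper's write-up is the Cauchy--Schwarz step controlling the tail: you pair $\hat 1_D$ against $\widehat{\varphi_A*\varphi_{-A}}$ to get $\norm{1_D}_2\norm{\varphi_A*\varphi_{-A}}_2\leq\alpha^{-1/2}$, whereas the paper simply bounds $\abs{\hat 1_D}\leq1$ pointwise and applies Cauchy--Schwarz to $\sum_r\abs{\widehat{\varphi_A}(r)\widehat{\varphi_A}(-r)}\leq 1/\alpha$, giving a tail of $2^{-t}/\alpha$ rather than your $2^{-t}/\sqrt\alpha$; both are absorbed by taking $t=\Theta(\log(1/\alpha))$ with an appropriate constant, so the difference is immaterial.
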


\begin{remark}
    \cref{lemma:sanders-step1} corresponds to Lemma~4.2 in \cite{Lovett15}.
    The only difference is that we claim that the sum belongs to $D$ with high probability,
    while in \cite{Lovett15} the sum belongs to $A-A$. Note that the two statements are indeed
    close to each other by \cref{eq:D-approx-A-A}.
    
    \cref{lemma:sanders-step2} roughly corresponds to the conclusion of the section \emph{``A Fourier-analytic argument''} in \cite{Lovett15}.
\end{remark}

We show next how to conclude the proof of \cref{thm:bogolyubov-quasipoly} from \cref{lemma:sanders-step2}.
Indeed, by \cref{eq:sanders-step2} it follows that
$\E_{a \in A}[\Pr_{v \in V}[v+a+b \in D]] \geq 0.85$,
and hence for at least $0.05 \abs{A}$ many $a \in A$ it holds that
$\Pr_{v \in V}[v+a+b \in D] \geq 0.8$.
In particular, for  $C = \{ c \in \F^n: \abs{D \cap (V+c)} \geq 0.8\abs{V} \}$
we have $\abs{C} \geq 0.05\alpha \abs{\F}^n$.

\begin{claim}
    The set $C \seq \F^n$ is symmetric. Namely, if $c \in C$, then $-c \in C$.
\end{claim}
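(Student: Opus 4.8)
The plan is to exploit two symmetries: that of the set $D$ of popular differences, and that of the subspace $V$. First I would record that the convolution $1_A * 1_{-A}$ is an even function, i.e., $(1_A * 1_{-A})(-d) = (1_A * 1_{-A})(d)$ for every $d \in \F^n$. This is immediate from unwinding the definitions: since $1_{-A}(z) = 1_A(-z)$, we have $(1_A * 1_{-A})(d) = \E_{y \in \F^n}[1_A(y)\,1_{-A}(d-y)] = \E_{y \in \F^n}[1_A(y)\,1_A(y-d)]$, and the substitution $y \mapsto y - d$ turns the last expression into $\E_{y \in \F^n}[1_A(y-d)\,1_A(y)] = (1_A * 1_{-A})(-d)$. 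Consequently the set $D = \{d \in \F^n : 1_A * 1_{-A}(d) \geq \delta\}$ is symmetric: $d \in D$ if and only if $-d \in D$.

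Next I would use the fact that $V$ is a linear subspace, so $-V = V$, and hence for any coset $V + c$ we have $-(V+c) = (-V) + (-c) = V + (-c)$. Combining the two observations, the map $d \mapsto -d$ restricts to a bijection between $D \cap (V+c)$ and $D \cap (V + (-c))$: if $d \in D \cap (V+c)$, then $-d \in D$ by the symmetry of $D$, and $-d \in -(V+c) = V + (-c)$. Therefore $\abs{D \cap (V+c)} = \abs{D \cap (V+(-c))}$ for every $c \in \F^n$. In particular, if $c \in C$, meaning $\abs{D \cap (V+c)} \geq 0.8\abs{V}$, then also $\abs{D \cap (V + (-c))} \geq 0.8\abs{V}$, so $-c \in C$, which is exactly the claim.

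I do not expect a genuine obstacle here: the statement follows directly from the symmetry of the popular-difference set $D$ together with the linearity of $V$. The only point requiring care is bookkeeping, namely correctly using the paper's conventions $1_{-A}(z) = 1_A(-z)$ and $(f*g)(x) = \E_y[f(y)g(x-y)]$ when verifying that $1_A * 1_{-A}$ is even.
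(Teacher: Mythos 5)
Your proof is correct and takes essentially the same route as the paper: both arguments reduce the claim to the symmetry of the popular-difference set $D$ (the pair $(a_1,a_2)$ with $a_1-a_2=d$ corresponds bijectively to the pair $(a_2,a_1)$ with $a_2-a_1=-d$) combined with the fact that $V$, being a subspace, is closed under negation. The only cosmetic difference is that you phrase the symmetry of $D$ via evenness of the convolution $1_A * 1_{-A}$, while the paper unwinds the definition of $D$ in terms of counting representing pairs directly; the underlying bijection is identical.
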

\begin{proof}
    Let $c \in C$ and let $V+c$ be the corresponding coset of $V$.
    We claim that $-c \in C$, i.e.,
    $\abs{\{v \in V : \text{    there are $\geq \delta \abs{\F}^n$ pairs $(a_1,a_2) \in A^2$ such that $a_1-a_2 = v-c$}\}} \geq 0.8\abs{V}$.
    
    For $c \in C$ let $P_c = \{v \in V : \text{there are $\geq \delta \abs{\F}^n$ pairs $(a_1,a_2) \in A^2$ such that $a_1-a_2 = -v+c$}\}$. By definition of $C$, we have $\abs{P_c} \geq 0.8\abs{V}$.

    To see that $-c \in C$ take any $v \in P_C$, and note that $a_1-a_2 = -v+c$ if and only if $a_2-a_1 = v-c$.
    Therefore, for each $v \in P_c$ there are
    $\geq \delta \abs{\F}^n$ pairs $(a_1,a_2) \in A^2$ such that $a_2-a_1 = v-c$, and thus $-c \in C$, as required.
\end{proof}

Let us choose a unique representative $c^*$ for each coset $V+c$ of $V$ such that $\abs{D \cap (V+c)} \geq 0.8\abs{V}$,
and let $C^* = \{ \mbox{$c^*$ is the representative of $V+c$}: \abs{D \cap (V+c)} \geq 0.8\abs{V} \}$.
And furthermore, let us assume without loss of generality that $C^*$ is symmetric, i.e. $c^* \in C^*$ implies that $-c^* \in C^*$.
Then, the union of all these coset covers is at least $0.05\alpha$ fraction of $\F^n$,
i.e.,
\begin{equation}\label{eq:good-cosets-cover}
    \abs{\cup_{c^* \in C^*} (V+c^*)} \geq 0.05\alpha \abs{\F}^n
    \enspace.
\end{equation}

We are now ready to show that
\begin{equation*}
    \Pr_{\substack{a_1,a_2,a_3 \in \F^n \\ a_4 = v-a_1-a_2-a_3}}[a_1, a_2 \in A, a_3, a_4 \in -A] \geq \Omega(\alpha^5)
    \enspace.
\end{equation*}

\begin{proof}[Proof of \cref{thm:bogolyubov-quasipoly}]
    Fix $v \in V$.
    Since $\abs{D \cap (V+c^*)} \geq 0.8\abs{V}$ for every coset $V+c^*$ such that $c^* \in C^*$,
    it follows by the symmetry of $C^*$ that for every $c^* \in C^*$ we have
    at least $0.1 \cdot \abs{V}$ pairs $(u+c^*, v-u-c^*) \in D^2$ such that $u \in V$.
    Therefore, by \cref{eq:good-cosets-cover}
    for every $v \in V$ there are at least $0.1 \times 0.05 \alpha \abs{\F}^n$ different pairs
    $(u+c^*,v-u-c^*)$ such that both $u+c^* \in D \cap (V+c^*)$ and $v-u-c^* \in D \cap (V-c^*)$.

    Letting $d_1 = u+c^*$, so far we got that for every $v \in V$ we have
    $\Pr_{d_1 \in \F^n,d_2 = v - d_1}[d_1,d_2 \in D] \geq \Omega (\alpha)$.

    Recall that by the definition of $D$, every $d_1 \in D$ is a popular difference of elements of $A$,
    i.e. $\Pr_{\substack{a_1 \in \F^n \\ a_3 = d_1-a_1}}[a_1 \in A, a_3 \in -A] \geq \delta$.
    Similarly, for $d_2 \in D$ we have $\Pr_{\substack{a_2 \in \F^n \\ a_4 = d_2-a_2}}[a_2 \in A,a_4 \in -A] \geq \delta$.
    This implies that
    \begin{equation*}
        \Pr_{\substack{a_1,a_2,a_3 \in \F^n \\ a_4 = v-a_1-a_2-a_3}}[a_1, a_2 \in A, a_3, a_4 \in -A] \geq \Omega(\alpha \cdot \delta^2)
        \enspace,
    \end{equation*}
    as required.
\end{proof}

\medskip

We now turn to proving each of the two steps stated in \cref{lemma:sanders-step1} and \cref{lemma:sanders-step2}.

\subsection{Proof of Lemma~\ref{lemma:sanders-step1}}

The proof starts with the following lemma of Croot and Sisask~\cite{CS10}.

\begin{lemma}[Croot-Sisask {\cite[Proposition~3.3]{CS10}}]\label{lemma:croot-sisask}
    Let $A, B \seq \F^n$ be two sets, and let $\eps \in (0,1)$ and $p \geq 1$.
    Let $\alpha = \frac{\abs{A}}{\abs{\F}^n} \in (0,1)$.
    Then, there exists a set $X \seq \F^n$ of size $\abs{X} \geq (\alpha/2)^{O(p/\eps^2)} \abs{\F}^n$
    such that for all $x \in X$ it holds that
    \begin{equation*}
        \norm{\varphi_x * \varphi_A * 1_B - \varphi_A * 1_B}_p \leq \eps \,.
    \end{equation*}
\end{lemma}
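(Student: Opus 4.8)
This is the Croot--Sisask almost-periodicity lemma, and I would prove it by an empirical-sampling argument combined with an $L^p$-moment concentration inequality. Write $f = \varphi_A * 1_B$, so that $f(y) = \E_{a \in A}[1_B(y-a)] \in [0,1]$, and for $z \in \F^n$ write $\tau_z g$ for the translate $y \mapsto g(y-z)$; recall that $\varphi_x * f = \tau_x f$. The plan is to approximate $f$ by a short random average: draw $a_1,\dots,a_k \in A$ independently and uniformly (with $k = O(p/\eps^2)$ to be fixed), set $\mathbf{a} = (a_1,\dots,a_k)$ and $f_{\mathbf{a}}(y) = \frac1k \sum_{i=1}^k 1_B(y-a_i)$. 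The key algebraic identity is $\tau_z f_{\mathbf{a}} = f_{\mathbf{a}+z\mathbf{1}}$, where $\mathbf{a}+z\mathbf{1} = (a_1+z,\dots,a_k+z)$; thus if both $\mathbf{a}$ and $\mathbf{a}+z\mathbf{1}$ approximate $f$ well in $L^p$, then $z$ is an approximate period of $f$, and the whole proof reduces to producing many such $z$.

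\emph{Concentration step.} I would first show that for each fixed $y$, the quantity $f_{\mathbf{a}}(y) - f(y) = \frac1k\sum_i\big(1_B(y-a_i) - f(y)\big)$ is an average of $k$ i.i.d.\ mean-zero random variables bounded by $1$, so by the Marcinkiewicz--Zygmund inequality $\E_{\mathbf{a}}\big[\,|f_{\mathbf{a}}(y)-f(y)|^p\,\big] \le (Cp/k)^{p/2}$ for an absolute constant $C$. Averaging over $y \in \F^n$ gives $\E_{\mathbf{a}}\norm{f_{\mathbf{a}} - f}_p^p \le (Cp/k)^{p/2}$, so taking $k = 16Cp/\eps^2$ makes this at most $(\eps/4)^p$. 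By Markov's inequality, the set of \emph{good} tuples $G = \{\mathbf{a} \in A^k : \norm{f_{\mathbf{a}} - f}_p \le \eps/2\}$ then has $\abs{G} \ge \tfrac12\abs{A}^k$.

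\emph{Extracting many periods.} Next I would parametrize a tuple by its first coordinate $a_1$ together with its difference profile $\mathbf{d}(\mathbf{a}) = (a_2-a_1,\dots,a_k-a_1) \in (\F^n)^{k-1}$; this is a bijection of $(\F^n)^k$ under which the translation $\mathbf{a}\mapsto\mathbf{a}+z\mathbf{1}$ fixes $\mathbf{d}(\mathbf{a})$ and shifts $a_1$ by $z$. Writing $G_{\mathbf{d}} = \{a_1 : \mathbf{a} \in G,\ \mathbf{d}(\mathbf{a}) = \mathbf{d}\}$, we have $\sum_{\mathbf{d}} \abs{G_{\mathbf{d}}} = \abs{G} \ge \tfrac12\abs{A}^k$ while there are at most $\abs{\F}^{n(k-1)}$ profiles, so some $\mathbf{d}_0$ satisfies $\abs{G_{\mathbf{d}_0}} \ge \tfrac12\abs{A}^k / \abs{\F}^{n(k-1)} = \tfrac12\alpha^k\abs{\F}^n$. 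Put $X = G_{\mathbf{d}_0} - G_{\mathbf{d}_0}$ (a symmetric set containing $0$); then $\abs{X} \ge \abs{G_{\mathbf{d}_0}} \ge (\alpha/2)^{O(p/\eps^2)}\abs{\F}^n$. For $x = a' - a$ with $a,a' \in G_{\mathbf{d}_0}$, the tuple $\mathbf{a}$ with first coordinate $a$ and profile $\mathbf{d}_0$ lies in $G$, and so does $\mathbf{a}+x\mathbf{1}$ (first coordinate $a'$, profile $\mathbf{d}_0$), whence
\[
  \norm{\tau_x f - f}_p \le \norm{\tau_x(f - f_{\mathbf{a}})}_p + \norm{\tau_x f_{\mathbf{a}} - f}_p = \norm{f - f_{\mathbf{a}}}_p + \norm{f_{\mathbf{a}+x\mathbf{1}} - f}_p \le \eps,
\]
using translation-invariance of $\norm{\cdot}_p$ under the uniform measure. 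Since $\varphi_x * f = \tau_x f$, this is exactly the claimed estimate for all $x \in X$.

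\emph{Expected obstacle.} The two mildly delicate points are (i) invoking Marcinkiewicz--Zygmund with the correct $p$-dependence $(Cp)^{p/2}$ in the moment bound --- this is precisely what forces $k \propto p/\eps^2$, and hence the $(\alpha/2)^{O(p/\eps^2)}$ size of $X$ --- and (ii) the difference-profile pigeonhole, which is the step that converts ``a constant fraction of the tuples are good'' into ``a large set of translation directions works''. Neither is technically deep; the only bookkeeping care needed is to run the $L^p$ estimate with $\eps/2$ (as above) so that the final loss is $\eps$ rather than $2\eps$.
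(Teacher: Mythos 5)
The paper does not prove this lemma; it simply cites it as Proposition~3.3 of Croot and Sisask~\cite{CS10}, so there is no in-paper argument to compare against. Your reconstruction is the standard (and correct) proof of the almost-periodicity lemma: the random-sampling approximation of $f = \varphi_A * 1_B$ by $f_{\mathbf{a}}$, the Marcinkiewicz--Zygmund $L^p$-moment bound forcing $k = \Theta(p/\eps^2)$, the Markov/pigeonhole step on the ``good'' tuples $G$, and the difference-profile fibering that turns a dense set of tuples into a dense set of translates $X = G_{\mathbf{d}_0} - G_{\mathbf{d}_0}$ with the triangle-inequality finish. All the bookkeeping checks out: $\E_{\mathbf{a}}\norm{f_{\mathbf{a}}-f}_p^p \le (Cp/k)^{p/2} \le (\eps/4)^p$, hence $|G| \ge \tfrac12|A|^k$; the pigeonhole gives $|G_{\mathbf{d}_0}| \ge \tfrac12\alpha^k|\F|^n \ge (\alpha/2)^k|\F|^n$ for $k\ge 1$; and the identities $\varphi_x * g = \tau_x g$ and $\tau_x f_{\mathbf{a}} = f_{\mathbf{a}+x\mathbf{1}}$ together with translation-invariance of the normalized $\norm{\cdot}_p$ give exactly the stated estimate with total loss $\eps$. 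This matches the statement and is the argument Sanders (and the exposition of Lovett~\cite{Lovett15} that the paper follows) attribute to~\cite{CS10}.
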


Let $p = \log_2(1/\alpha)$, $t = \Theta(\log(1/\alpha))$, and $\eps = (1/40t)$.
By applying \cref{lemma:croot-sisask} we obtain a set $X \seq \F^n$ of size $\abs{X} \geq (\alpha/2)^{O(p/\eps^2)} \geq \alpha^{O(\log^3(1/\alpha))} \abs{\F}^n$.
We show below that $X$ satisfies \cref{eq:sanders-step1}.

Fix $x_1,\dots,x_t \in X$, and let $s = \sum_{i=1}^t x_i$.
Note first that by setting $B=D$ in \cref{lemma:croot-sisask} and combining it with triangle inequality
we get that
\begin{equation*}
    \norm{\varphi_{s} * \varphi_A * 1_D - \varphi_A * 1_D}_p \leq t \cdot \eps \leq 1/40
    \enspace,
\end{equation*}
where the last inequality is by the choice of $\eps = 1/40t$.
Let $q = p/(p-1)$, then by the choice of $p = \log_2(1/\alpha)$ we have
\begin{equation*}
    \norm{\varphi_A}_q = \left(\alpha \cdot \frac{1}{\alpha^q}\right)^{1/q} = \left(\frac{1}{\alpha}\right)^{1/p} \leq 2
    \enspace.
\end{equation*}
Then, by applying H\"older's inequality with $q = p/(p-1)$ we get
\begin{equation*}
    \abs{\ip{\varphi_s * \varphi_A * 1_D - \varphi_A * 1_D, \varphi_A}}
    \leq
    \norm{\varphi_s * \varphi_A * 1_D - \varphi_A * 1_D}_p \cdot \norm{\varphi_A}_q \leq 1/20
    \enspace.
\end{equation*}
By combining  the above inequality with \cref{eq:D-approx-A-A} we get
\begin{eqnarray*}
    \Pr_{a_1,a_2 \in A}[a_1 - a_2 - s \in D]
    & = & \ip{\varphi_s * \varphi_A * 1_D, \varphi_A} \\
    & = & \ip{\varphi_A * 1_D, \varphi_A} - \ip{\varphi_A * 1_D - \varphi_s * \varphi_A * 1_D, \varphi_A} \\
    & = & \ip{1_D, \varphi_A * \varphi_{-A}} - \ip{\varphi_A * 1_D - \varphi_s * \varphi_A * 1_D, \varphi_A} \\
    & \geq & (1-\delta/\alpha^2)  - 1/20 \geq 0.9
    \enspace,
\end{eqnarray*}
as required.

\subsection{Proof of Lemma~\ref{lemma:sanders-step2}}

The proof of this step is essentially Section 5 of \cite{Lovett15}.
The only (minor) difference is that we work over $\F_p$ and not over $\F_2$.

Given the set $X \seq \F^n$ from \cref{lemma:sanders-step1} 
of size $\abs{X} \geq \alpha^{O(\log^3(1/\alpha))} \abs{\F}^n$,
we define $\Spec_{\gamma}(X) = \{r \in \F^n : \abs{\widehat{\varphi}_X(r)} \geq \gamma \}$.
Since $\sum_{q \in \F^n} \abs{\widehat{\varphi}_X(r)}^2 = \E_z[\varphi_X(z)^2] = 1/\alpha$,
it follows that $\abs{\Spec_{\gamma}(X)} \leq \frac{1}{\alpha \gamma^2}$.
Chang's lemma provides a non-trivial bound on the dimension of the subspace containing $\Spec_{\gamma}(X)$.

\begin{lemma}[Chang~\cite{Chang02}]\label{lemma:chang}
    Let $X \seq \F^n$ of size $\abs{X} = \beta \cdot \abs{\F}^n$, and let $\gamma>0$. Then
    \begin{equation*}
        \dim(\Spec_\gamma(X)) \leq O \left(\frac{\log(1/\beta)}{\gamma^2} \right)
        \enspace.
    \end{equation*}
\end{lemma}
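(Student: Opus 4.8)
The plan is to reduce the statement to bounding the size of a maximal \emph{dissociated} subset of $\Spec_\gamma(X)$. Recall that $\Lambda \seq \F^n$ is dissociated if the only way to write $\sum_{\lambda \in \Lambda} c_\lambda \lambda = \vec 0$ with each $c_\lambda$ in a fixed symmetric set of residues $\{-(p-1)/2,\dots,(p-1)/2\}$ is the trivial one. Pick a maximal dissociated subset $\Lambda' \seq \Spec_\gamma(X)$ and set $d = \abs{\Lambda'}$. Maximality forces every $\mu \in \Spec_\gamma(X)$ into the $\F_p$-linear span of $\Lambda'$: any new dependence $c_\mu \mu + \sum_{\lambda \in \Lambda'} c_\lambda \lambda = \vec 0$ must have $c_\mu \neq 0$ (else $\Lambda'$ itself would be dependent), and $c_\mu$ is invertible in $\F_p$, so $\mu \in \sp(\Lambda')$. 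Hence $\dim(\Spec_\gamma(X)) \le \dim(\sp(\Lambda')) \le d$, and it suffices to prove $d = O(\log(1/\beta)/\gamma^2)$.

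To bound $d$ I would run the standard duality argument based on Rudin's inequality. Write $f = \varphi_X$, so that $\E[f] = 1$ and, since $f^{k'} = \beta^{-k'} 1_X$, we have $\norm{f}_{L^{k'}} = \beta^{-1/k}$ whenever $1/k + 1/k' = 1$. Consider the test function $g = \sum_{\lambda \in \Lambda'} b_\lambda \chi_\lambda$ with unit-modulus coefficients $b_\lambda = \widehat f(\lambda)/\abs{\widehat f(\lambda)}$ chosen to align phases, so that $\ip{f,g} = \sum_{\lambda \in \Lambda'} \abs{\widehat f(\lambda)} \ge \gamma d$ because $\Lambda' \seq \Spec_\gamma(X)$. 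On the other hand, by H\"older $\ip{f,g} \le \norm{f}_{L^{k'}} \norm{g}_{L^k} = \beta^{-1/k}\norm{g}_{L^k}$, while Rudin's inequality for the dissociated set $\Lambda'$ gives $\norm{g}_{L^k} \le C\sqrt{k}\,(\sum_\lambda \abs{b_\lambda}^2)^{1/2} = C\sqrt{kd}$ for every $k \ge 2$. Combining yields $\gamma d \le C\beta^{-1/k}\sqrt{kd}$, i.e. $d \le C^2 k\,\beta^{-2/k}/\gamma^2$; taking $k = \lceil 2\ln(1/\beta)\rceil$ makes $\beta^{-2/k} = O(1)$ and gives $d = O(\log(1/\beta)/\gamma^2)$. (If $\beta$ is bounded away from $0$ one simply takes $k=2$, where the resulting bound $O(1/\gamma^2)$ already suffices; in our applications $\beta$ is tiny anyway.)

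The only nontrivial input, and the step I would treat most carefully, is Rudin's inequality \emph{over $\F_p^n$}. Over $\F_2^n$ it is the classical moment bound for $\pm1$-combinations of characters indexed by a linearly independent set; over a general prime field one must use the right notion of dissociativity (coefficients in a symmetric residue interval) and check that $\E\abs{g}^k$ still grows like $(C\sqrt k)^k\,\norm{b}_2^k$. This follows by expanding the $k$-th power and noting that, for a dissociated set, the characters surviving the expectation arise only from balanced (paired) tuples, exactly as in the classical argument; I would either reproduce this $\F_p$ version or cite it, after which the rest of the proof above is routine. (Alternatively one can bypass Rudin entirely via the known entropy-based proof of Chang's lemma, which also extends to $\F_p$.)
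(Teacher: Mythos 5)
The paper does not prove this lemma; it is cited directly from Chang~\cite{Chang02} and used as a black box. Your proposed argument is the standard Rudin-inequality proof of Chang's lemma, and the computation is correct: with $f = \varphi_X$ you have $\norm{f}_{L^{k'}} = \beta^{-1/k}$, the phase-aligned test function gives $\ip{f,g} = \sum_{\lambda \in \Lambda'} |\hat f(\lambda)| \geq \gamma d$, and combining with H\"older and Rudin yields $\gamma d \leq C \beta^{-1/k}\sqrt{kd}$, optimized at $k \asymp \log(1/\beta)$. Your spanning step is also right: since you define dissociation with coefficients running over a complete symmetric set of residue representatives of $\F_p$, it reduces to ordinary $\F_p$-linear independence, so a maximal dissociated subset of $\Spec_\gamma(X)$ spans it and $\dim(\Spec_\gamma(X)) \leq d$.

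On the one step you explicitly flag, Rudin's inequality over $\F_p^n$: this is fine, and in the vector-space setting there is a cleaner route than the tuple-counting expansion you sketch. If $\Lambda' = \{\lambda_1,\dots,\lambda_d\}$ is $\F_p$-linearly independent, then for uniformly random $x \in \F_p^n$ the values $\chi_{\lambda_1}(x),\dots,\chi_{\lambda_d}(x)$ are jointly \emph{independent}, each uniform on the $p$-th roots of unity; hence $g = \sum_{\lambda} b_\lambda \chi_\lambda$ is a sum of independent, mean-zero, bounded complex random variables, and $\norm{g}_{L^k} \leq C\sqrt{k}\norm{b}_2$ is just a Khintchine/sub-Gaussian moment bound. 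This avoids the subtleties of the $\{-1,0,1\}$-coefficient version of dissociativity that arises over general abelian groups, where the characters are only approximately independent and one must do the balanced-tuple count. One cosmetic remark: as written the bound $O(\log(1/\beta)/\gamma^2)$ vanishes as $\beta \to 1$, whereas your $k=2$ fallback only gives $O(1/\gamma^2)$; reading the $O(\cdot)$ as $O\big((1 + \log(1/\beta))/\gamma^2\big)$ resolves this, and in the paper's application $\beta = \alpha^{\Theta(\log^3(1/\alpha))}$ is minuscule so the point is moot.
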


Define the subspace $V = \Spec_{1/2}(X)^\perp = \{v \in \F^n : \ip{v,r} = 0 \ \forall r \in \Spec_{1/2}(X)\}$.
\cref{lemma:chang} implies that $\dim(V) \geq n - O(\log^4(\abs{\F}^n/\abs{X})) \geq n - O(\log^4(1/\alpha))$.

We now show that $V$ indeed satisfies the guarantee of \cref{lemma:sanders-step2}.
For $x_1,x_2,\dots,x_t \in X$ let $s= \sum_{i=1}^t x_i$ as in the previous lemma.
By \cref{lemma:sanders-step1} for all $x_1,x_2,\dots,x_t \in X$ we have
\begin{equation*}\label{eq:2A+tX-in-D}
    \Pr_{a_1,a_2 \in A}[a_1 - a_2 - s \in D] \geq 0.9
    \enspace.
\end{equation*}
Next, we are comparing this probability to the following.
\begin{equation}\label{eq:V+2A+tX-in-D}
    \Pr_{\substack{v \in V \\ a_1,a_2 \in A}}[v + a_1 - a_2 - s\in D]
    \enspace.
\end{equation}
We claim that if we sample $v \in V$, $a_1,a_2 \in A$, and $x_1,\dots,x_t \in X$  uniformly at random
(and let $s = \sum_{i=1}^t x_i$), then the two quantities are close to each other. 
We prove this by rewriting the two probabilities using the Fourier expansion.
Note that the Fourier coefficients of $\varphi_V$ are simple to describe since $V$ is a linear subspace, and they are equal to $\widehat{\varphi_V}(r) = 1$ if $r \in V^\perp$ and $\widehat{\varphi_V}(r) = 0$ otherwise. Therefore, 
\begin{eqnarray*}
    \Pr_{\substack{v \in V \\ x_1,\dots, x_t \in X \\ a_1,a_2 \in A}}[v + a_1 - a_2 - s\in D]
    & = & \ip{\varphi_V *\varphi_A * \varphi_{-A} * \varphi_{-X}^{(t)} ,\1_D} \\
    & = & \sum_{r \in \F^n} \widehat{\varphi_V}(r) \cdot \widehat{\varphi_A}(r) \cdot \widehat{\varphi_{A}}(-r) \cdot \widehat{\varphi_X}^t(-r) \cdot  \overline{\widehat{\1_D}(r)} \\
    & = & \sum_{r \in V^\perp} \widehat{\varphi_A}(r) \cdot \widehat{\varphi_{A}}(-r) \cdot \widehat{\varphi_X}^t(-r) \cdot  \overline{\widehat{\1_D}(r)}
    \enspace.
\end{eqnarray*}
On the other hand
\begin{equation*}
    \Pr_{\substack{x_1,\dots, x_t \in X \\ a_1,a_2 \in A}}[a_1 - a_2 - s \in D]
    = \ip{\varphi_A * \varphi_{-A} * \varphi_{-X}^{(t)} ,\1}
= \sum_{r \in \F^n} \widehat{\varphi_A}(r)  \cdot \widehat{\varphi_A}(-r)  \cdot \widehat{\varphi_X}^t(-r) \cdot \overline{\widehat{\1_D}(r)}
    \enspace.
\end{equation*}
This implies
\begin{eqnarray*}
    \abs{\Pr_{\substack{v \in V \\ x_1,\dots, x_t \in X \\ a_1,a_2 \in A}}[v + a_1 - a_2 - s\in D]
    - \Pr_{\substack{x_1,\dots, x_t \in X \\ a_1,a_2 \in A}}[a_1 - a_2 - s \in D]}
    & = &
    \sum_{r \not\in V^\perp} \widehat{\varphi_A}(r) \cdot \widehat{\varphi_A}(-r) \cdot \widehat{\varphi_X}^t(r) \cdot \overline{\widehat{\1_D}(r)}  \\
    & \leq &
    \sum_{r \not\in V^\perp} \abs{ \widehat{\varphi_A}(r) \cdot \widehat{\varphi_A}(-r)  \cdot 2^{-t} \cdot \overline{\widehat{\1_D}(r)} } \\
    & \leq &
    2^{-t} \sum_{r \in \F^n} \abs{ \widehat{\varphi_A}(r)  \cdot \widehat{\varphi_A}(-r) } \\
    \mbox{By Cauchy-Schwarz}& \leq &
    2^{-t} \sum_{r \in \F^n} \abs{ \widehat{\varphi_A}^2(r)} \\
    & = & 2^{-t} \cdot \E_{x \in \F^n}[\varphi_A^2(x)] \\
    & = & \frac{1}{\alpha \cdot 2^t} < 0.05
    \enspace,
\end{eqnarray*}
where the last inequality holds due to the choice of $t = O(\log(1/\alpha))$. Therefore, 
\begin{equation*}
    \Pr_{\substack{v \in V \\ x_1,\dots, x_t \in X \\ a_1,a_2 \in A}}[v + a_1 - a_2 - s\in D]
    \geq \Pr_{\substack{x_1,\dots, x_t \in X \\ a_1,a_2 \in A}}[a_1 - a_2 - s \in D] - 0.05 \geq 0.85
    \enspace.
\end{equation*}
Finally, we can fix $a_2 + s$ maximizing the probability, and let $b = -a_2-s$ to conclude the proof of \cref{lemma:sanders-step2}.
\subsection{\texorpdfstring{Algorithmic construction of the subspace $V$}{Algorithmic construction of the subspace~V}}
Given a set $A$ we can construct $V$ using the algorithm described in \cite{BenSassonRZTW14}.
Indeed, the only ingredients we need for our construction are the set $X$ from \cref{lemma:sanders-step1} and the subspace $V$ guaranteed by \cref{lemma:sanders-step2}.

A straightforward inspection of the algorithm described in \cite{BenSassonRZTW14} gives the desired result.
Informally, the algorithm works as follows:
The set $X$ is defined as the set of all $x \in X$ such that
\begin{equation*}
    \norm{\varphi_x * \varphi_A * 1_D - \varphi_A * 1_D}_p \leq \eps
    \enspace.
\end{equation*}
Note that given $x \in \F^n$ we can estimate the norm efficiently (up to a small error).
This gives us a membership oracle to the set $X$.

Given such a query oracle, we can use Goldreich-Levin algorithm over $\F$ to compute $R = \Spec_{1/2}(X)$~\cite{Akavia08}, or more precisely its superset that is not too large, and using it we define the subspace $V = \{v \in \F^n : \ip{v,r} = 0 \; \forall r \in R\}$.
\end{document}